\documentclass{lmcs} %%% last changed 2014-08-20
\pdfoutput=1

\usepackage[utf8]{inputenc}

% LMCS Layouting Macros
\usepackage{lastpage}
\lmcsdoi{17}{2}{18}
\lmcsheading{}{\pageref{LastPage}}{}{}%
{Feb.~12,~2020}{May~26,~2021}{}

%% mandatory lists of keywords
\keywords{first-order logic, definability, decidability, reals with addition and order}

%% read in additional TeX-packages or personal macros here:
%% e.g. \usepackage{tikz}

\usepackage{amsmath, amscd, amssymb, amsthm, latexsym}

%%\input{myMacros.tex}
%% define non-standard environments BEYOND the ones already supplied
%% here, for example

\newcommand{\N}{\mathbb{N}}
\newcommand{\Q}{\mathbb{Q}}

\newcommand{\R}{\mathbb{R}}

\newcommand{\Z}{\mathbb{Z}}

\newcommand{\sing}{\texttt{Sing}}

\newcommand{\nl}{\medskip\noindent}

\newcommand \str [1] {$\langle  {#1} \rangle$}
\newcommand \Str [1] {\text{Str}{#1}}
\newcommand \Strem [1] {\text{\em Str}{#1}}
\newcommand\Ls{$\langle  \R, +,{<} ,\Z\rangle$} %Large struture
\newcommand\Ss{$\langle  \R, +,{<},1 \rangle$}      %Small struture

\newcommand\Ns{$\langle  \N, + \rangle$} %Large struture

%% Do NOT replace the proclamation environments lready provided by
%% your own.

%% due to the dependence on amsart.cls, \begin{document} has to occur
%% BEFORE the title and author information:

\begin{document}

\title[Theories of real addition with and without a predicate for integers]{Theories of real addition with and without a predicate for integers}

\author[A.B\`{e}s]{Alexis B\`{e}s\rsuper{a}}	%required
\address{\lsuper{a}Univ. Paris Est Creteil, LACL, F-94010 Creteil, France}	%required
\email{bes@u-pec.fr}  %optional
%\thanks{thanks 1, optional.}	%optional

\author[C.Choffrut]{Christian Choffrut\rsuper{b}}	%optional
\address{\lsuper{b}IRIF (UMR 8243), CNRS and Universit\'e Paris 7 Denis Diderot, 
	France}	%optional
\email{cc@irif.fr}

%% etc.

%% required for running head on odd and even pages, use suitable
%% abbreviations in case of long titles and many authors:

%%%%%%%%%%%%%%%%%%%%%%%%%%%%%%%%%%%%%%%%%%%%%%%%%%%%%%%%%%%%%%%%%%%%%%%%%%%

%% the abstract has to PRECEDE the command \maketitle:
%% be sure not to issue the \maketitle command twice!

\begin{abstract}
	\noindent We show that it is decidable whether or not a relation on the reals definable in the structure $\langle  \mathbb{R}, +,<, \mathbb{Z} \rangle$
	can be defined in the structure $\langle  \mathbb{R}, +,<, 1 \rangle$. This result is achieved by obtaining a topological characterization of 
	$\langle  \mathbb{R}, +,<, 1 \rangle$-definable relations in the family of $\langle  \mathbb{R}, +,<, \mathbb{Z} \rangle$-definable relations  and then by following 
	Muchnik's approach of showing that the characterization of the relation $X$ can be expressed in the logic of 
	$\langle  \mathbb{R}, +,<,1, X \rangle$.
		
	The above characterization allows us to prove that there is no intermediate structure between $\langle  \mathbb{R}, +,<, \mathbb{Z} \rangle$ and $\langle  \mathbb{R}, +,<, 1 \rangle$. We also show that a $\langle  \mathbb{R}, +,<, \mathbb{Z} \rangle$-definable relation
	is $\langle  \mathbb{R}, +,<, 1 \rangle$-definable if and only if its intersection with every $\langle  \mathbb{R}, +,<, 1 \rangle$-definable line 
	is $\langle  \mathbb{R}, +,<, 1 \rangle$-definable. This gives a noneffective but simple characterization of $\langle  \mathbb{R}, +,<, 1 \rangle$-definable relations.
\end{abstract}

\maketitle

%%%%%%%%%%%%%%%%%%%%%%%%%%%%%%
%%%%%%%%%%%%%%%%%%%%%%%%%%%%%%
\section{Introduction}
\label{sec:introduction}
%%%%%%%%%%%%%%%%%%%%%%%%%%%%%%
%%%%%%%%%%%%%%%%%%%%%%%%%%%%%%

Consider the structure  \Ss\     of the additive ordered group of reals along with the 
constant $1$. It is well-known that the subgroup $\Z$ of integers  is not first-order-definable in this structure.
Add the predicate $x\in \Z$ resulting in the structure \Ls. 
Our main result shows that given a \Ls-definable relation it is decidable whether or not it is \Ss-definable.

The structure \Ls\  is a privileged area of application of algorithmic verification of properties of reactive and hybrid systems,  where logical formalisms involving reals and arithmetic naturally appear, see e.g \cite{BDEK07,FQSW20,LASH}.  It 
admits quantifier elimination and is decidable as proved independently by 
Miller \cite{Mil01} and Weisfpfenning \cite{Weis99}.  The latter's proof uses reduction to the theories of \str{\Z,+,<} and \Ss. 

There are many ways to come across the structure \Ls,  which highlights its significance. One approach is through automata. Cobham considers a fixed base $r$ and represents integers 
as finite  $r$-digit strings.  A subset $X$ of integers is $r-$recognizable
if there exists a  finite automaton accepting precisely the representations in base $r$ of its elements. 
Cobham's theorem  says that if  $X$ is $r$- and $s$-recognizable for two  multiplicatively
independent values $r$ and $s$ (i.e., for all $i,j>0$ it holds $r^{i}\not=s^{j}$)
then $X$ is definable in Presburger arithmetic, i.e., in \Ns\  \cite{cobham,Pre29}. Conversely,   each  Presburger-definable 
subset of $\N$ is $r$-recognizable for every $r$. This result was extended 
to integer relations of arbitrary arity by Sem\"enov \cite{semenov}.

Now consider  recognizability of sets of reals. As early as  1962 B\"{u}chi interprets 
subsets of integers as characteric functions of reals in their binary representations and
shows the decidability of a structure which is essentially an extension of \Ls, namely 
\str{\R_{+}, <, P, \N}  where $P$ if the set of positive powers of $2$ 
\cite[Thm 4]{Buc62}. 
Going one step further, Boigelot \& al. \cite{BRW1998} consider   reals as infinite strings of digits and use Muller automata to speak of 
$r$-recognizable subsets and more generally of $r$-recognizable relations of reals.  
In the papers \cite{BB2009,BBB2010,BBL09}  the  equivalence was proved  between 
(1)  \Ls-definability, (2) $r$- and $s$-recognizability where the  two bases have distinct primes in their factorization \cite[Thm  5]{BBL09} and (3)
$r$- and $s$-weakly recognizability for two independently multiplicative bases, \cite[Thm  6]{BBL09} (a 
relation is  $r$-weakly recognizable if it is   recognized by some  deterministic Muller automaton in which all states in the same strongly connected component are either final or nonfinal).  Consequently, as far as reals are concerned, definability in \Ls\ compared to recognizability or weak recognizability by automata on infinite strings can be seen as the analog of Presburger arithmetic for integers  
compared to recognizability by automata on finite strings.

A natural issue is to find effective characterizations of subclasses of $r-$reco\-gnizable relations. In the case of relations over integers, Muchnik proved that for every base $r \geq 2$ and  arity $k \geq 1$, it is decidable whether a $r$-recognizable relation $X \subseteq \N^{k}$ is Presburger-definable \cite{Muchnik03} (see a different approach in  \cite{Ler05} which provides a polynomial time algorithm). For relations over reals, up to our knowledge, the only known result is due to Milchior who proved  that it is decidable (in linear time) whether a weakly $r-$recognizable subset of $\R$ is definable in \Ss\  \cite{Milchior17}. Our result provides an effective characterization of \Ss-definable relations within \Ls-definable relations. Our approach is inspired by Muchnik's one, which  consists of 
giving a combinatorical characterization of \Ns-definable relations that can be expressed in \Ns\ itself.

Our result has two interesting corollaries. The first one is that there is no intermediate structure between \Ls\ and \Ss, i.e., if an \Ls-definable relation $X$ is not   \Ss-definable, then $\Z$ is definable in the structure \str{\R,+,<,1,X}.
Along with the property
that \Ls\ is the ``common'' substructure of all recognizable and weakly recognizable relations 
this indicates that this structure is central. The second corollary is a noneffective but simple characterization:
an \Ls-definable relation is \Ss-definable if and only if every intersection with a rational line is \Ss-definable. By
rational we mean any line which is the intersection of hyperplanes defined by equations with rational coefficients.

The reader will be able to observe   that, while our results are obviously related to automata questions, proofs do not use automata at all.

\subsection*{Other related works.} 
Muchnik's approach, namely expressing in the theory of the structure a property of the structure itself, can be used in other settings.
We refer the interested reader to the discussion in  \cite[Section 4.6]{SSV14} and also to \cite{PW00,Bes13,Milchior17} for examples of such structures. 
A similar method was already used in 1966, see \cite[Thm 2.2.]{GS66} where the authors were able to express in Presburger theory whether or not 
a Presburger subset is the Parikh image of a  context-free language.

Let us mention a recent series of results by Hieronymi which deal with expansions of \Ls, and in particular with the frontier of decidability for such expansions, see, e.g., \cite{Hie19} and its bibliography.
Finally, in connection with our result  that there is no intermediate structure between \Ss\ and \Ls, Conant recently proved \cite{Con18} that there is no intermediate structure between \str{\Z,+} and \str{\Z,+,<}. Concerning these two theories, it is decidable whether or not a \str{\Z,+,<}-definable relation is actually \str{\Z,+}-definable, see  \cite{ChoFri}.

Now we give a short outline of our paper. Section \ref{sec:prelim} gathers all the basic on the two specific structures \Ls\ and \Ss, taking advantage of the existence of quantifier elimination which allows us to work with simpler formulas. Section \ref{sec:strata} introduces topological notions. In particular we say that the neighborhood of a point $x\in \R^{n}$
relative to a relation $X\subseteq \R^{n}$   has \emph{strata} if there exists a direction such that the intersection of
all sufficiently small neighborhoods 
around $x$ with $X$ is the trace of a union of lines parallel to the given direction.
This reflects the fact that the relations we work with are defined by finite unions of regions of the spaces delimited by 
hyperplanes of arbitrary dimension. In Section \ref{sec:local-properties} 
we show that when $X$ is \Ss-definable all points (except finitely many which we call singular)
have at least one direction which is a stratum. Section \ref{sec:neighborhoods} studies relations between neighborhoods. 
In Section \ref{sec:caract-effectif} we give a necessary and sufficient condition for a \Ls-definable relation to be 
\Ss-definable, namely 1) it has finitely many singular points and  
2) all intersections of $X$ with arbitrary hyperplanes parallel to $n-1$ axes and having rational components on the remaining axis
are \Ss-definable. Then we show that these properties are expressible in \str{\R,+,<,1,X}. 
In Section \ref{sec:no-intermediate-structure} 
we  show that there is no intermediate structure between \Ls\ and \Ss.
Section \ref{sec:combinatorial-characterization} is devoted to the proof that a \Ls-definable relation is \Ss-definable if and only if every intersection with a rational line is \Ss-definable.

%%%%%%%%%%%%%%%%%%%%%%%%%%%%%%
%%%%%%%%%%%%%%%%%%%%%%%%%%%%%%
\section{Preliminaries}
\label{sec:prelim}
%%%%%%%%%%%%%%%%%%%%%%%%%%%%%%
%%%%%%%%%%%%%%%%%%%%%%%%%%%%%%

Throughout this work we assume the vector space $\R^{n}$ is provided with the    metric
$L_{\infty}$ (i.e., $|x|=\max_{1\leq i\leq n} |x_{i}|$). The open ball centered at $x\in \R^{n}$ and of radius $r>0$ is denoted by 
$B(x,r)$. Given $x,y \in \R^n$ we let $[x,y]$ (resp. $(x,y)$) denote the closed segment (resp. open segment) with extremities $x,y$. 
Also we use  notations such as $[x,y)$ or $(x,y]$ for half-open segments.

Let us specify our logical conventions and notations.  We work within first-order predicate calculus with equality.  We confuse formal symbols and their interpretations, except in subsection  \ref{ss:decidability} 
where the distinction is needed. 
We are mainly concerned with the structures  \Ss\ and  \Ls. In the latter structure, $\Z$ should be understood as a unary predicate which is satisfied  by reals belonging to  $\Z$ only - in other words, we deal  with one-sorted structures. Given a structure $\mathcal{M}$ with domain $D$ and $X \subseteq D^n$, we say that $X$ is {\em definable in $\mathcal{M}$}, or {\em $\mathcal{M}$-definable}, if there exists a formula $\varphi(x_1,\dots,x_n)$ in the signature of $\mathcal{M}$ such that $\varphi(a_1,\dots,a_n)$ holds in $\mathcal{M}$ if and only if $(a_1,\dots,a_n) \in X$. 

\medskip

The \Ss-theory  admits quantifier elimination in the following sense, which can be interpreted geometrically
as saying that a \Ss-definable relation is a finite union of closed and open polyhedra.

\begin{thmC}[{\cite[Thm.~1]{FR75}}]
	\label{th:quantifier-elimination-for-R-plus}
	Every formula in  \Ss\  is equivalent to a finite Boo\-lean combination  of inequalities  between linear combinations  of variables with coefficients in $\Z$ (or, equivalently, in $\Q$).
\end{thmC}

\begin{cor}\label{cor:basicSsFacts} \hfill
	\begin{enumerate}
		\item A subset of $\R$ is \Ss-definable if and only if it is a finite union of intervals whose endpoints are rational
		numbers. In particular $\Z$ is not \Ss-definable. 	
		\item For every $n \geq 1$, if $X \subseteq \R^n$ is nonempty and \Ss-definable then $X$ contains an element of $\Q^n$.
	\end{enumerate}
	\end{cor}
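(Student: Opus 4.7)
\textbf{Plan for the proof of Corollary \ref{cor:basicSsFacts}.} The strategy for both parts is to exploit Theorem \ref{th:quantifier-elimination-for-R-plus} and reduce everything to elementary facts about rational polyhedra.

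For part (1), I would apply quantifier elimination to any \Ss-formula $\varphi(x)$ in one variable. Each atomic subformula then takes the shape $cx \star d$ with $c,d \in \Z$ and $\star \in \{<,\leq,=,\neq\}$, and therefore defines either $\emptyset$, $\R$, a single rational point, or a rational half-line. A finite Boolean combination of such sets is a finite union of intervals whose endpoints lie in $\Q \cup \{-\infty,+\infty\}$. Conversely, every such union is clearly \Ss-definable, since an interval with rational endpoints $p/q$ and $r/s$ is cut out by two inequalities of the form $qx < p$, $qx \leq p$ etc. This proves the equivalence. To conclude that $\Z$ is not \Ss-definable, note that $\Z$ has infinitely many connected components, hence cannot coincide with a finite union of intervals.

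For part (2), I would argue by induction on $n$. The base case $n=1$ is immediate from (1): a nonempty finite union of rational intervals either contains an included rational endpoint or a nondegenerate interval, and in the latter case contains a rational point by density. For the inductive step, let $\varphi(x_1,\dots,x_n)$ define $X$ and consider the projection
\[
	\pi_1(X) = \{x_1 \in \R : \exists x_2 \cdots \exists x_n \; \varphi(x_1,\dots,x_n)\}.
\]
By Theorem \ref{th:quantifier-elimination-for-R-plus}, $\pi_1(X)$ is \Ss-definable, and it is nonempty because $X$ is; applying (1) yields a rational $q_1 \in \pi_1(X)$. The fiber
\[
	X_{q_1} = \{(x_2,\dots,x_n) \in \R^{n-1} : (q_1,x_2,\dots,x_n) \in X\}
\]
is then nonempty, and the induction hypothesis applied to $X_{q_1}$ supplies a point $(q_2,\dots,q_n) \in \Q^{n-1} \cap X_{q_1}$, whence $(q_1,\dots,q_n) \in \Q^n \cap X$.

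The only point requiring a brief check---not really an obstacle---is that $X_{q_1}$ is indeed \Ss-definable. Writing $q_1 = p/q$ with $p \in \Z$, $q \in \N \setminus \{0\}$, one can define $X_{q_1}$ by the \Ss-formula $\exists y\; (qy = p \cdot 1 \wedge \varphi(y,x_2,\dots,x_n))$, or equivalently substitute $x_1 = p/q$ directly into $\varphi$ and clear denominators, producing again a Boolean combination of linear inequalities with integer coefficients. This closes the induction and finishes the proof.
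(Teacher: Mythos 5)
Your proof is correct and follows essentially the same route as the paper: part (1) by unwinding quantifier elimination in one variable, and part (2) by induction on $n$, projecting onto one coordinate, extracting a rational value there, and applying the induction hypothesis to the corresponding fiber (the paper fixes the last coordinate rather than the first, which is immaterial). Your extra check that the fiber is \Ss-definable is a detail the paper leaves implicit, but it matches the intended argument.
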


\begin{proof}
	$(1)$ is an immediate consequence of Theorem \ref{th:quantifier-elimination-for-R-plus}. For $(2)$ we proceed by induction over $n$. The case $n=1$ follows from $(1)$. For $n > 1$, consider the set $Y=\{x_n \ | \ \exists x_1,\dots,x_{n-1} \  (x_1,\dots,x_n) \in X\}$. The set $Y$ is nonempty  and \Ss-definable by our hypothesis on $X$, thus by the base case of the induction $Y$ contains a rational $q$. Then it suffices to apply the induction hypothesis to the $(n-1)-$ary relation $\{(x_1,\dots,x_{n-1}) \ | \ (x_1,\dots,x_{n-1},q) \in X\}$.
\end{proof}

In the larger structure \Ls\
it is possible to separate the integer (superscript `$I$') and fractional  
(superscript `$F$') parts of the reals as follows.

\begin{thmC}[{\cite{BFL08},\cite[p.~7]{BBL09}}]
	\label{th:separation-integer-fractional}
	Let $X\subseteq \R^{n}$ be definable in \Ls. Then there exists a unique finite union
	\begin{equation}
	\label{eq:integer-fractional}
	X=\bigcup^{K}_{k=1} (X_{k}^{(I)} + X_{k}^{(F)})
	\end{equation}
	where 
	\begin{itemize}
		\item the relations $X_{k}^{(I)}$ are pairwise disjoint subsets of $\Z^n$ and are \str{\Z,+,<}-definable
		\item the  relations   $X_{k}^{(F)}$ are distinct subsets of $[0,1)^{n}$ and are \Ss-definable
	\end{itemize}
\end{thmC}

There is  again a geometric interpretation of \Ls-definable relations as a 
regular (in a precise technical way) tiling of the space by a finite number of tiles which are themselves 
finite unions of polyhedra. As  a consequence, the restriction of a \Ls-definable relation to a bounded subset 
is \Ss-definable, as stated in the following lemma.

\begin{lem}
	\label{le:restriction-to-bounded-domain}
	For every \Ls-definable relation
	$X \subseteq \R^n $, its restriction to a bounded domain $[a_{1},b_{1}]\times \cdots \times [a_{n} ,b_{n}]$ 
	where the $a_{i}$'s and the $b_{i}$'s are rationals, is  \Ss-definable. 
\end{lem}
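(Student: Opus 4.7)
The plan is to invoke the decomposition of Theorem~\ref{th:separation-integer-fractional} and then observe that only finitely many integer translates can contribute to the restriction, reducing everything to a finite Boolean combination of \Ss-definable pieces.

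First I would apply Theorem~\ref{th:separation-integer-fractional} to write
$X = \bigcup_{k=1}^{K} (X_{k}^{(I)} + X_{k}^{(F)})$
with $X_{k}^{(I)} \subseteq \Z^{n}$ and $X_{k}^{(F)} \subseteq [0,1)^{n}$ where each $X_{k}^{(F)}$ is already \Ss-definable. The difficulty of course is that $X_{k}^{(I)}$ need not be \Ss-definable (it is only $\langle \Z,+,<\rangle$-definable), so we cannot simply quantify it away.

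Next I would use the fact that $X_{k}^{(F)} \subseteq [0,1)^{n}$ to bound the relevant integer parts. If a point of $X_{k}^{(I)} + X_{k}^{(F)}$ lies in the box $B = [a_{1},b_{1}]\times\cdots\times[a_{n},b_{n}]$, then the integer part $z \in X_{k}^{(I)}$ must satisfy $z_{i} \in [\lfloor a_{i}\rfloor - 1,\, \lceil b_{i}\rceil]$ for each $i$. Since $B$ is bounded and its endpoints are rational, there are only finitely many such $z$ in $\Z^{n}$; call the subset of $X_{k}^{(I)}$ lying in this finite range $Z_{k}$. Then
\[
X \cap B \;=\; \bigcup_{k=1}^{K} \bigcup_{z \in Z_{k}} \bigl( (z + X_{k}^{(F)}) \cap B \bigr),
\]
which is a finite union.

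Finally I would observe that each piece in this union is \Ss-definable. Any specific integer $m \in \Z$ is \Ss-definable as a constant (using the symbol $1$ and addition or the formula $x = \underbrace{1+\cdots+1}_{m}$, with an analogous trick for negatives), so each translation vector $z \in Z_{k}$ gives a translate $z + X_{k}^{(F)}$ which is \Ss-definable by the formula $\varphi_{k}(x_{1}-z_{1},\dots,x_{n}-z_{n})$, where $\varphi_{k}$ defines $X_{k}^{(F)}$. The box $B$ is \Ss-definable because its endpoints are rational and rationals are \Ss-definable (using $qx = p$). Intersection and finite union preserve \Ss-definability, so the result follows. The main ``obstacle'' is really just the bookkeeping of step two: the proof hinges on the crucial fact that $X_{k}^{(F)}$ fits into the unit cube, which bounds the integer parts that can contribute to any bounded region.
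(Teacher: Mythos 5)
Your proposal is correct and follows essentially the same route as the paper: decompose via Theorem~\ref{th:separation-integer-fractional}, note that only finitely many integer translates (those within a bounded integer range determined by the box, since $X_{k}^{(F)}\subseteq[0,1)^{n}$) can meet the bounded domain, and conclude by \Ss-definability of rational constants, integer points, translates, finite unions and intersections. The paper merely packages the finitely many contributing integer parts as $T_{k}=X_{k}^{(I)}\cap([A_{1},B_{1}]\times\cdots\times[A_{n},B_{n}])$ for an enclosing integer box rather than enumerating the translates individually, which is the same idea.
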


\begin{proof}
	By Theorem \ref{th:separation-integer-fractional} the relation $X$ is a finite union of the form  $\displaystyle \bigcup^{K}_{k=1} (X_{k}^{(I)} + X_{k}^{(F)})$
	where each $X_{k}^{(I)}\subseteq \Z^n$ 
	is  \str{\Z, <,+}-definable and each 
	$X_{k}^{(F)}\subseteq [0,1)^n$ is \Ss-definable. 
	
	Let $A_{k}, B_{k}\in \Z$ be such that 
	\[ 
	[a_{1},b_{1}]\times \cdots \times [a_{n} ,b_{n}]
	\subseteq [A_{1},B_{1}]\times \cdots \times [A_{n} , B_{n}].
	\] 
	The $a_{i}$'s and the $b_{i}$'s are rational thus  \Ss-definable, and 
	the relation $[a_{1},b_{1}]\times \cdots \times [a_{n} ,b_{n}]$ is \Ss-definable as well.
	Now the finite subset 
	\[ 
	T_{k}=X_{k}^{(I)} \cap([A_{1},B_{1}]\times \cdots \times [A_{n} , B_{n}]) \subseteq \Z\times \cdots \times \Z
	\]
	is \Ss-definable, therefore so is the sum $T_{k}+ X_{k}^{(F)}$, and
	also the finite union $S=\displaystyle \bigcup^{K}_{k=1} (T_{k} + X_{k}^{(F)})$. Finally 
	the restriction $X\cap ([a_{1},b_{1}]\times \cdots \times [a_{n} ,b_{n}])
	=S \cap  ([a_{1},b_{1}]\times \cdots \times [a_{n} ,b_{n}])$ is also \Ss-definable.
	\end{proof}

By considering the restriction of the \Ls-relation to a ball containing all possible tiles with their closest neighbors, 
we get  that the neighborhoods of \Ls- and \Ss-definable relations are indistinguishable.

\begin{lem}
	\label{le:Ss-Ls-neighborhoods}
	For every \Ls-definable relation $X \subseteq \R^n$    there exists a  \Ss-definable relation $Y \subseteq \R^n$ 
	such that for all $x\in \R^{n}$ there exists $y\in \R^{n}$ 
	and a real $r>0$ such that the translation $u\mapsto u+y-x$ is a one-to-one mapping
	between $B(x,r)\cap X $ and  $B(y,r)\cap Y$.

\end{lem}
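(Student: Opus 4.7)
The plan is to exploit Theorem~\ref{th:separation-integer-fractional} and decompose $X = \bigcup_{k=1}^K (X_k^{(I)} + X_k^{(F)})$ with $X_k^{(I)} \subseteq \Z^n$ Presburger-definable and $X_k^{(F)} \subseteq [0,1)^n$ \Ss-definable. The key observation is that the restriction of $X$ to a sufficiently small ball around any point $x \in \R^n$ is entirely determined by the fixed fractional tiles $X_k^{(F)}$ together with a finite amount of discrete data: for each $\delta \in \{0,1\}^n$ and each $k$, whether $\lfloor x \rfloor - \delta \in X_k^{(I)}$. Accordingly, I would define the \emph{local type} of $m \in \Z^n$ to be the function $T_m \colon \{0,1\}^n \to \mathcal{P}(\{1,\dots,K\})$ given by $T_m(\delta) = \{k : m - \delta \in X_k^{(I)}\}$. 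Only finitely many such types exist, and for every type $T$ that is actually realised, the set $\{m \in \Z^n : T_m = T\}$ is a Boolean combination of integer translates of the $X_k^{(I)}$, hence \str{\Z,+,<}-definable; being nonempty by assumption, it contains a representative $m_T$.

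Next, pick a positive integer $N$ with $m_T \in [-N,N]^n$ for every realised type $T$ and set $Y = X \cap [-N-1,N+1]^n$; Lemma~\ref{le:restriction-to-bounded-domain} ensures that $Y$ is \Ss-definable. Given $x \in \R^n$, I would write $x = m + f$ with $m = \lfloor x \rfloor$ componentwise and $f \in [0,1)^n$, let $y = m_{T_m} + f$, and choose $r > 0$ small enough that $r < 1$ and $r < \min(f_i, 1-f_i)$ whenever $f_i \in (0,1)$. Then $B(y,r) \subseteq [-N-1,N+1]^n$, so $B(y,r) \cap Y = B(y,r) \cap X$, and the lemma reduces to showing that the translation $u \mapsto u + (y-x) = u + (m_{T_m} - m)$ is a bijection between $B(x,r) \cap X$ and $B(y,r) \cap X$.

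The remaining verification is a routine computation: for each $u \in B(x,r)$ one has $\lfloor u \rfloor = m - \delta_u$ for a unique $\delta_u \in \{0,1\}^n$ with $\delta_{u,i} = 0$ whenever $f_i > 0$; the translated point $u + (m_{T_m} - m)$ then has integer part $m_{T_m} - \delta_u$ and exactly the same fractional part as $u$, so combining the decomposition of $X$ with the equality $T_m(\delta_u) = T_{m_{T_m}}(\delta_u)$ yields $u \in X \Leftrightarrow u + (y-x) \in X$. The main obstacle is conceptual rather than computational: one must isolate the correct finite invariant (the local type), recognise that the set of integer points realising each type is \str{\Z,+,<}-definable and therefore witnessed inside $\Z^n$, and thereby trade the unbounded \Ls-definable relation $X$ for its bounded and \Ss-definable restriction $Y$.
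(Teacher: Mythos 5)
Your proposal is correct and takes essentially the same route as the paper's proof: both rely on Theorem~\ref{th:separation-integer-fractional} to see that, around integer points, $X$ exhibits only finitely many local configurations, pick an integer representative of each configuration inside a bounded window, and take $Y$ to be the corresponding bounded (hence \Ss-definable, by Lemma~\ref{le:restriction-to-bounded-domain}) restriction of $X$, translating $x$ onto a matching point $y$. The only difference is bookkeeping: you record the membership pattern of $m-\delta$, $\delta\in\{0,1\}^n$, in the sets $X_k^{(I)}$ and choose the radius $r$ adapted to the fractional part of $x$, whereas the paper records the actual trace of $X$ on the enlarged cube $a+[-1,2)^n$ and then uses any sufficiently small uniform radius.
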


\begin{proof} 
	Let $X$ be a \Ls-definable relation 
	\[
	X=\bigcup^{K}_{k=1} (X_{k}^{(I)} + X_{k}^{(F)})
	\]
	as in expression (\ref{eq:integer-fractional}).
	Set $C =[-1,2)^{n}\subseteq \R^{n}$. Observe that the set 
	\[
	\+P= \{- a  + ((a +C)\cap X) \mid a\in \Z^{n}\} 
	\]
	is finite and contains at most $K^{3^n}$ 
	elements, all of which being subsets of $C$. 
	Thus there exists an integer 
	$N$ such that for all $P\in \+P$ there exists $z\in B(0,N-1) \cap \Z^{n}$
	such that 
	\[
	P= - z  + ((z +C)\cap X).
	\]
	We prove the statement  by defining $Y$ as the restriction of $X$ to $B(0,N)$.
	Indeed, consider an arbitrary  $x\in \R^{n}$ and let
	$x=w+t$ where $w_{i}=\lfloor x_{i} \rfloor$ for $i=1, \ldots, n$. 
	Then the translation $u\mapsto -w+u$ defines a one-to-one correspondence between 
	$(w+C)\cap X$  and 
	$-w+ ((w+C)\cap X)$ which is some $P\in \+P$. By definition of $N$ there exists $z\in B(0,N-1) \cap \Z^{n}$
	such that $z+ P=(z+C)\cap X=(z+C) \cap Y$.
	Then the  translation $\tau(u)= -w+z  + u$ is a one-to-one correspondence between 
	$(w+C) \cap X$ and $(z+C) \cap Y$.
	Since $x$ is interior to $w+C$,
	the point $ \tau(x)=y$ is interior to $z+C\subseteq B(0,N)$, thus for sufficiently
	small $r>0$  the ball $B(x,r)$ is included  in $w+C$ and  the ball $B(y,r)$ 
	is included in $z+C$. Consequently $\tau$ defines
	a one-to-one mapping between $B(x,r)\cap X $ and  $B(y,r)\cap Y $. 
    \end{proof}

%%%%%%%%%%%%%%%%%%%%%%%%%%%%%%
%%%%%%%%%%%%%%%%%%%%%%%%%%%%%%
\section{Strata}
\label{sec:strata}
%%%%%%%%%%%%%%%%%%%%%%%%%%%%%%
%%%%%%%%%%%%%%%%%%%%%%%%%%%%%%

The aim is to decide, given $n \geq 1$ and a  \Ls-definable relation $X\subseteq \R^{n}$, whether $X$ is \Ss-definable. Though the relations defined in the two structures have very specific properties (see e.g  \cite{BN88,BBD12} for \Ss-definable relations)
we define  properties that make sense in a setting as general as possible. 
The following clearly defines  an equivalence relation.

\begin{defi}
	\label{de:same-neighborhood}
	Given $x,y \in \R^{n}$ we write  $ x \, {\sim_{X}}\, y$ or simply $ x\sim y$  
	when $X$ is understood, if there exists a real $r>0$ such that 
	the translation $w \mapsto w +y-x$ is a one-to-one mapping from $B(x,r)\cap X$ onto $B(y,r)\cap X$.

\end{defi}

\begin{exa} 
	\label{ex:square}
	Consider a closed subset of the plane delimited by a square. There are 10 equivalence classes:
	the set of points interior to the square, the set of points interior to its complement, the four vertices and the 
	four open edges.
\end{exa}

\begin{defi}
	\label{de:strata}
	\hfill
	\begin{enumerate}
		\item Given  $v \in \R^n$ and a point $y\in \R^n$, let  $L_{v}(y)=\{y+\alpha v \ | \ \alpha \in \R\}$. When $v\not=0$ 
		we say that  $L_{v}(y)$ is  the line passing through $y$ in the direction $v$.
		More generally, if $X\subseteq \R^n$ we let $L_{v}(X)$ denote the set $\bigcup_{x\in X} L_{v}(x)$.
		\item 
		
		A  vector $v \in \R^n$  is an  $X$-\emph{stratum} at $x$ 
		(or simply a \emph{stratum}  when $X$ is understood)
		if there exists a real $r>0$ such that 
		\begin{equation}
		\label{eq:saturation}
		B(x, r)  \cap  L_{v}(X\cap  B(x, r))  \subseteq X.
		\end{equation}
		If $v\not=0$ this can be seen as saying that inside the ball $B(x,r)$, the relation $X$ is a union of lines parallel to $v$. 		
		\item The set of $X$-strata at $x$ is denoted by $\text{Str}_{X}(x)$ or simply $\text{Str}(x)$. 
	\end{enumerate}
	
\end{defi}

\begin{prop}
	\label{pr:strata-subspace}
	For all $X\subseteq \R^{n}$ and $x\in \R^{n}$ the set $\Strem(x) $ is  a vector subspace of 
	$\R^{n}$. 
	
\end{prop}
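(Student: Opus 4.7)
The plan is to verify the three vector-space axioms: $0 \in \Str{(x)}$, closure under scalar multiplication, and closure under addition.

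First, the zero vector is always a stratum, since $L_{0}(y)=\{y\}$ yields $L_{0}(X \cap B(x,r))=X \cap B(x,r) \subseteq X$ for every $r>0$. For closure under scalar multiplication, observe that when $\lambda \neq 0$ the lines $L_{\lambda v}(y)$ and $L_{v}(y)$ coincide as subsets of $\R^{n}$, so any radius $r$ that witnesses $v \in \Str{(x)}$ also witnesses $\lambda v \in \Str{(x)}$; the case $\lambda = 0$ is covered by the previous point.

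The main work is additive closure. Let $u,v \in \Str{(x)}$ with witnessing radii $r_{u}$ and $r_{v}$. If $u+v=0$ there is nothing more to prove since $0 \in \Str{(x)}$. Otherwise, set $M = \max(|u|,|v|)/|u+v|$, a finite positive constant, and pick $0 < r < \min(r_{u},r_{v})/(1+2M)$. I claim $r$ witnesses $u+v \in \Str{(x)}$. Take $y \in X \cap B(x,r)$ and $\alpha \in \R$ with $y + \alpha(u+v) \in B(x,r)$. Because both endpoints lie in $B(x,r)$, we have $|\alpha(u+v)| \leq 2r$, whence $|\alpha u|, |\alpha v| \leq 2Mr$. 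Consider the intermediate point $z = y + \alpha u$: it satisfies $|z - x| \leq |y-x| + |\alpha u| < (1+2M)r < r_{u}$, so $z \in B(x,r_{u})$. Since $y \in X \cap B(x,r_{u})$ and $z \in L_{u}(y) \cap B(x,r_{u})$, the stratum property of $u$ gives $z \in X$. By symmetry $z \in B(x,r_{v})$ as well, and $z + \alpha v = y + \alpha(u+v) \in B(x,r) \subseteq B(x,r_{v})$; applying the stratum property of $v$ at $z$ then yields $y + \alpha(u+v) = z + \alpha v \in X$, as required.

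The main obstacle is precisely this additive step: one cannot simply shift first by $\alpha u$ and then by $\alpha v$, because the intermediate point $z = y + \alpha u$ could escape the witnessing balls $B(x,r_{u})$ and $B(x,r_{v})$ when $|\alpha u|$ is large. The key observation is that, when $u+v \neq 0$, the quantities $|\alpha u|$ and $|\alpha v|$ are controlled by $|\alpha(u+v)|$ up to the constant $M$; shrinking the candidate radius $r$ by this factor forces the intermediate point to remain within reach of both original strata.
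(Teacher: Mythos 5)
Your proof is correct, and your handling of additive closure takes a genuinely different route from the paper's. The paper first passes to a single radius $r$ witnessing both strata and then proves (Lemma~\ref{le:v1-and-v2-saturation}) that this \emph{same} $r$ witnesses the sum: the displacement $\lambda(v_1+v_2)$ is cut into $n$ small zigzag steps, with a compactness argument on the segment $[y,z]$ supplying an $\epsilon$ that keeps every intermediate point inside $B(x,r)$. You instead make a single jump through the intermediate point $z=y+\alpha u$ and compensate for the fact that $z$ may leave the small ball by shrinking the witnessing radius of $u+v$ by the factor $1+2M$ with $M=\max(|u|,|v|)/|u+v|$; the estimate $|\alpha u|,|\alpha v|\le M\,|\alpha(u+v)|\le 2Mr$ is exactly the right observation, and your bookkeeping (including the cases $u+v=0$ and the zero vector, and scalar multiples via $L_{\lambda v}=L_v$) is sound. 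The trade-off is that the paper's discretization yields the stronger conclusion that the radius does not degrade when passing to sums, which is what makes Proposition~\ref{pr:uniform-radius} (a common safe radius for all of $\text{Str}(x)$) an immediate iteration of the lemma; with your argument the radius shrinks by a factor depending on $M$, so recovering the safe-radius statement would require an extra uniformity step (e.g., bounding $M$ over all vectors in the span of a fixed basis, via equivalence of norms on the finite-dimensional subspace). For the proposition as stated, your shorter, discretization-free argument fully suffices.
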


\begin{proof}
	We start with a lemma.
	
	\begin{lem}
		\label{le:v1-and-v2-saturation}
		Let $v_{1}, v_{2}$ be two  strata at $x$ and let $r>0$ be such that 
		\[ 
		B(x, r)  \cap  L_{v_{1}}(X \cap B(x, r) )  \subseteq X \quad \text{ and } \quad B(x, r)  \cap  L_{v_{2}}(X \cap B(x, r) )  \subseteq X
		\]
		Then 
		\[
		B(x, r)  \cap   L_{v_{1}+  v_{2}}(X\cap B(x,r)) \subseteq X.
		\]
	\end{lem}
	
	\begin{proof} 
		Let $v=v_1+v_2$.	The case when one of the vectors $v, v_1, v_2$
		is null is trivial so we assume $v, v_{1}, v_2\not=0$. We must prove that a point $y\in B(x, r) $ belongs to $X$ if and only if all points 
		of $B(x,r)\cap L_{v}(y)$ do. Consider $z= y+  \lambda v \in  B(x,r)$ with $\lambda\ne 0$. Let $\epsilon>0$ be 
		such that for every point $t\in [y,z]$ the ball $B(t,\epsilon)$ is included in $B(x,r)$
		(such a real exists because 
		the segment $[y,z]$ is compact). Let $n$ be an integer 
		such that $|\frac{1}{n} \lambda  v_{1} |<\epsilon$. Then
		the points $y_{i}=y+\frac{i}{n}\lambda (v_{1} + v_{2})$ for  $0\leq i \leq n$  
		belong to $B(x,r)$ because they lay in the segment $[y,z]$, and due to the 
		choice of $\epsilon$ the points 
		$y_{i}+ \frac{1}{n}\lambda  v_{1}$ 
		for $0\leq i < n$  also   belong to $B(x,r)$.
		Since the vectors $v_{1}$ and $v_{2}$ are strata at  $x$,
		for $0\leq i < n$    we have			
\[
y_{i}  \in X \leftrightarrow y_{i} +  \frac{\lambda}{n}  v_{1}   \in X
\leftrightarrow y_{i} +  \frac{\lambda}{n}  v_{1}  +  \frac{\lambda}{n}  v_{2}  = y_{i+1}
\in X.
\]
 Therefore in particular $z \in X \leftrightarrow y \in X$.
			\end{proof}

\medskip Now we turn to the proof of Proposition  \ref{pr:strata-subspace}.
By definition $v$ is a stratum if and only  if  $\lambda v$ is a stratum for some $\lambda\not=0$. 
Thus it suffices to verify that $\Str(x) $ is closed under addition. If 
$v_{1}$ (resp. $v_{2}$) is a stratum then there exist  $r_{1},r_2>0$
such that
\[ 
B(x, r_{1})  \cap L_{v_{1}}(X\cap B(x, r_{1})) \subseteq X \quad \text{ and } \quad	B(x, r_{2})  \cap L_{v_{2}} (X\cap B(x, r_{2})) \subseteq X.	\]
Thus for $r\leq \min\{r_{1}, r_{2} \}$ we have
\[
B(x, r)  \cap L_{v_{1}}(X\cap B(x, r)) \subseteq X \quad \text{ and } \quad B(x, r)  \cap L_{v_{2}}(X\cap B(x, r))  \subseteq X. \
\]
It then suffices to apply Lemma \ref{le:v1-and-v2-saturation}.	
\end{proof}

\begin{defi}
	\label{de:dimension}
		Let $X\subseteq \R^{n}$ and $x \in \R^n$. The \emph{dimension} dim$(x)$ of $x$ is the dimension of the subspace $\Str(x)$.
\end{defi}

\begin{defi}
	\label{de:singular-point}
	Given  $X\subseteq \R^{n}$, a point $x\in  \R^{n}$ is $X$-\emph{singular}, or simply \emph{singular}, if 
	$\Str(x)$ is trivial, i.e., reduced to the null vector, otherwise it is \emph{nonsingular}.
\end{defi}

\begin{exaC}[(Example \ref{ex:square} continued)]
\label{ex:square2}
  Let $x \in \R^2$. If $x$ belongs to the interior of the square or of its complement, then $\Str(x)= \R^2$. If $x$ is one of the four vertices of the square then 
	we have $\Str(x)=\{0\}$, i.e $x$ is singular. Finally, if $x$ belongs to an open edge of the square but is not a 
	vertex, then $\Str(x)$ has dimension 1, and two  points of  opposite edges have the same one-dimensional subspace, while two points of  adjacent edges  have different  one-dimensional subspaces.
	
\end{exaC}

Note that even non-\Ls-definable relations may have no singular points. Indeed consider in the plane the set $X$ defined as the union of vertical lines at
abscissa $\frac{1}{n}$ for all positive integers $n$. In this case any vertical vector is a stratum at any point of the plane.

\medskip Now it can be  shown that all strata at  $x$ can be defined with respect to a common value $r$ in expression (\ref{eq:saturation}).

\begin{prop} 
	\label{pr:uniform-radius}
	Let $X\subseteq \R^{n}$ and $x \in \R^{n}$.   There exists a real $r>0$ such that for every  $v\in \Strem(x)$ 
	we have
	\[
	B(x, r)  \cap L_{v}(X\cap B(x, r)) \subseteq X.
	\]

\end{prop}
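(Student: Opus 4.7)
The plan is to exploit the fact that $\Str(x)$ is a finite-dimensional subspace (Proposition~\ref{pr:strata-subspace}): once the saturation property holds at a common radius for a basis of $\Str(x)$, Lemma~\ref{le:v1-and-v2-saturation} will spread it to every linear combination without any further shrinking.

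First, pick a basis $v_1, \ldots, v_k$ of $\Str(x)$ (if $k = 0$ any positive $r$ works, since then only $v = 0$ needs to be considered and $L_0(y) = \{y\}$ makes the inclusion trivial). For each $i$, the membership $v_i \in \Str(x)$ yields some $r_i > 0$ with $B(x, r_i) \cap L_{v_i}(X \cap B(x, r_i)) \subseteq X$. Set $r = \min(r_1, \ldots, r_k)$; since the inclusion is clearly preserved under shrinking the ball, each $v_i$ still satisfies the saturation at this common radius $r$.

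Next I would observe two stability properties of the family of vectors satisfying the saturation at this fixed $r$. Scalar invariance is immediate: for $\lambda \neq 0$ we have $L_{\lambda v}(y) = L_v(y)$, so $\lambda v$ satisfies the property whenever $v$ does, and the case $\lambda = 0$ is trivial. Closure under sums is precisely the content of Lemma~\ref{le:v1-and-v2-saturation}, which crucially uses the same radius on both sides of its hypothesis and conclusion. Given an arbitrary $v = \sum_{i=1}^k \alpha_i v_i \in \Str(x)$, scalar invariance first gives the saturation at radius $r$ for each $\alpha_i v_i$, and a short induction on $k$ using sum stability delivers it for $v$ itself. The argument is essentially bookkeeping once Lemma~\ref{le:v1-and-v2-saturation} is in hand; the only point requiring care is that the common radius must not be eroded in the inductive step, which is precisely what that lemma guarantees.
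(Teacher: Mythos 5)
Your proof is correct and follows essentially the same route as the paper: choose a basis of $\Str(x)$, take the minimum of the radii witnessing saturation for the basis vectors, and then propagate to arbitrary linear combinations by iterating Lemma~\ref{le:v1-and-v2-saturation} (together with the easy scalar invariance), all at the fixed radius $r$. Your explicit remarks on scalar invariance and the trivial cases $k=0$, $\lambda=0$ are only minor elaborations of what the paper leaves implicit.
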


\begin{proof} 
	The case when $\Str(x)$ is reduced to $0$ is trivial so we assume that for some $p>0$ 
	the vectors  $v_{1}, \ldots, v_{p}$ form a basis of the vector space $\Str(x)$. There exist
	$r_{1}, \ldots, r_{p}>0$ such that 
	\[
	B(x, r_{i})  \cap L_{v_{i}}(X\cap B(x, r_{i})) \subseteq X.
	\]
	Then for $r= \min \{r_{1}, \ldots, r_{p}\}$ we have
	\[
	B(x, r)  \cap L_{v_{i}}(X\cap B(x, r)) \subseteq X.
	\]
	Consider an arbitrary vector of $\Str(x)$, say $v=\lambda_{1} v_{1} + \cdots + \lambda_{p} v_{p} $.
	It suffices to apply Lemma \ref{le:v1-and-v2-saturation} successively to 
	$\lambda_{1} v_{1} + \lambda_{2} v_{2} $,  $\lambda_{1} v_{1} + \lambda_{2} v_{2} + \lambda_{3} v_{3}$, 
	\ldots, $\lambda_{1} v_{1} + \cdots + \lambda_{p} v_{p} $.
\end{proof}

\begin{defi}
	
	Let $X\subseteq \R^{n}$ and $x \in \R^{n}$. A \emph{safe radius} (for $x$) is a real $r>0$ satisfying the condition of Proposition \ref{pr:uniform-radius}.
	Clearly if $r$ is safe then so are all $0<s\leq r$. Observe that every real is 
	a safe radius  if   $\Str(x)$ is trivial.
	
\end{defi}

\begin{exaC}[(Example \ref{ex:square} continued)]  For an element $x$ of the interior of the square
	or the interior of its complement,  let $r$ be the (minimal) distance from $x$ to the edges of the square. Then $r$ is safe for $x$. 
	If $x$ is a  vertex  
	then $\Str(x)$ is trivial and every $r>0$ is safe for $x$. In all other cases $r$ is the minimal distance 
	of $x$ to a vertex. 
\end{exaC}

\begin{lem}
	\label{le:sim-and-str}
		Let $X\subseteq \R^{n}$ and $x,y \in \R^n$. If $x\sim y$ then  $\text{\em Str}(x) =\text{\em Str}(y) $.

\end{lem}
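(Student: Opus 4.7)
The approach is to translate everything: the equivalence $x \sim y$ provides a specific translation carrying the local picture of $X$ at $x$ to the local picture at $y$, and the definition of $X$-stratum is manifestly invariant under such translations.

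Concretely, I would first fix $r_0 > 0$ witnessing $x \sim y$, so that the translation $\tau(w) = w + (y-x)$ restricts to a bijection between $B(x, r_0) \cap X$ and $B(y, r_0) \cap X$. Since $\tau$ is an isometry, it sends $B(x, \rho)$ onto $B(y, \rho)$ for every $\rho \leq r_0$; moreover, for every $v \in \R^n$ and $S \subseteq \R^n$ a direct check shows that $\tau(L_v(S)) = L_v(\tau(S))$, because translating a line of direction $v$ yields another line of direction $v$.

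To show $\Str{(x)} \subseteq \Str{(y)}$, I would pick $v \in \Str{(x)}$ and let $r_v > 0$ be a safe radius at $x$ provided by Proposition~\ref{pr:uniform-radius}. Set $\rho = \min(r_0, r_v)$, which is still safe at $x$. The stratum inclusion $B(x, \rho) \cap L_v(X \cap B(x, \rho)) \subseteq X$ is unaffected by intersecting the right-hand side with $B(x, \rho)$, giving $B(x, \rho) \cap L_v(X \cap B(x, \rho)) \subseteq X \cap B(x, \rho)$. Applying $\tau$ and using $\tau(X \cap B(x, \rho)) = X \cap B(y, \rho)$ (valid because $\rho \leq r_0$) together with $\tau \circ L_v = L_v \circ \tau$ yields $B(y, \rho) \cap L_v(X \cap B(y, \rho)) \subseteq X$, so $v \in \Str{(y)}$.

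The reverse inclusion follows by the symmetry of $\sim$: the inverse translation $\tau^{-1}$ witnesses $y \sim x$, and exactly the same argument applies with the roles of $x$ and $y$ swapped. There is no real obstacle here, since the statement is essentially a translation-invariance observation; the only subtlety is to intersect the stratum inclusion with $B(x, \rho)$ before transporting it, so that the bijectivity of $\tau$ on $X \cap B(x, \rho)$ can actually be invoked.
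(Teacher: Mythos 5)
Your proof is correct and takes essentially the same route as the paper's (much terser) argument: the translation witnessing $x \sim y$ transports the stratum condition verbatim, since balls and lines of direction $v$ are translation-invariant, and the reverse inclusion follows by symmetry. The only cosmetic difference is that you invoke Proposition~\ref{pr:uniform-radius} to get a safe radius, where the radius provided directly by Definition~\ref{de:strata} for the single vector $v$ would already suffice.
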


\begin{proof}
	For some $r>0$, the translation $u\mapsto u+y-x$ is a one-to-one correspondence between 
	$B(x,r)\cap X$ and $B(y,r)\cap X$. Thus every stratum of $X$ at $x$ is a stratum of $X$ at $y$ and 
	vice versa. 
\end{proof}

The converse of Lemma \ref{le:sim-and-str} is false in general. Indeed consider, e.g.,  $X=\{(x,y)\mid y\leq 0\} \cup \{(x,y)\mid y=1\}$ in $\R^{2}$. The points $(0,0)$ and $(0,1)$ have the same 
subspace of strata, namely that generated by $(1,0)$, but $x\not\sim y$.

\bigskip Now we combine the notions of strata and of safe radius.

\begin{lem}
	\label{le:strx-subset-stry}
	Let $X\subseteq  \R^{n}$, $x\in \R^{n}$  and $r$ be a safe radius for $x$. Then for all $y\in B(x,r)$ we have
	$\Strem{(x)}\subseteq \Strem{(y)}$.
\end{lem}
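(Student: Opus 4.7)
The plan is to show directly that any $v \in \text{Str}(x)$ satisfies the stratum condition at $y$, using the safe radius $r$ at $x$ to transport the saturation property to a smaller ball around $y$.

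First, I fix an arbitrary $v \in \text{Str}(x)$. Since $y \in B(x,r)$, the quantity $s := r - |y-x|$ is strictly positive, and by the triangle inequality $B(y,s) \subseteq B(x,r)$. My goal is to verify that $s$ witnesses $v$ being a stratum at $y$, namely that
\[
B(y,s) \cap L_v(X \cap B(y,s)) \subseteq X.
\]

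To check this, I take any point of the form $z + \alpha v$ lying in $B(y,s)$, with $z \in X \cap B(y,s)$. Since $B(y,s)$ is contained in $B(x,r)$, both $z$ and $z+\alpha v$ belong to $B(x,r)$, and $z \in X \cap B(x,r)$. Now $r$ is a safe radius for $x$, so by Proposition \ref{pr:uniform-radius} the vector $v$ satisfies $B(x,r) \cap L_v(X \cap B(x,r)) \subseteq X$. Applying this to $z+\alpha v \in B(x,r) \cap L_v(X \cap B(x,r))$ gives $z + \alpha v \in X$, which is what we needed.

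This establishes $v \in \text{Str}(y)$, and since $v$ was arbitrary in $\text{Str}(x)$, we conclude $\text{Str}(x) \subseteq \text{Str}(y)$. There is no real obstacle here — the proof is essentially a localization argument, where the only point to verify carefully is that the safe-radius property transports from $x$ to any interior point of $B(x,r)$. The note about $0 \in \text{Str}(x)$ is immediate and can be mentioned in passing.
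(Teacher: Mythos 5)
Your proof is correct and follows essentially the same route as the paper: fix $v \in \text{Str}(x)$, choose $s>0$ with $B(y,s) \subseteq B(x,r)$, and use the safe-radius property (which by definition of safe radius holds for all $v \in \text{Str}(x)$ with the single radius $r$) to get $B(y,s)\cap L_{v}(X\cap B(y,s)) \subseteq B(x,r)\cap L_{v}(X\cap B(x,r)) \subseteq X$. The only difference is that you unpack the set inclusion pointwise, which the paper states directly as a chain of inclusions.
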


\begin{proof}
	Indeed,  consider $v\in \Str{(x)}$. For all $s>0$ such that $B(y,s)\subseteq B(x,r)$ we have
		\[
	\begin{array}{ll}
		B(y,s)\cap L_{v}(X\cap B(y,s))& = B(y,s)\cap L_{v}(X\cap B(y,s)\cap B(x,r))  \\
		&\subseteq B(x,r)\cap  L_{v}(X\cap  B(x,r))\subseteq   X.
	\end{array}
	\]	
                \par \vspace{-1.4\baselineskip}\qedhere
\end{proof}

\begin{exaC}[(Example \ref{ex:square} continued)]
\label{ex:square3}
  Consider a point $x$ on an (open) edge
	of the square and a safe radius $r$ for $x$. For every point $y$ in $B(x,r)$ which is not on the edge we have 
	$\Str(x)\subsetneq \Str(y)=\R^{2}$. For all other points we have $\Str(x)=\Str(y)$.
\end{exaC}

We relativize the notion of singularity and strata  to an affine subspace $P\subseteq \R^{n}$. 
The next definition should come as no surprise.

\begin{defi}
	\label{de:H-singular}
	Given an affine subspace $P\subseteq \R^{n}$, a subset $X\subseteq \R^{n} $ and a point $x\in P$, we say that  a  nonzero
	vector $v$ parallel to $P$ 
	is an $(X,P)$-\emph{stratum for the point} $x$  if for all sufficiently small $r>0$ it holds
	\[
	 B(x,r) \cap L_{v}(X \cap B(x,r)  \cap   P)\subseteq X.
	\]
	
	A point $x\in P$ is $(X,P)$-\emph{singular} if it has no $(X,P)$-stratum. For simplicity when $P$ is the space $\R^{n}$ we keep the previous terminology and 
	speak of   $X$-strata and 
	$X$-singular points.
\end{defi}
Singularity and nonsingularity do not go through restriction to affine subpaces. 
\begin{exa} In the real plane, let $X=\{(x,y) \ | \  y<0\}$  and $P$ be the line $x=0$. Then the origin is not $X-$singular but it is $(X,P)-$singular. All other elements of $P$ admit $(0,1)$ as an $(X,P)-$stratum thus they are not $(X,P)-$singular. 
	The opposite situation may occur.  In the real plane, let $X=\{(x,y) \ | \  y<0\} \cup P$ where $P=\{(x,y) \mid x=0\}$.
	Then   the origin is  $X-$singular but it is not  $(X,P)-$singular.
\end{exa}

%%%%%%%%%%%%%%%%%%%%%%%%%%%%%%
%%%%%%%%%%%%%%%%%%%%%%%%%%%%%%
\section{Local properties}
%%%%%%%%%%%%%%%%%%%%%%%%%%%%%%
%%%%%%%%%%%%%%%%%%%%%%%%%%%%%%
\label{sec:local-properties}

%%%%%%%%%%%%%%%%%%%%%%%%%%%%%%
\subsection{Local neighborhoods}
%%%%%%%%%%%%%%%%%%%%%%%%%%%%%%

In this section we recall that  if $X \subseteq \R^n$ is \Ss-definable then the equivalence relation $\sim$ (introduced in Definition 
\ref{de:same-neighborhood}) 
has finite index. This extends easily to the case where $X$ is \Ls-definable.
The claim for \Ss-definable relations can be found, e.g., in \cite[Thm 1]{BN88} (see also \cite[Section 3]{BBD12}) but we 
revisit it to  some extent because of the small 
modifications needed to use it in our setting. 

\medskip  We  define what we mean by ``cones''. 

\begin{defi}
	\label{de:cone}
	Let $\xi=(\xi_1,\dots,\xi_n) \in \R^n$. A
	 \emph{cone with apex $\xi$} is an intersection of finitely many  half\-spaces 
	defined by conditions of the form $u(x-\xi) \triangleleft b$ where  $\triangleleft \in \{<,\leq\}$, $b \in \Q$, and $u$ denotes a linear expression with rational coefficients, i.e., $u(x-\xi)=\sum_{1 \leq i \leq n} a_i (x_i - \xi_i)$ where $a_i \in \Q$.	
\end{defi}

In particular the set reduced to the origin, and the empty set,  %and the whole space (plus maintenant)
are specific cones in our sense (on the real line they can be described respectively by  
$x\leq 0 \wedge -x\leq 0$ and  $x< 0 \wedge -x< 0$). 

Let $X \subseteq \R^n$ be \Ss-definable. By Theorem \ref{th:quantifier-elimination-for-R-plus} we may assume that 
 $X$ is  defined by a formula 
\begin{equation}
\label{eq:phi}
\phi(x)= \bigvee_{i \in I} C_{i}   \text{ where } C_{i} = 
 \bigwedge_{j\in J_{i}} u_{i,j}(x) \triangleleft_{i,j} b_{i,j} 
\end{equation}
where for all $(i,j)\in I\times J_{i}$ we have $\triangleleft_{i,j} \in \{<,\leq\}$, $b_{i,j} \in \Q$ and $u_{i,j}$ is a linear expression with rational coefficients.

 \medskip Now we associate with  $\phi$  a finite collection of cones.

\begin{defi}
\label{de:local-neighborhoods} Consider all formulas obtained from  expression (\ref{eq:phi}) 
 by replacing in all possible ways  each predicate  
$u_{i,j}(x) \triangleleft_{ij} b_{i,j}$ by  one of the three options
$u_{i,j}(x) \triangleleft_{ij} 0$ or $\texttt{false}$ or $\texttt{true}$. Use the routine simplifications 
so that the resulting formulas are reduced to $\texttt{false}$ or $\texttt{true}$ or are disjunctions of conjunctions 
with no occurrence of  $\texttt{false}$ or $\texttt{true}$.  

Let $\Theta$ be the (finite) set of formulas thus obtained, and let us call \emph{local neighborhood} any relation defined 
by some formula in $\Theta$. In particular  each formula in $\Theta$ defines a finite union of cones of which the origin is an apex.

\end{defi}

In the terminology of  \cite[Thm 1]{BN88} the following
says that an \Ss-definable relation has finitely many ``faces''
which are what we call local neighborhoods.
	
\begin{prop}
	\label{pr:neighborhood-1} Consider an \Ss-definable relation $X$.
	There exists a finite collection $\Theta$ of \Ss-formulas defining  finite unions of cones 
	with apex the origin such that 
	for all $\xi\in \R^{n}$   there exist some $\theta$ in $\Theta$ and some real $s>0$ such that for all $t \in \R^n$ 
	we have 
	\begin{equation}
	\label{eq:local-neighborhood}
	(\theta(t) \wedge |t|<s) \leftrightarrow  (\phi(\xi +t) \wedge |t|<s).
	\end{equation}

\end{prop}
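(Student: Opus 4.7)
The plan is to construct, for each $\xi \in \R^n$, the appropriate $\theta \in \Theta$ by examining how each atomic predicate of $\phi$ behaves at $\xi$, and then choosing the radius $s$ small enough that the behavior persists in $B(\xi,s)$.

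First I would process the atomic predicates of $\phi$ one by one. Fix $\xi$ and consider an atomic formula $u_{i,j}(x) \triangleleft_{i,j} b_{i,j}$ of $\phi$. Evaluate $u_{i,j}(\xi)-b_{i,j}$ and distinguish three cases. If $u_{i,j}(\xi) < b_{i,j}$ (or $u_{i,j}(\xi)\leq b_{i,j}$ with $\triangleleft_{i,j}$ being $\leq$ and strict inequality), then by continuity of $u_{i,j}$ the predicate will be satisfied throughout a sufficiently small ball around $\xi$, so I replace it by $\texttt{true}$. Symmetrically, if $u_{i,j}(\xi) > b_{i,j}$ (or $u_{i,j}(\xi) \geq b_{i,j}$ with $\triangleleft_{i,j}$ being $<$), then it fails throughout a small ball and I replace it by $\texttt{false}$. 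The remaining case is $u_{i,j}(\xi) = b_{i,j}$; here the linearity of $u_{i,j}$ gives $u_{i,j}(\xi+t)-b_{i,j} = u_{i,j}(t)$, so $\phi$'s local truth on $\xi+t$ reduces to $u_{i,j}(t) \triangleleft_{i,j} 0$, which is exactly the replacement prescribed in Definition \ref{de:local-neighborhoods}.

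Having made these replacements uniformly in every disjunct $C_i$ and simplified the resulting formula with $\texttt{true}$/$\texttt{false}$ as indicated in Definition \ref{de:local-neighborhoods}, I obtain a formula $\theta \in \Theta$. To pick $s$, note that there are only finitely many atomic predicates, and for each atomic predicate falling into one of the first two cases above there is a positive threshold $\delta_{i,j}$ (namely roughly $|u_{i,j}(\xi)-b_{i,j}|$ divided by the operator norm of the linear form $u_{i,j}$) below which the predicate's truth value is constant on $B(\xi,\delta_{i,j})$. I take $s$ to be the minimum of these finitely many thresholds. Then for $|t|<s$, every atomic predicate of $\phi$ evaluated at $\xi+t$ agrees with the corresponding atomic predicate of $\theta$ evaluated at $t$, which yields the equivalence in (\ref{eq:local-neighborhood}).

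There is no substantial obstacle: the proof is essentially a bookkeeping argument based on continuity and linearity of the $u_{i,j}$'s. The only point that requires care is to make sure $\theta$ genuinely lies in $\Theta$ as defined, which is why the statement of Definition \ref{de:local-neighborhoods} allows all three replacement options $u_{i,j}(x)\triangleleft_{i,j} 0$, $\texttt{true}$, $\texttt{false}$ simultaneously and in all combinations; the local cases at $\xi$ pick out exactly one such replacement. Finally the fact that $\theta$ defines a finite union of cones with apex at the origin is immediate from its syntactic form, since each surviving atomic formula is of the type $u_{i,j}(t)\triangleleft_{i,j} 0$ with $u_{i,j}$ a rational linear form, which is exactly the shape required by Definition \ref{de:cone}.
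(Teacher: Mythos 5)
Your proof is correct and follows essentially the same route as the paper: split the atomic predicates according to whether $\xi$ lies on the hyperplane $u_{i,j}(x)=b_{i,j}$ or strictly off it, replace them by $u_{i,j}(t)\triangleleft_{i,j}0$, $\texttt{true}$ or $\texttt{false}$ accordingly, and take $s$ small enough that the truth values of the off-hyperplane predicates are constant on $B(\xi,s)$. The only cosmetic difference is that you quantify the thresholds explicitly via the norms of the linear forms, whereas the paper simply chooses $s$ so that $B(\xi,s)$ meets only the hyperplanes containing $\xi$.
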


\begin{proof}
Let $\Theta$ be defined as in Definition \ref{de:local-neighborhoods}. Consider the expression (\ref{eq:phi}).  For all $(i,j) \in  I\times J_{i}$ let $A_{i,j}$ denote the hyperplane with equation $u_{i,j}(x)= b_{i,j}$. Let $s >0$ be such that $B(\xi,s)$ intersects only the hyperplanes $A_{i,j}$ which contain $\xi$. For all $(i,j) \in  I\times J_{i}$, if $\xi \in A_{i,j}$ then $u_{i,j}(\xi) = b_{i,j}$ thus for every $t \in \R$ we have $u_{i,j}(\xi+t) \triangleleft_{i,j} b_{i,j}$ if and only if $u_{i,j}(t) \triangleleft_{i,j} 0$. Otherwise if $\xi \not\in A_{i,j}$ then $u_{i,j}(\xi+t) \triangleleft_{i,j} b_{i,j}$ is either always true or always false for $0<t<|s|$.
This shows that for $0<t<|s|$ the formula $\phi(\xi +t)$ is equivalent to a Boolean combination of formulas of the form $u_{i,j}(t) \triangleleft_{i,j} 0$, $\texttt{true}$ or $\texttt{false}$.
\end{proof}

\begin{cor}
	\label{cor:finite-tilda-Ss} 
	Let $X \subseteq \R^n$ be \Ss-definable.
	\begin{enumerate}
		
		\item The equivalence relation 
		$\sim$ has finite index.
		
		\item The set of (distinct) spaces $\Strem(x)$ is finite when $x$ runs over $\R^{n}$.  
		
		\item There exists a fixed finite collection $\mathcal{C}$  of  cones (in the sense of Definition \ref{de:cone}) satisfying the following condition. 
		With each $\sim$-class $E$ is associated a subset ${\mathcal{C'}} \subseteq \mathcal{C}$ such that for every $x \in E$ there exists $r>0$ such that for every $t \in \R^n$ we have
		\[
		(x+t \in X) \wedge |t|<r \ \leftrightarrow \ \big(t \in \bigcup_{C \in \mathcal{C'}} C \big) \wedge |t|<r.
		\]

	\end{enumerate}
\end{cor}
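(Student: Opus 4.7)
The plan is to deduce all three items directly from Proposition~\ref{pr:neighborhood-1}. Let $\Theta$ be the finite set of formulas supplied by that proposition, and for every $\xi \in \R^n$ let $\theta_\xi \in \Theta$ denote a formula satisfying equivalence (\ref{eq:local-neighborhood}) for some $s=s(\xi)>0$ (if several $\theta \in \Theta$ work, pick one arbitrarily). This partitions $\R^n$ into at most $|\Theta|$ classes according to the value of $\theta_\xi$.

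First I would show that if $\theta_x = \theta_y$ then $x \sim y$. Indeed, pick a common $s>0$ that witnesses (\ref{eq:local-neighborhood}) for both $x$ and $y$. Then for every $t$ with $|t|<s$ we have $x+t \in X \Leftrightarrow \theta_x(t) \Leftrightarrow \theta_y(t) \Leftrightarrow y+t \in X$, and the translation $u \mapsto u + (y-x)$ is a bijection between $B(x,s) \cap X$ and $B(y,s) \cap X$. This proves that $\sim$ has at most $|\Theta|$ equivalence classes, which is item (1). Item (2) is then immediate from Lemma~\ref{le:sim-and-str}, since the number of distinct subspaces $\Str(x)$ is bounded by the number of $\sim$-classes.

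For item (3), I would define $\mathcal{C}$ to be the (finite) collection of all cones that occur as a conjunct in the disjunctive normal form of some $\theta \in \Theta$, so that every $\theta \in \Theta$ defines a relation of the form $\bigcup_{C \in \mathcal{C}(\theta)} C$ for some $\mathcal{C}(\theta) \subseteq \mathcal{C}$. Given a $\sim$-class $E$, fix any representative $x_0 \in E$ and set $\mathcal{C}' := \mathcal{C}(\theta_{x_0})$. For any other $x \in E$ we have $\theta_x = \theta_{x_0}$ (by the argument of the previous paragraph, up to the choice of representative in $\Theta$, which we may standardise by sending each class to a fixed $\theta$), so applying (\ref{eq:local-neighborhood}) at $x$ with $r=s(x)$ yields exactly the required equivalence.

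The only delicate point is the well-definedness of the assignment $E \mapsto \mathcal{C}'$: a priori the proposition only asserts the existence of some $\theta \in \Theta$, not uniqueness, so two different representatives of $E$ could be labelled by two different formulas of $\Theta$ defining the same local relation. This is harmless, since any two such formulas define the same finite union of cones up to a null set near the origin, and one can simply fix once and for all a map assigning to each $\sim$-class a single $\theta \in \Theta$ (and hence a single $\mathcal{C}'$) by choosing a representative per class. No further difficulty arises.
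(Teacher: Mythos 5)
Your proposal is correct and follows essentially the same route as the paper: finiteness of $\Theta$ from Proposition~\ref{pr:neighborhood-1} gives (1), Lemma~\ref{le:sim-and-str} gives (2), and reading each $\theta\in\Theta$ as a union of cones gives (3), with the well-definedness issue resolved (as in the paper) by the fact that all points of a $\sim$-class carry formulas defining the same local relation, so fixing one $\theta$ per class and transferring via $x\sim x_0$ yields the stated equivalence. (Minor remark: near the origin the two formulas agree exactly, not merely up to a null set, since both describe $X-x$ on a small ball.)
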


\begin{proof}
	Point  1 follows from the fact that $\Theta$ is finite and that two points $x,y$ which are associated with the same formula $\theta$ in Proposition \ref{pr:neighborhood-1} satisfy $x \sim y$ by definition of $\theta$.  Point  2 is a straightforward consequence of  Point  1 and Lemma \ref{le:sim-and-str}.
	For Point 3 observe that all elements of $E$ must be  associated with equivalent formulas $\theta$ in Proposition \ref{pr:neighborhood-1} and that each formula $\theta \in \Theta$ is a disjunction of formulas which define cones.
\end{proof}

Because of Lemma \ref{le:Ss-Ls-neighborhoods} we have 

\begin{cor}\label{cor:Ls-cones}
	The statements of Corollary \ref{cor:finite-tilda-Ss} extend to the case where $X$ is \Ls-definable.
	
\end{cor}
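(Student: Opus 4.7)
My plan is to reduce the \Ls\ case to the already-proved \Ss\ case of Corollary \ref{cor:finite-tilda-Ss} by transporting local neighborhoods via Lemma \ref{le:Ss-Ls-neighborhoods}. Fix an \Ls-definable $X\subseteq\R^n$ and let $Y\subseteq\R^n$ be the \Ss-definable companion given by that lemma, so that for every $x\in\R^n$ there exist $y=y(x)\in\R^n$ and $r=r(x)>0$ with the translation $u\mapsto u+y-x$ restricting to a bijection $B(x,r)\cap X \to B(y,r)\cap Y$. Equivalently, for every $|t|<r$ one has $x+t\in X$ iff $y(x)+t\in Y$, so the local picture of $X$ at $x$ is exactly the translate of the local picture of $Y$ at $y(x)$.

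For Point~1, I would show that $y(x_1)\sim_Y y(x_2)$ implies $x_1\sim_X x_2$. Given a translation witness $\sigma\colon u\mapsto u+y(x_2)-y(x_1)$ for the $\sim_Y$-equivalence, valid on balls of some radius $s$, one composes with the two translations $\tau_i\colon u\mapsto u+y(x_i)-x_i$ provided by Lemma \ref{le:Ss-Ls-neighborhoods}. The map $\tau_2^{-1}\circ\sigma\circ\tau_1$ is simply the translation $u\mapsto u+x_2-x_1$, and chasing the three bijections shows that this translation restricts to a bijection $B(x_1,\rho)\cap X \to B(x_2,\rho)\cap X$ for every $\rho\leq\min\{r(x_1),r(x_2),s\}$; hence $x_1\sim_X x_2$. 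Therefore $[x]_{\sim_X}\mapsto [y(x)]_{\sim_Y}$ induces an injection on quotients, and the finite index of $\sim_X$ follows from Corollary \ref{cor:finite-tilda-Ss}(1). Point~2 is then an immediate consequence of Lemma \ref{le:sim-and-str} applied class by class.

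For Point~3, I take as $\mathcal{C}$ the same finite collection of cones furnished by Corollary \ref{cor:finite-tilda-Ss}(3) applied to $Y$, and I attach to each $\sim_X$-class $E$ the subset $\mathcal{C}'_E\subseteq\mathcal{C}$ associated with the $\sim_Y$-class of $y(x)$ for any $x\in E$ (well-defined by the preceding paragraph). For $x\in E$, two local equivalences then hold on a sufficiently small ball: $x+t\in X \leftrightarrow y(x)+t\in Y$ by Lemma \ref{le:Ss-Ls-neighborhoods}, and $y(x)+t\in Y \leftrightarrow t\in\bigcup_{C\in\mathcal{C}'_E}C$ by Corollary \ref{cor:finite-tilda-Ss}(3). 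Intersecting the two ranges of validity yields the required cone description around $x$.

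The only obstacle is the bookkeeping of the various radii produced by Lemma \ref{le:Ss-Ls-neighborhoods} and Corollary \ref{cor:finite-tilda-Ss}, but since all conditions are purely local and only finitely many radii intervene at any given point, taking minima is routine.
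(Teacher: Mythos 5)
Your proposal is correct and follows exactly the paper's approach: the paper proves this corollary by a one-line appeal to Lemma \ref{le:Ss-Ls-neighborhoods}, and your argument is precisely the detailed translation-composition verification behind that appeal. The only microscopic slip is that the well-definedness of $E \mapsto \mathcal{C}'_E$ in your Point~3 needs the implication $x_1 \sim_X x_2 \Rightarrow y(x_1) \sim_Y y(x_2)$ (obtained by composing the same translations in the opposite order), rather than the direction you establish in your second paragraph; alternatively one can fix a single representative $x_0 \in E$ and transfer its cone description to every $x \in E$ directly via $x \sim_X x_0$.
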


Combining Corollaries \ref{cor:finite-tilda-Ss} and \ref{cor:Ls-cones} allows 
us to specify properties of singular points for  \Ss- and \Ls-definable relations.

\begin{prop}
	\label{pr:finitely-many-singular-points}
	Let $X \subseteq \R^n$. If $X$ is \Ss-definable then it has finitely many  singular points 
	and their components are rational numbers.
	If $X$ is \Ls-definable then it has a countable number of singular points
	and their components are rational numbers.
	
\end{prop}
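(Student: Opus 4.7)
The plan is to use Proposition \ref{pr:neighborhood-1} to extract, from the defining formula of $X$, a finite list of affine hyperplanes that control the local geometry, and then show that singular points are forced to sit at very restricted positions in the arrangement of these hyperplanes.

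For the \Ss-definable case, I would fix a defining formula $\phi$ of the form (\ref{eq:phi}) and consider the finite family of affine hyperplanes $A_{i,j}: u_{i,j}(x) = b_{i,j}$, each with rational coefficients. For $\xi \in \R^n$, Proposition \ref{pr:neighborhood-1} says that near $\xi$ the relation $X$ is locally described by a Boolean combination of the homogeneous conditions $u_{i,j}(t) \triangleleft_{i,j} 0$ for those $(i,j)$ with $\xi \in A_{i,j}$, the other atoms being forced to \texttt{true} or \texttt{false}. The key claim to verify is: if $v \in \R^n$ satisfies $u_{i,j}(v) = 0$ for every $(i,j)$ with $\xi \in A_{i,j}$, then $v$ is a stratum at $\xi$. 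Indeed, such a $v$ leaves every relevant linear form unchanged upon translation, hence preserves the local description of $X$. Therefore, if $\xi$ is singular, the common kernel of the active linear forms must be $\{0\}$, which forces $\xi$ to be the unique solution of a system of $n$ linearly independent equations chosen from the finite family $\{u_{i,j}(x) = b_{i,j}\}$. Only finitely many such $\xi$ exist, and by Cramer's rule their coordinates are rational.

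For the \Ls-definable case, I would localize using Lemma \ref{le:restriction-to-bounded-domain}. For each $N \in \N$, let $B_N = [-N,N]^n$ and put $X_N = X \cap B_N$, which is \Ss-definable by that lemma. For any $\xi$ in the topological interior of $B_N$, a sufficiently small ball around $\xi$ is contained in $B_N$, so $X$ and $X_N$ coincide there; hence $\text{Str}_X(\xi) = \text{Str}_{X_N}(\xi)$, and $\xi$ is $X$-singular if and only if $\xi$ is $X_N$-singular. Since $\R^n = \bigcup_{N \in \N} (-N,N)^n$, the set of $X$-singular points is covered by the countable union of the sets of $X_N$-singular points, each of which is finite with rational coordinates by the \Ss case just treated. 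This yields countably many $X$-singular points, each with rational coordinates.

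The main obstacle, in my view, is making the ``common kernel gives a stratum'' step airtight. One needs to trace through the cone description produced by Proposition \ref{pr:neighborhood-1} and verify that whenever $u_{i,j}(v) = 0$ for all active hyperplanes at $\xi$, the truth value of each atomic formula $u_{i,j}(t) \triangleleft_{i,j} 0$ is preserved under $t \mapsto t + \alpha v$, hence so is the whole Boolean combination; one must then relate this invariance back to Definition \ref{de:strata} by choosing a single radius that simultaneously validates the conclusion of Proposition \ref{pr:neighborhood-1} and keeps all translates $t + \alpha v$ inside the relevant ball. This is routine but is where the geometric content of the proof sits.
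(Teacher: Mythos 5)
Your proposal is correct and follows essentially the paper's own argument: for the \Ss\ case, both identify the active hyperplanes at $\xi$ via Proposition \ref{pr:neighborhood-1}, observe that any vector in the common kernel of the active linear forms is a stratum, and conclude that a singular point must be the intersection of $n$ hyperplanes with independent rational normals drawn from a fixed finite family, hence finitely many such points with rational coordinates. For the \Ls\ case your use of the nested boxes $[-N,N]^n$ instead of the paper's unit-cell tiling is only a cosmetic variation of the same localization via Lemma \ref{le:restriction-to-bounded-domain}.
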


\begin{proof} 
	By Proposition \ref{pr:neighborhood-1},  if $\xi$
	is not interior to $X$ or its complement, for small enough $r>0$ the subset $ X$ coincides on 
	$B(\xi,r)$ with a finite nonempty union of 
	open or closed cones of which $\xi$ is an apex.  The boundaries of these cones are 
	hyperplanes $H_{1}, \ldots,  H_{p}$ defined by 
	equations of the form $u_{h,k}(x) = b_{h,k} $ as in the proof of Proposition \ref{pr:neighborhood-1}. If their normals scan a subspace of dimension $p< n$ then the space of strata has dimension at least $n-p$: indeed along  all such directions, the expressions $u_{h,k}(x)$ are constant. 
	Therefore a point is singular only if these normals scan the space $\R^{n}$. There are finitely many  hyperplanes $H_i$, and  
	$n$ hyperplanes whose normals are linearly independent intersect in exactly one point, thus the number of singular points is finite
	and their intersections have  rational components. 
	
	The fact that the set of singular points in an \Ls-definable relation is countable is a direct consequence 
	of the following observation. Let $x\in [a_{1},a_{1}+1)\times \cdots \times [a_{n},a_{n}+1)$
	with $a_{1},\ldots, a_{n}\in \Z$. Then $x$ is $X$-singular if and only if it is $Y$-singular in the restriction 
	\[
	Y=X\cap ([a_{1}-1,a_{1}+1)\times \cdots \times [a_{n}-1,a_{n}+1))
	\]
	because for $r>0$ small enough we have $B(x,r)\cap X =  B(x,r)\cap Y$. By Lemma \ref{le:restriction-to-bounded-domain} each set $Y$ is \Ss-definable thus has finitely many singular points, and there is a countable number of such $Y$'s.
	 \end{proof}

\subsection{Strata in \Ls}

In Proposition \ref{pr:strata-subspace} we proved that the set of strata at a given point is a vector subspace. Here we show more precisely that this subspace has a set of generators consisting of vectors with rational coefficients. 

\begin{prop}
\label{pr:rational-basis}
Let  $X\subseteq \R^{n}$ be a \Ls-definable relation and $\xi \in \R^{n}$. There exists a set of linearly independent 
vectors with rational coefficients generating $\Strem(\xi)$.
 
\end{prop}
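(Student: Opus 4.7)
The plan is to reduce the statement to a question about an \Ss-definable set and then to extract rational generators via Corollary~\ref{cor:basicSsFacts}(2). First I would apply Corollary~\ref{cor:Ls-cones} at $\xi$ to obtain an \Ss-definable relation $K \subseteq \R^n$, expressed as a finite union of cones with apex the origin whose defining inequalities have rational coefficients, together with a radius $r_0 > 0$ such that $B(0, r_0) \cap (X - \xi) = B(0, r_0) \cap K$. A direct verification using this equality shows that, for any $r \leq r_0$, the stratum condition of Definition~\ref{de:strata}(2) for $X$ at $\xi$ with radius $r$ translates into the analogous condition for $K$ at $0$ with the same radius. Hence $\Strem{(\xi)}$ coincides with the subspace $T$ consisting of those $v \in \R^n$ that are ``$K$-strata at the origin''.

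Next I would observe that $T$ is \Ss-definable: since $K$ is \Ss-definable, the condition ``there exists $r>0$ such that $B(0,r) \cap L_v(K \cap B(0,r)) \subseteq K$'' is expressible by an \Ss-formula with $v$ as free variable. Combined with Proposition~\ref{pr:strata-subspace}, this presents $T$ as an \Ss-definable vector subspace of $\R^n$.

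It then suffices to prove the general claim that every \Ss-definable vector subspace $T$ of $\R^n$ admits a basis of rational vectors, which I would handle by induction on $d = \dim T$. The base case $d = 0$ is trivial. For $d \geq 1$, the set $T \setminus \{0\}$ is nonempty and \Ss-definable, so by Corollary~\ref{cor:basicSsFacts}(2) it contains a nonzero rational vector $v_1$. After permuting coordinates we may assume that the first coordinate of $v_1$ is nonzero, so that $v_1 \notin H$ where $H=\{x : x_1 = 0\}$ is a rational hyperplane. The decomposition $T = \R v_1 \oplus (T \cap H)$ then exhibits $T \cap H$ as an \Ss-definable subspace of dimension $d-1$, to which the induction hypothesis yields a rational basis; adjoining $v_1$ produces a rational basis of $T$.

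The main subtlety lies in the first step: one must verify that the radius $r_0$ provided by Corollary~\ref{cor:Ls-cones} is small enough to transfer the stratum condition (which is itself formulated inside a ball) exactly between $X$ at $\xi$ and $K$ at $0$. Once this reduction is in place, the remainder is a straightforward use of quantifier elimination for \Ss\ together with the existence of rational points in nonempty \Ss-definable sets.
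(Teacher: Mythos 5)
Your first step (localizing via Corollary~\ref{cor:Ls-cones} to a finite union $K$ of cones with apex the origin and identifying $\Strem_X(\xi)$ with $\Strem_K(0)$) is sound and matches the paper's reduction, and your final induction showing that every \Ss-definable vector subspace has a rational basis via Corollary~\ref{cor:basicSsFacts}(2) is correct. The gap is in the middle step: you assert that ``there exists $r>0$ such that $B(0,r)\cap L_v(K\cap B(0,r))\subseteq K$'' is expressible by an \Ss-formula with $v$ free. Written out, this condition quantifies over a point $y$ and a scalar $\alpha$ and refers to the point $y+\alpha v$; the product $\alpha v$ of a quantified scalar with the free vector variable $v$ is not a term of the language, and its graph is not semilinear, so it is not \Ss-definable. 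This is precisely the obstruction the paper points out before Lemma~\ref{le:equivalence-to-singularity} (``the natural way would be to use multiplication on reals, which is not \Ls-definable''), and it is the actual crux of the proposition: the whole difficulty is to get hold of $\Strem(0)$ by purely linear means.

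The gap is repairable, but it needs an idea you did not supply: the local neighborhoods of Definition~\ref{de:local-neighborhoods} are defined by \emph{homogeneous} inequalities $u(x)\triangleleft 0$, so $K$ is positively homogeneous ($\lambda K\subseteq K$ for $\lambda>0$). Using this one checks that $v\in\Strem_K(0)$ if and only if $\forall y\,\bigl(y\in K\rightarrow (y+v\in K\wedge y-v\in K)\bigr)$ (scale $y$ into a small ball, apply the stratum condition, scale back), and the latter \emph{is} an \Ss-formula in $v$; with that substitute your argument goes through. The paper takes a different route that bypasses definability of the strata space altogether: it writes $K$ as a union of cells over a \emph{minimal} family of rational hyperplanes $H_1,\dots,H_p$ and proves, via Lemma~\ref{le:gp-far} and the minimality, that $\Strem(0)=\bigcap_i H_i$, which is rational by construction. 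Either repair is fine, but as written your proof assumes away exactly the step that needs work.
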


\begin{proof}

Because of Lemma 	\ref{le:Ss-Ls-neighborhoods} the collection of local neighborhoods in an \Ls-definable relation 
is identical to the  collection of local neighborhoods in some \Ss-definable relation, thus it suffices to treat the case of 
\Ss-definable relations.

By Proposition
	\ref{pr:neighborhood-1} 
	 for all points $\xi\in \R^{n}$ 
	there exists  a  \Ss-formula $\theta$ 
	defining a finite union of cones with apex $0$ 
		 such that 
	for  some real $s>0$ and for  all $t \in \R^n$ 
	the following condition  is satisfied.
	\begin{equation}
	\label{eq:local-neighborhood-bis}
	(\theta(t) \wedge |t|<s) \leftrightarrow  (\phi(\xi +t) \wedge |t|<s)
	\end{equation}
We give the proof for the case $\xi=0$. The argument can easily be generalized to any $\xi \in \R^n$ using  Expression (\ref{eq:local-neighborhood-bis}).
Given an hyperplane $H$ defined by a linear equation $u(x)=0$, 
we set
\[
H^{\varepsilon} = 
\left\{
\begin{array}{lll}
\{x\mid u(x)=0\} &\text{ if }  \varepsilon &\text{is the symbol } =\\
\{x\mid u(x)<0\} &\text{ if }  \varepsilon &\text{is the symbol } <\\
\{x\mid u(x)>0\} &\text{ if }  \varepsilon  & \text{is the symbol } >
\\
\end{array}
\right.
\]

\begin{lem} 
\label{le:gp-far}
Let $y,z \in \R^n$ and let $H_1,\dots,H_q$ be  hyperplanes in $\R^n$ containing the origin. For all nonzero vectors $v \in \R^n$ which are parallel with  no $H_i$, there exists
 $\alpha \in \R$ and $\varepsilon_1,\dots,\varepsilon_q$ such that the  points $y'=y+\alpha v$ and  $z'=z+\alpha v$ belong to
 $\bigcap_{1 \leq i \leq q} H_i^{\varepsilon_i}$. 

\end{lem}

 \begin{proof} 
 Since $v$ is not parallel with $H_{i}$, for sufficiently large $\alpha_{i}\geq 0$ the points 
 $y+\alpha_{i} v$ and  $z+\alpha_{i} v$ belong to the same halfspace defined by $H_{i}$. It suffices to set  
  $\alpha=\max(\alpha_i)$. 
\end{proof} 

We return to the proof of Proposition \ref{pr:rational-basis}. Expression  (\ref{eq:local-neighborhood-bis}) 
 can be viewed as saying that 
 the relation $Y=\{x \in \R^n \mid \theta(x) \}$  
 satisfies the following condition: there exist $q$ rational  hyperplanes  $H_1,\dots,H_{q}$  such that  
 $Y$ is a finite  union of subsets of the form 
 $
 \bigcap_{1 \leq i \leq q} H_i^{\varepsilon_i}
 $ 
 with  $\varepsilon_i \in \{<,>,=\}$. Among all possible expressions defining $Y$ and involving the hyperplanes 
 $H_1, \dots, H_q$,  choose one where the minimum subset of such  hyperplanes  occurs. Rename if necessary the hyperplanes
as  $H_1, \dots, H_{p}$ with $p\leq q$.  We want to show that
 \[
 \Str(0)=\bigcap_{1 \leq i \leq p} H_i.
 \] 
Clearly, every vector $v$ parallel with all $H_{i}$ is a  stratum for all cones so that $\Str(0)\supseteq \bigcap_{1 \leq i \leq p} H_i$
holds. We prove the opposite inclusion. If  $ \Str(0)$ is trivial or $p=0$ 
 we are done. We assume by way of contradiction that for some vector $v\in \Str(0)$, the  subset 
$J\subseteq \{1, \ldots, p\}$ of indices $j$ such that $v$ belongs to $H_{j}$ is proper.   If  $J=\emptyset$,  by Lemma 
 \ref{le:gp-far} for all points $y,z\in \R^{n}$ there exist  $\alpha \in \R$ and $\varepsilon_1,\dots,\varepsilon_{p}$ such that the  points $y'=y+\alpha v$ and  $z'=z+\alpha v$ belong to
 $\bigcap_{1 \leq i \leq p} H_i^{\varepsilon_i}$. This implies $y'\in Y \leftrightarrow z'\in Y$, and since $v$ is a stratum, we get
 \[
 y\in Y \leftrightarrow y'\in Y \leftrightarrow z'\in Y  \leftrightarrow z\in Y
 \]
 thus $Y=\R^{n}$ which is defined by $\theta(x)=\texttt{true}$ and violates the minimality of $p$.

Now we deal with    $J\not=\emptyset$. By possibly renaming the hyperplanes we assume $J=\{r+1, \ldots, p\}$ with $r\geq 1$. We will show that the   hyperplanes $H_{1}, \ldots, H_{r} $  are useless,
i.e., that  $Y$ can be written as a finite union of subsets of the form
 $\bigcap_{r<i\leq p} H_i^{\varepsilon_i}$. Given a subset
  $A= \bigcap_{r<i\leq p} H_i^{\varepsilon_i}$ we show that for all 
   points $y,z \in A$ we have  $y \in Y \leftrightarrow z \in Y$.  We apply again Lemma  \ref{le:gp-far}: there exist 
    $\alpha \in \R$ and $\varepsilon_{1}, \ldots, \varepsilon_{r}$  such that $y'=y+\alpha v$ and $z'=z+\alpha v$ 
    belong to $\bigcap_{1 \leq i \leq r} H_i^{\varepsilon_i}$.  
    Since  $y,z \in  A$ and $v \in \bigcap_{r < i \leq p} H_i$, we get  $y',z' \in A$ thus 
$y',z' \in \bigcap_{1 \leq i \leq p} H_i^{\varepsilon_i}$. By definition of $\theta$ we get 
$y' \in Y \leftrightarrow z' \in Y$, and since  $v$ is a stratum we obtain $y \in Y \leftrightarrow z \in Y$.
This contradicts the minimality of $p$.
\end{proof}

\subsection{Application: expressing the singularity of a point in a \Ls-definable relation}

The singularity of a point $x$ is defined as the property that no intersection of $X$
with a ball centered at $x$ 
is  a union
of  lines parallel with a given nonzero direction. This property is not directly expressible within \Ls\  since the natural way would be to 
use  multiplication  on reals, which is not \Ls-definable. 
In order to be able to express the property, we give an alternative 
characterization of singularity which relies on the assumption  that $X$ is \Ls-definable.

\begin{lem}
	\label{le:equivalence-to-singularity}
	Given an \Ls-definable relation $X \subseteq \R^n$  and $x\in \R^{n}$ 
	the following two conditions are equivalent:
	\begin{enumerate}
		\item $x$ is singular.
		\item  for all $r>0$,  there exists  $s>0$ such that for   
		all nonzero vectors $v$ of norm less than $s$, there exist two points $y,z\in B(x,r)$ 
		such that $y=z+ v$ and $y\in X \leftrightarrow z\not\in X$.
		
	\end{enumerate}

\end{lem}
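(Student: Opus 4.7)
For the direction $(2) \Rightarrow (1)$ I would argue by contraposition. If $x$ is not singular then there exists a nonzero $X$-stratum $v_0$ at $x$, and Proposition~\ref{pr:uniform-radius} provides a safe radius $\rho > 0$ for which any two points of $B(x,\rho)$ connected by a small multiple of $v_0$ have the same $X$-membership. Setting $r = \rho/2$, for any $s>0$ I pick $v = \lambda v_0$ with $0 < |v| < \min(s,\rho/2)$: every pair $y,z \in B(x,r)$ with $y = z + v$ lies in $B(x,\rho)$ on a common line parallel to $v_0$, so by the stratum property $y \in X \Leftrightarrow z \in X$. Hence no witness exists for this $r$ and (2) fails.

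For the direction $(1) \Rightarrow (2)$ I assume $x$ is singular and, after translation, take $x = 0$. By Corollary~\ref{cor:Ls-cones} there is $\eta > 0$ such that $X \cap B(0,\eta) = X_0 \cap B(0,\eta)$ for some finite union $X_0$ of cones with apex $0$. Since $X_0$ is invariant under positive scaling, it suffices to establish (2) for one value $r \le \eta$ with $X$ replaced by $X_0$: witnesses for other values of $r$ are obtained by dilation of $(y,z,v)$. By Proposition~\ref{pr:rational-basis} and its proof, $X_0$ can be described by a minimal family of rational hyperplanes $H_i = \{a_i = 0\}$, $1 \le i \le p$, with $\bigcap_i H_i = \text{Str}(0) = \{0\}$ by singularity, and $X_0$ is a Boolean combination of the strict half-spaces defined by the forms $a_i$. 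The minimality of $\{H_1,\dots,H_p\}$ implies that for each $i$ there exist two open cones $C = \bigcap_j H_j^{\epsilon_j}$ and $C' = \bigcap_j H_j^{\epsilon'_j}$ of the arrangement whose strict sign vectors $\epsilon,\epsilon' \in \{{<},{>}\}^p$ differ only at position $i$ and whose memberships in $X_0$ are opposite.

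Given a nonzero $v$ of small norm, I pick an index $i$ with $a_i(v) \ne 0$ (such an $i$ exists because $\bigcap_i H_i = \{0\}$) and orient the labels so that moving by $-v$ flips the $i$th sign from $\epsilon_i$ to $\epsilon'_i$. I choose $q_0 \in C$ close to the face $\overline{C} \cap H_i$, by taking a representative in the relative interior of that face and perturbing it slightly into $C$; then $|a_i(q_0)|$ is arbitrarily small while the $|a_j(q_0)|$ for $j \ne i$ remain bounded below. Setting $y = \mu q_0$ for a suitable small $\mu > 0$ comparable with $|v|$, a routine check of sign patterns gives $y \in C$ and $z := y - v \in C'$; the key numerical condition reduces to $|a_j(u)|/|a_i(u)| < |a_j(q_0)|/|a_i(q_0)|$ for every $j \ne i$, where $u := v/|v|$, and this is achieved by the choice of $q_0$. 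Direct estimates then yield $|y|, |z| = O(|v|)$, so $y, z \in B(0,r)$ whenever $|v|$ is sufficiently small.

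The main obstacle is obtaining uniformity of the threshold $s$ over all directions of $v$. The open sets $U_i^\pm := \{u \in S^{n-1} : \pm a_i(u) > 0\}$ cover the compact unit sphere $S^{n-1}$ precisely because $\bigcap_i H_i = \{0\}$. On any compact subset $K \subseteq U_i^\pm$ (for instance $\{u : \pm a_i(u) \ge \delta\}$ with $\delta > 0$) the ratios $|a_j(u)|/|a_i(u)|$ are uniformly bounded above, so a single choice of $q_0$ makes the above construction succeed for every direction in $K$ with a threshold $s_K > 0$. Extracting a finite closed subcover of $S^{n-1}$ by sets of this form and taking the minimum of the finitely many $s_K$ produces the desired uniform $s$.
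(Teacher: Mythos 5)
Your $(2)\Rightarrow(1)$ direction is correct and is essentially the paper's own argument (contraposition via a nonzero stratum and a safe radius). The gap is in $(1)\Rightarrow(2)$: the structural claim you extract from minimality is false. Minimality of the family $H_1,\dots,H_p$ does \emph{not} imply that for each $i$ there are two open chambers of the arrangement, with strict sign vectors differing only at position $i$, whose memberships in $X_0$ are opposite. The local neighborhood $X_0$ is a union of cells $\bigcap_j H_j^{\varepsilon_j}$ with $\varepsilon_j\in\{<,>,=\}$, and it may consist entirely (or partly) of lower-dimensional faces. Take $n=2$ and $X$ locally equal to $\{0\}$, or to the union of the two coordinate axes: the origin is singular and the minimal family is $\{x_1=0,\,x_2=0\}$, yet every open chamber (quadrant) lies outside $X_0$, so no pair of adjacent open chambers has opposite memberships. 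Even when full-dimensional pieces are present the claim can fail, e.g. $X_0=\{x_2<0\}\cup\{x_2=0,\ x_1\geq 0\}$: the hyperplane $x_1=0$ is indispensable only because it carves out the boundary ray, while adjacent open chambers across it always have equal membership. In all these cases the witnesses $y,z$ demanded by condition (2) must involve points lying on lower-dimensional faces, which your construction --- $y$ in an open chamber $C$ near a facet of $H_i$ and $z=y-v$ in the adjacent open chamber $C'$ --- can never produce. The scaling estimates and the sphere-compactness uniformization all rest on this step, so the proof of $(1)\Rightarrow(2)$ does not go through as written.

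For comparison, the paper's proof avoids chambers altogether: for each given direction $v$, singularity of the point gives a line parallel to $v$ inside $B(0,r)$ meeting both $X$ and its complement (using that the union of cones is invariant under positive scalings), and along such a line the trace of $X$ consists of at most $2k+1$ alternating segments, where $k$ bounds the number of cones; segments of length $0$ are explicitly allowed, which is exactly how lower-dimensional faces are handled. A pigeonhole argument on these segments then yields the uniform threshold $s=r/(2k+1)$, valid simultaneously for all directions, so no compactness argument over the unit sphere is needed. If you want to repair your approach, you would have to replace the open-chamber adjacency claim by an analysis that allows the witnesses to sit on cells of any dimension, at which point you are essentially reconstructing the paper's segment-counting argument.
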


Observe that when $X$ is not \Ls-definable,  the two assertions are no longer equivalent. E.g., 
$\Q$ has only singular points but condition 2 holds for no point in $\R$.

\begin{proof}
	In order to prove the equivalence of the two conditions, we write them formally
	\[
	\begin{array}{rl}
	(1) & \forall  r>0 \ \forall v\in \R^{n}\setminus \{0\}  \ \exists s>0  \  \exists y,z\ (y,z\in B(x,r) \wedge  y=sv +z  \wedge 
	(y\in X \leftrightarrow z\not\in X))\\
	(2) & \forall r>0  \  \exists s>0\  \forall v \ ( 0<|v|< s \rightarrow 	\exists  y,z\ ( y,z\in B(x,r) \wedge y=v +z
	\wedge (y\in X \leftrightarrow z\not\in X))
	\end{array}
	\]

	The implication  $(2)\rightarrow (1)$ is shown by contraposition, i.e.,  $\neg (1)\rightarrow \neg (2)$
	and is a simple application of the rule $\exists u \forall v \psi(u,v) \rightarrow \forall v  \exists u \psi(u,v) $.
	Indeed, if $v_{0}$ satisfies $\neg (1)$ then 
		 for every $s$ the condition $\neg (2)$ is satisfied with any vector $v$ colinear with $v_{0}$ of modulus 
	less than $s$.

	Now we prove $(1)\rightarrow (2)$. We must prove that if the point $x$ is singular, then for all $r>0$ there exists $s>0$ such that for every vector $|v|\leq s$ 
	there exist two points  $y,z$ satisfying
	\begin{equation}
	\label{eq:ii}
	y,z\in B(x,r) \wedge y=z+v \wedge (y\in X \leftrightarrow z\not\in X).
	\end{equation}

	Let $k$ be the number of  disjuncts in the formula defining $X$ (cf. expression (\ref{eq:phi})), which is also an upper bound 
	on the number of cones composing the local neighborhoods at a given point. In order to simplify the notation
	we also assume that 
	the point $x$ is the origin.
	Also, it is clear that condition $(2)$ is satisfied if and only if it is satisfied for $r$ small enough which means that 
	we may assume that the following holds
	\begin{equation}
	\label{eq:gamma}
	B(0,r)\cap X = B(0,r)\cap \+C 
	\end{equation}
	where $\+C$ is the union of the cones at $0$ as defined in Corollary \ref{cor:Ls-cones}. 
	We claim that expression (\ref{eq:ii}) holds when $s$ is set to  ${ r \over 2k+1}$.
	Since $0$ is singular,  for every direction $u$ there exists a line
	$L_{u}(w)$ with $w\in B(0,r)\cap  \+C$ which contains points in  $X$ and points in the complement of $X$, that is 
	\[
	\emptyset \subsetneq  B(0,r)\cap  \+C\cap L_{u}(w)\subsetneq B(0,r)\cap  L_{u}(w).
	\]
	Because $\+C$ is closed 
	under the mappings $z \mapsto  \alpha z$ for all $\alpha>0$, for all $0<\beta\leq 1$ we have
	\[
	\emptyset \subsetneq  B(0,r)\cap  \+C\cap L_{u}(\beta w)\subsetneq B(0,r)\cap  L_{u}(\beta w).
	\]
	By choosing $\beta $ small enough if necessary, it is always possible to assume 
	that the length of $B(0,r)\cap  L_{u}(\beta w)$ equals some $t \geq r$. 
	The intersection of $L_{u}(\beta w)$ with 
	$X$ inside  $B(0,r)$ defines 
	$p$ segments, some possibly of length $0$,  successively included in and disjoint from the cones in $\+C$, with $2\leq p\leq 2k+1$. 
	Let $x_{0},  x_{1},  \ldots, x_{p}$ be  the endpoints of these segments in the order they appear along the line,
	with $x_{0}$ and $x_{p}$ being the intersections with the frontier of the ball $B(0,r)$.
	Their projections 
	over any of the axes of $\R^{n}$ for which the coordinate of  $u$ is maximal determines a
	nondecreasing  sequence of reals $a_{0} \leq   a_{1} \leq \cdots \leq  a_{p}$ such that $a_p-a_0=t$. 
		If $p=2$  then either $a_{1}- a_{0}\geq  \frac{t}{2}\geq s$  or  $a_{2}- a_{1}\geq  \frac{t}{2}\geq s$
	and then 
	for all $s'<s$ we can choose two points  $y \in (x_{0}, x_{1})$  and $z \in (x_{1}, x_{2})$ such that $|y-z|=s'$.
	Now assume $p>2$. We have $a_p-a_0=t\geq r$ thus there exists $0 \leq i<p$ such that $a_{i+1}-a_{i}\geq \frac{r}{p} \geq \frac{r}{2k+1}= s$. If $i<p-1$ and $x_{i+1}=x_{i+2}$, i.e., 
	$a_{i+1}=a_{i+2}$ then for all $s'<a_{i+1}-a_{i}$, and hence  for every $s'<s$,  there exists a point $z\in (x_{i},x_{i+1})$ such that 
	$|x_{i+1}-z|=s'$, and we can set $y=x_{i+1}$. The case where $i > 0$ and $x_{i-1}=x_{i}$ is similar. In all other cases, for all $s'<s$
	we can find $z\in (x_{i}, x_{i+1})$ and $y\in (x_{i+1}, x_{i+2})$ such that $|y-z|={s'}$.
\end{proof}

%%%%%%%%%%%%%%%%%%%%%%%%%%%%%%
%%%%%%%%%%%%%%%%%%%%%%%%%%%%%%
\section{Relations between  neighborhoods}
%%%%%%%%%%%%%%%%%%%%%%%%%%%%%%
%%%%%%%%%%%%%%%%%%%%%%%%%%%%%%
\label{sec:neighborhoods}
We illustrate the purpose of this section with a very simple example. We start with a cube sitting in the horizontal plane 
with only one face visible. The rules of the game is that we are given a finite collection of vectors such that 
for all $6$ faces and all $12$ edges it is possible to choose vectors  that generate the vector subspace  
of the smallest affine subspace in which they live. Let the point at the center of the upper face move towards the observer (assuming that this direction 
belongs to the initial collection). It will  eventually hit the upper edge of the visible face. Now let the point move to the left along the
edge (this  direction necessarily exists because of the assumption on the collection). The point will hit the upper left vertex. 
Consequently, in the trajectory the point visits three different $\sim$-classes: that of the points on the open upper face, that of the 
points on the open edge and that of the upper left vertex. Here we investigate the adjacency of such equivalence classes having decreasing 
dimensions. 
Observe that another finite collection of vectors may have moved the point from the center of the upper face directly to the 
upper left vertex.

\medskip  Since two $\sim$-equivalent points  have  the same subspace of strata, 
we let $\Str(E)$ denote the common subspace of all points of a $\sim$-class $E$. Similarly, 
dim($E$) is the common dimension of all the points in $E$.

%%%%%%%%%%%%%%%%%%%%%%%%%%%%%%
\subsection{Adjacency}
%%%%%%%%%%%%%%%%%%%%%%%%%%%%%%

 Consider the backwards trajectory on the cube as discussed above:
 the point passes from 
an $\sim$-equivalence class of low dimension into an $\sim$-equivalence class of higher dimension
along a direction that is proper to this latter class. This leads to  the notion of adjacency.
For technical reasons we allow a class to be adjacent to itself.

\begin{defi}
	Let  $E$  be a nonsingular  $\sim$- class   and let  $v$ be one of its nonzero strata.
	Given a $\sim$- class $F$,  a point $y\in F$ is $v-$\emph{adjacent} 
	\emph{to}  $E$ if there exists $\epsilon>0$ such that for all $0< \alpha\leq \epsilon$ we have $y+\alpha v\in E$.
	
	A $\sim$- class $F$  is $v$-\emph{adjacent} 
	\emph{to}  $E$ if there exists a point $y\in F$  which is $v$-adjacent with  $E$.
	
\end{defi}

\begin{lem}
	\label{le:congruence} 
If the $\sim$- class $F$  is $v$-adjacent to  the $\sim$- class $E$, all elements of $F$ are $v$-adjacent to  $E$. 

There exists at most one 
	$\sim$-class $E$ such that  $F$  is $v$-adjacent to  $E$.  
\end{lem}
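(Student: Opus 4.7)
I will prove the two claims in order, using essentially the translation-invariance of $\sim$ together with the disjointness of equivalence classes.

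For the first claim, suppose $F$ is $v$-adjacent to $E$, witnessed by some $y\in F$ and $\epsilon>0$ with $y+\alpha v\in E$ for all $0<\alpha\le\epsilon$. Given an arbitrary $y'\in F$, I need to produce a threshold $\epsilon'>0$ such that $y'+\alpha v\in E$ whenever $0<\alpha\le\epsilon'$. Since $y\sim y'$, Definition \ref{de:same-neighborhood} yields an $r>0$ such that the translation $\tau:u\mapsto u+(y'-y)$ is a bijection between $B(y,r)\cap X$ and $B(y',r)\cap X$. The key observation is that for every $\alpha$ with $|\alpha v|<r$ and every $s>0$ small enough that $B(y+\alpha v,s)\subseteq B(y,r)$ (equivalently $B(y'+\alpha v,s)\subseteq B(y',r)$), the restriction of $\tau$ to $B(y+\alpha v,s)\cap X$ is a bijection onto $B(y'+\alpha v,s)\cap X$; this witnesses $y+\alpha v\sim y'+\alpha v$. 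Choosing $\epsilon'=\min(\epsilon,r/(1+|v|))$ (or any threshold that keeps everything inside the balls) then forces $y'+\alpha v$ to lie in the same class as $y+\alpha v$, namely $E$.

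For the second claim, suppose $F$ is $v$-adjacent to both $E_1$ and $E_2$. Pick any $y\in F$; by the first claim, $y$ itself is $v$-adjacent to $E_1$ and to $E_2$, so there exist $\epsilon_1,\epsilon_2>0$ with $y+\alpha v\in E_1$ for all $0<\alpha\le\epsilon_1$ and $y+\alpha v\in E_2$ for all $0<\alpha\le\epsilon_2$. Taking any $\alpha$ with $0<\alpha\le\min(\epsilon_1,\epsilon_2)$ yields a point in $E_1\cap E_2$, and since $\sim$-classes are either equal or disjoint, $E_1=E_2$.

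The only step requiring care is the ball-shrinking argument in the first claim: one must check that $\tau$ really does restrict to a bijection between the smaller balls, which follows from the fact that $\tau$ is a pure translation (so it commutes with translating the centre) and preserves distances. Everything else is bookkeeping on thresholds.
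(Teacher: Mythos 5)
Your proof is correct and follows essentially the same route as the paper: the first claim is obtained by restricting the translation witnessing $y\sim y'$ to smaller balls around the translated points $y+\alpha v$ and $y'+\alpha v$, and the second claim follows by applying the first to a common witness and using disjointness of $\sim$-classes. No gaps.
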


\begin{proof}
	We must show that if $y$ and $z$ belong to $F$ and if $y$ is $v$-adjacent to $E$ 
	then there exists a real $\alpha >0$ such that for all $0<\beta < \alpha$
	we have $y + \beta v \sim  z+ \beta v$.
	Indeed, by definition of $\sim$ there exists $r$ such that the translation $t \mapsto t+z-y$ maps $B(y,r) \cap X$
	onto $B(z,r) \cap X$. For all $\alpha $ satisfying $|\alpha v|  <r $ consider any $s$ satisfying 
	$|\alpha v|  + s <r $. Then the above translation maps $B(y+\alpha v,s) \cap X$
	onto $B(z+\alpha v,s) \cap X$, i.e., $y+\alpha v \sim z+\alpha v$ .
	
	The second claim easily follows from the very definition of $v-$adjacency.
\end{proof}

Observe that for any nonsingular  $\sim$-class $E$ and one of its nonzero strata $v$ there always exists 
a $\sim$-class $v$-adjacent to $E$, namely $E$ itself, 
but  there might be different classes $v$-adjacent to $E$.

\begin{exa} Let $X$ be the union of the two axes of the 2-dimensional plane and $v=(1,1)$
which we assume is one of the chosen  strata of the $\sim$-class $\{(x,y) \mid x\not=0, y\not=0\}$.
	The different classes are: the complement of $X$, the origin $\{0\}$ which is a singular point,
	the horizontal axis  deprived of the origin, and the vertical 
	axis  deprived of the origin. The two latter $\sim$-classes are both $v-$adjacent to  the class $\R^{2}\setminus X$. 
\end{exa}

%%%%%%%%%%%%%%%%%%%%%%%%%%%%%%
\subsection{Intersection of a line and equivalence classes}
%%%%%%%%%%%%%%%%%%%%%%%%%%%%%%

In this section we describe the intersection of a $\sim$-class $E$ with a line parallel to some  $v \in \Str(E)$.

\medskip With the example of the cube discussed at the beginning of Section \ref{sec:neighborhoods}, a line passing 
through a point $x$ on the upper face  along any of the directions of $\Str(x)$ of dimension $2$ 
intersects an open edge or a vertex  at point $y$. In the former case dim$(y)=1$  and in the latter 
dim$(y)=0$, and in both cases $\Str(y)\subsetneq \Str(x)$.

\begin{lem}\label{le:adherence}
	Let $X \subseteq \R^n$, $E,F$ be two $\sim$-classes, and $v \in \Strem(E)\setminus \{0\}$. Let $y$ be an element of $F$ which is  
	topologically adherent to $L_y(v) \cap E$. Then 
	$\Strem(F) \subseteq  \Strem(E)$.
	
	If  $E,F$ are  different, then
	$\Strem(F) \subseteq  \Strem(E)\setminus \{v\}$ and therefore $\text{dim}(F) <  \text{dim}(E)$ .

\end{lem}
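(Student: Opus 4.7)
The plan is to unpack the adherence hypothesis into a sequence of points on the line $L_y(v)$ that witness both conclusions. Since $y$ is topologically adherent to $L_y(v)\cap E$, there exists a sequence of nonzero reals $\alpha_n \to 0$ with $y+\alpha_n v \in E$ for every $n$.

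To prove the inclusion $\Str(F)\subseteq \Str(E)$, I would invoke Proposition \ref{pr:uniform-radius} to pick a safe radius $r>0$ for $y$. For $n$ large enough, $|\alpha_n v|<r$, so the point $z_n := y+\alpha_n v$ lies in $B(y,r)$. Lemma \ref{le:strx-subset-stry} then yields $\Str(y)\subseteq \Str(z_n)$. Since $y\in F$ and $z_n\in E$, this reads $\Str(F)\subseteq \Str(E)$, giving the first claim.

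For the strict inclusion when $E\neq F$, I argue by contradiction and assume $v \in \Str(F)=\Str(y)$. Using Proposition \ref{pr:uniform-radius} again, fix a safe radius $r>0$ for $y$ that is simultaneously valid for both $v$ and any finite generating set of $\Str(y)$. For $n$ large enough, $|\alpha_n v|<r/2$, and I claim that the translation $\tau(u):=u+\alpha_n v$ restricts to a bijection between $B(y,r/2)\cap X$ and $B(z_n,r/2)\cap X$. The inclusion of balls is immediate from the triangle inequality; the key point is that for any $u\in B(y,r/2)$, the segment $\{u+\beta v : 0\leq \beta\leq \alpha_n\}$ stays in $B(y,r)$, so the stratum property of $v$ at $y$ ensures $u\in X \Leftrightarrow u+\alpha_n v \in X$. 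The inverse translation works symmetrically, so $\tau$ is the sought bijection, witnessing $y\sim z_n$. This contradicts $y\in F\neq E\ni z_n$. Hence $v\notin \Str(F)$, which together with $\Str(F)\subseteq \Str(E)$ forces the strict dimensional drop $\dim(F)<\dim(E)$.

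The proof is essentially a bookkeeping exercise once the safe radius from Proposition \ref{pr:uniform-radius} is available, and the only mildly delicate point is the last one: arguing that the shift by $\alpha_n v$ is a genuine bijection of the two intersections $B(\cdot,r/2)\cap X$, for which one has to verify that the intermediate segment stays inside the ball $B(y,r)$ on which $v$ acts as a stratum. Once that is settled, both halves of the lemma follow cleanly, and the statement $\Str(F)\subseteq \Str(E)\setminus\{v\}$ should be read as ``$\Str(F)\subseteq \Str(E)$ and $v\notin \Str(F)$,'' which is precisely what the two steps above establish.
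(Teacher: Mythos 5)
Your proof is correct and takes essentially the same route as the paper: both halves hinge on picking an adherent point $z_n=y+\alpha_n v$ of $E$ inside a safe ball around $y$, with Lemma \ref{le:strx-subset-stry} yielding $\text{Str}(F)\subseteq \text{Str}(E)$. For $v\notin\text{Str}(F)$ you argue contrapositively (assuming $v\in\text{Str}(y)$ and exhibiting the translation bijection forcing $y\sim z_n$, contradicting $E\neq F$), while the paper runs the same argument directly, extracting from $y\not\sim y'$ a pair of points in $B(y,r)$ aligned with $v$ that disagree on membership in $X$; these are the same argument read in opposite directions.
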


\begin{proof}
	If $E=F$ then clearly $\Str(F) =  \Str(E)$. Thus it suffices to consider the case $F \not= E$. 	
	By hypothesis $B(y,r)$ intersects $E$ for every $r>0$, which yields $\Str(F) \subseteq \Str(E)$
	by Lemma \ref{le:strx-subset-stry}. It remains to prove that $v \not\in \Str(F)$. We show that for every $r>0$ we can find in $B(y,r)$ two elements $z_1,z_2$ such that $z_1 \in X \leftrightarrow z_2 \not\in X$ and $z_1-z_2$ is parallel to $v$.

	Let $r$ be a safe radius for $y$. By hypothesis there exists $y' \in B(y,r) \cap L_y(v) \cap E$. Let $s>0$ be a safe radius for $y'$
	such that $B(y',s)\subseteq B(y,r)$.
	We have $y \not\sim y'$, thus there exists $u$ with $|u|<s$ such that $y+u \in X \leftrightarrow y'+u \not\in X$. We set $z_1=y+u$ and $z_2=y'+u$. Both $z_1$ and $z_2$ belong to $B(y,r)$ by our hypothesis on $u,s$ and $y$. Moreover $z_1-z_2=y-y'$ and $y' \in L_v(y)$ thus $z_1-z_2$ is parallel to $v$. 
\end{proof}

\begin{lem}\label{lem:open-bis}
	Let $X \subseteq \R^n$, $x \in \R^n$ a nonsingular point and $v \in \Strem{(x)}\setminus \{0\}$. 
	There exist $y,z \in L_{v}(x)$ such that $x \in (y,z)$ and every element $w$ of $(y,z)$ satisfies $w \sim x$. 
	
\end{lem}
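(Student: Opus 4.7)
The plan is to leverage Proposition \ref{pr:uniform-radius}: pick a safe radius $r > 0$ for $x$, so that $B(x,r) \cap L_{u}(X \cap B(x,r)) \subseteq X$ for every $u \in \Str(x)$. Inside $B(x,r)$ the relation $X$ is saturated under translations by any sufficiently short multiple of the stratum $v$, and I will transfer this saturation into an equivalence $w \sim x$ whenever $w$ lies close enough to $x$ on $L_v(x)$.

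Concretely, I choose $\alpha > 0$ with $\alpha |v| < r/2$ and set $y = x - \alpha v$ and $z = x + \alpha v$, so that $x \in (y,z)$. For an arbitrary $w \in (y,z)$, I write $w = x + \beta v$ with $|\beta| < \alpha$; in particular $|\beta v| < r/2$. Setting $s = r/2$, the $L_\infty$ triangle inequality gives both $B(x,s) \subseteq B(x,r)$ and $B(w,s) \subseteq B(x,r)$. The goal is then to show that $\tau : u \mapsto u + \beta v$ restricts to a bijection between $B(x,s) \cap X$ and $B(w,s) \cap X$, which is exactly the condition $w \sim x$ in the sense of Definition \ref{de:same-neighborhood}.

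For the forward direction, take $u \in B(x,s) \cap X$. Then $u \in X \cap B(x,r)$, and the safe-radius property applied to $v \in \Str(x)$ forces $u + \beta v \in X$ as long as $u + \beta v$ remains in $B(x,r)$. Since $u + \beta v \in B(w,s) \subseteq B(x,r)$, we conclude $u + \beta v \in B(w,s) \cap X$. The reverse direction is symmetric: by Proposition \ref{pr:strata-subspace}, $\Str(x)$ is a vector subspace, so $-v$ is also a stratum at $x$; given $u' \in B(w,s) \cap X \subseteq X \cap B(x,r)$, the safe-radius property yields $u' - \beta v \in X$, and $u' - \beta v \in B(x,s) \subseteq B(x,r)$.

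No serious obstacle arises: the statement is essentially a direct unpacking of Proposition \ref{pr:uniform-radius}. The only point requiring a bit of care is ensuring that for every $w \in (y,z)$ both balls $B(x,s)$ and $B(w,s)$ fit simultaneously inside $B(x,r)$, so that the saturation under translation by $\beta v$ can be invoked in both directions; this is why we enforce $\alpha |v| < r/2$ when choosing $y$ and $z$.
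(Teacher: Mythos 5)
Your proof is correct and follows essentially the same route as the paper's: fix a safe radius $r$ for $x$, take a short open segment of $L_v(x)$ inside $B(x,r)$, and show that the translation by $w-x$ (a multiple of $v$) is a bijection between $B(x,s)\cap X$ and $B(w,s)\cap X$ using the saturation of $X\cap B(x,r)$ along lines parallel to $v$. The only cosmetic difference is that you fix a uniform radius $s=r/2$ by shortening the segment, whereas the paper chooses, for each $w$, any $t>0$ with $B(w,t)\subseteq B(x,r)$; also, invoking $-v\in\Str(x)$ is unnecessary since $L_v$ is a full line, but it is harmless.
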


\begin{proof}
	Indeed, let $r$ be a safe radius for $x$ and $(y,z)$ be an open segment on $L_{v}(x)\cap B(x,r)$ containing $x$. 
	Let $w\in (y,z)$ and  let $t>0$ be any real such that $B(w,t) \subseteq B(x,r)$. 
	We show that the translation 
	$u\mapsto u+w-x$ defines a one-to-one correspondence  from $B(x,t)\cap X$   to $B(w,t)\cap X$. Indeed, 
	let $z'\in B(x,t)\cap X$.
	Since $B(x,r)\cap X$  is a union of lines parallel with $v$, 
	we have $L_{v}(z') \cap  B(w,t) \subseteq   L_{v}(z') \cap  B(x,r) \subseteq X$ 
	implying $z'+ w-x\in B(w,t)\cap X$. Conversely,  for every element
	$u\in B(w,t)\cap X$
	we have $u-  w+x\in B(x,t)\cap X$. 
\end{proof}

Lemmas \ref{le:adherence} and \ref{lem:open-bis} lead to the following.

\begin{cor}\label{cor:open-ter}
	Let $X \subseteq \R^n$, $x \in \R^n$ a nonsingular point, $E$ its $\sim$-class  and let $v \in \Strem(x)\setminus \{0\}$. The set  $L_{v}(x)\cap E$  is a union of disjoint  open segments  
	(possibly unbounded in one or both directions) of   $L_{v}(x)$, i.e., 
	of the form $(y-\alpha v , y+ \beta v)$ with $0< \alpha,\beta\leq \infty$ and  $y\in E$. 
	
	If $\alpha < \infty$ (resp. $\beta < \infty$) then the point $y-\alpha v$ (resp. $y+ \beta v$)
	belongs to a $\sim$-class $F\not=E$ where $F$ is $v$-adjacent   (resp. $(-v)$-adjacent )  to $E$, and 
	$\text{dim} (F)< \text{dim} (E)$. 
\end{cor}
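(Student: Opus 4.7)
The plan is to derive Part (1) directly from Lemma \ref{lem:open-bis} and Part (2) by combining the maximality of the resulting segments with Lemma \ref{le:adherence}.

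For Part (1), I would pick any $w \in L_{v}(x) \cap E$. Since $w \sim x$, Lemma \ref{le:sim-and-str} gives $\Strem(w) = \Strem(x)$, so $v$ is a nonzero stratum at $w$ and $w$ is nonsingular. Lemma \ref{lem:open-bis} then provides an open segment on $L_{v}(w) = L_{v}(x)$ about $w$ whose points are all $\sim$-equivalent to $w$ and therefore lie in $E$. This shows $L_{v}(x) \cap E$ is open in the line $L_{v}(x)$, and the classical description of open subsets of $\R$ as disjoint unions of open intervals gives the claimed decomposition into disjoint open segments of the form $(y - \alpha v, y + \beta v)$ with $0 < \alpha, \beta \leq \infty$ and $y \in E$.

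For Part (2), I would focus on a finite right endpoint $p = y + \beta v$; the case of a finite left endpoint is completely symmetric after replacing $v$ by $-v$. First I would argue $p \notin E$ by contradiction: otherwise Lemma \ref{lem:open-bis} applied at $p$ would provide an open segment of $E$-points about $p$ on $L_{v}(x)$, contradicting the maximality of $(y - \alpha v, y + \beta v)$. Let $F$ be the $\sim$-class of $p$, so $F \neq E$. For every $0 < \delta < \alpha + \beta$ the point $p + \delta(-v) = y + (\beta - \delta)v$ lies in $(y - \alpha v, y + \beta v) \subseteq E$, which is exactly the definition of $p$ being $(-v)$-adjacent to $E$; hence $F$ is $(-v)$-adjacent to $E$. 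Finally, $p$ is topologically adherent to $L_{p}(v) \cap E = L_{v}(x) \cap E$ because the segment $(y - \alpha v, p) \subseteq E$ accumulates at $p$, so Lemma \ref{le:adherence} applied with $y := p$ yields $\Strem(F) \subseteq \Strem(E) \setminus \{v\}$; together with $v \in \Strem(E)$ this is a strict subspace inclusion, and therefore $\text{dim}(F) < \text{dim}(E)$.

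I do not anticipate any real obstacle: the argument essentially packages the two preceding lemmas together with the maximality of the connected components of $L_{v}(x) \cap E$. The only care required is matching the sign convention in the definition of $v$-adjacency with the geometric fact that, at the right endpoint of a segment on $L_{v}(x)$, the approach from within $E$ takes place in the $-v$ direction.
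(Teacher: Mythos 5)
Your proof is correct and follows essentially the same route as the paper: openness of $L_v(x)\cap E$ in the line via Lemma \ref{lem:open-bis} (plus maximality of the components), and the second claim via Lemma \ref{le:adherence}. You merely spell out explicitly the adjacency verification and the sign convention ($-v$ at a right endpoint), which the paper leaves implicit.
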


\begin{proof}
	In order to prove the first claim it suffices to show that for every $y \in L_v(x) \cap E$, the maximal segment of $L_v(x)$ which contains $y$ and is included in $E$ is an open segment. Let $0< \alpha,\beta\leq \infty$ be maximal such that $(y-\alpha v , y+ \beta v) \subseteq E$. There exist such values $\alpha,\beta$ by Lemma \ref{lem:open-bis} (applied to $y$).  Now if $\alpha<\infty$ then by maximality of $\alpha$ and Lemma \ref{lem:open-bis} (applied to $y-\alpha v$) we have $y-\alpha v \not\in E$. Similarly if $\beta < \infty$ then  $y+\beta v \not\in E$. 
	
	The second claim of the corollary follows from Lemma \ref{le:adherence}.
\end{proof}

%%%%%%%%%%%%%%%%%%%%%%%%%%%%%%
%%%%%%%%%%%%%%%%%%%%%%%%%%%%%%
\section{Characterization and decidability}
\label{sec:caract-effectif}
%%%%%%%%%%%%%%%%%%%%%%%%%%%%%%
%%%%%%%%%%%%%%%%%%%%%%%%%%%%%%

%%%%%%%%%%%%%%%%%%%%%%%%%%%%%%
\subsection{Characterization of \Ss\ in \Ls}
%%%%%%%%%%%%%%%%%%%%%%%%%%%%%%

In this section we give a characterization of \Ls-definable relations which are \Ss-definable.
A \emph{rational section} of a relation $X\subseteq \R^{n}$ is a relation of the form
\[
X^{(i)}_{c}= X \cap (\R^{i}\times \{c\} \times \R^{n-i-1}) \quad \text{for some } c\in \Q,\  0\leq i< n
\]

\begin{thm}
	\label{th:CNS}
	Let $n \geq 1$ and let $X \subseteq \R^n$ be \Ls-definable. Then $X$ is \str{\R,+,<,1}-definable if and only if the  following 
	two conditions hold
	
	\begin{enumerate}
		\item There exist finitely many $X-$singular points.
		\item Every rational section of $X$ is \Ss-definable.
		
	\end{enumerate} 
\end{thm}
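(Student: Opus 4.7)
If $X$ is \Ss-definable, then Proposition \ref{pr:finitely-many-singular-points} gives condition (1). For (2), each rational section $X^{(i)}_c = X \cap \{x \in \R^n : x_{i+1} = c\}$ is the intersection of $X$ with the \Ss-definable hyperplane $\{x_{i+1} = c\}$ (this being \Ss-definable since $c \in \Q$), hence \Ss-definable.

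\textbf{Backward direction, base case.} I would proceed by induction on $n$. For $n = 1$, condition (2) holds vacuously, since $X \cap \{c\}$ is always \Ss-definable. By Proposition \ref{pr:finitely-many-singular-points}, the singular set $S$ is a finite set of rationals. At every nonsingular point $x \in \R$, $\Strem(x)$ is a nontrivial subspace of $\R$ and so equals $\R$; by Proposition \ref{pr:uniform-radius}, $X$ is then locally constant near $x$. Connectedness of each open component of $\R \setminus S$ forces $X$ to be constant on that component, so $X$ is a finite union of intervals with rational endpoints, and hence \Ss-definable by Corollary \ref{cor:basicSsFacts}.

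\textbf{Inductive step.} Assume the result in dimensions below $n$, and let $X \subseteq \R^n$ satisfy (1) and (2). Let $S$ denote the finite singular set, with rational coordinates by Proposition \ref{pr:finitely-many-singular-points}, and set $Q_j = \{s_j : s \in S\}$, a finite subset of $\Q$, for each $1 \le j \le n$. The axis-parallel rational hyperplanes $\{x_j = q\}$ for $q \in Q_j$ partition $\R^n$ into finitely many relatively open cells of the form $I_1 \times \cdots \times I_n$, where each $I_j$ is either $\{q\}$ with $q \in Q_j$ or an open interval with endpoints in $Q_j \cup \{-\infty, +\infty\}$. For any cell $D$ of positive codimension, $D$ lies in a common intersection of rational hyperplanes, and iterating condition (2) (using that intersecting an \Ss-definable set with an \Ss-definable hyperplane preserves \Ss-definability) shows that $X \cap D$ is \Ss-definable. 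Since $X$ is the finite union of the pieces $X \cap D$, it suffices to prove that $X \cap C$ is \Ss-definable for every top-dimensional open cell $C$; by construction every such $C$ contains no singular point.

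\textbf{Main obstacle: open cells.} The real work of the proof is to handle such a top-dimensional cell $C$. Every $x \in C$ has a nontrivial stratum subspace $\Strem(x)$, which by Proposition \ref{pr:rational-basis} admits a rational basis, and whose local neighborhood, by Corollary \ref{cor:Ls-cones}, is described by a finite union of cones drawn from a fixed finite family of rational cones. The strategy is to combine this local rational-stratum information with the already-established \Ss-definable description of $X$ on the (lower-dimensional) boundary $\partial C$: propagate the boundary description into the interior along rational stratum directions, using Corollary \ref{cor:open-ter} to control how the $\sim$-classes intersect each stratum line, the absence of singular points ensuring that enough strata are available everywhere in $C$ to carry this propagation over all of $C$. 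Turning this into an explicit \Ss-formula --- and, intuitively, showing that the lack of singular points together with the rationality of strata precludes any genuinely integer-periodic substructure in $C$ that could obstruct \Ss-definability --- is the technical heart of the argument, and is where I would expect essentially all of the difficulty to concentrate.
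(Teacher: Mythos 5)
Your forward direction and your reduction to cells are fine, but the backward direction is not actually proved: you explicitly leave the top-dimensional cells $C$ as a strategy sketch, and that is where the entire content of the theorem lies. Moreover the sketch, as stated, would not go through. Your cell decomposition is governed solely by the coordinates of the finitely many singular points, so it need not see the lower-dimensional $\sim$-classes of $X$ at all: if $S=\emptyset$ (e.g.\ $X=\{(x,y)\mid y\le x\}$, or any candidate relation without singular points) the unique cell is $\R^n$ with empty boundary, and there is nothing to ``propagate from'', even though one-dimensional classes such as edges exist and are not contained in any hyperplane $\{x_j=q\}$ with $q\in Q_j$. Worse, along a stratum direction the maximal open segment inside a $\sim$-class may be infinite, so the propagation from $\partial C$ need never terminate on $\partial C$ nor on any class of lower dimension; Corollary \ref{cor:open-ter} only gives information at \emph{finite} endpoints. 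Finally, your outer induction on $n$ does no work in handling $C$: nothing in the inductive hypothesis is invoked there. So what remains missing is precisely the mechanism that makes condition (2) interact with the interior of the cells.

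For comparison, the paper does not induct on $n$ but on a two-component height attached to the pieces $E^{(I)}=E\cap Q'_I$, where $E$ runs over the finitely many $\sim$-classes and $Q_I,Q'_I$ are the \emph{canonical subspaces} cut out by the coordinate hyperplanes $x_i=0$; the height is $\bigl(\dim(\Str(E)\cap Q_I),\dim(Q'_I)\bigr)$. The canonical subspaces are exactly the fix for the unboundedness problem you would hit: a nonzero stratum $v\in Q_I$ has a nonzero coordinate outside $I$, so the line $L_v(x)$ must leave $Q'_I$, and hence the ``journey'' of Lemma \ref{le:descente} always terminates, either on a class of strictly smaller height or inside a smaller canonical subspace. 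Condition (2) enters by making $X\cap Q_I$ \Ss-definable, which supplies the base case ($(X,Q_I)$-singular points are finitely many and rational), while condition (1) handles the case $I=\emptyset$; rational principal directions come from Proposition \ref{pr:rational-basis}, and Lemma \ref{le:descente} is then converted into an explicit defining formula over \Ss\ expanded by predicates for the lower pieces. None of this machinery (or a substitute for it) appears in your proposal, so the argument has a genuine gap rather than being an alternative proof.
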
 

Observe  that both conditions $(1)$ and $(2)$ are needed. Indeed, the relation $X=\R \times \Z$ is \Ls-definable. It  has no singular point thus it satisfies condition $(1)$, but does not satisfy $(2)$ since, e.g., the rational section $X_0^{(0)}=\{0\} \times \Z$ is not \Ss-definable.	
Now, consider the relation $X=\{(x,x) \ | \ x \in \Z\}$ which is \Ls-definable. It does not satisfy condition $(1)$ since every element of $X$ is singular, but it satisfies $(2)$ because every rational section of $X$ is either empty or equal to the singleton $\{(x,x)\}$ for some $x \in \Z$, thus is \Ss-definable.

\medskip 

The necessity of point 1 follows from Proposition \ref{pr:finitely-many-singular-points}.
That of point 2  results from the fact that all rational constants are \Ss-definable by Theorem \ref{th:quantifier-elimination-for-R-plus}, and moreover that \Ss-definable relations are closed under 
direct product and intersection.

Now we prove  that conditions 1 and 2 are sufficient.  We start with some informal discussion. 
Since $X$ possesses finitely many $\sim$-classes,  Corollary \ref{cor:open-ter} suggests that we 
prove the \Ss-definability of the $\sim$-classes by induction on their dimension. 
The case of classes of dimension $0$ is easy to handle using condition $1$. For the induction step, 
we use the same corollary which asserts that  the intersection of a nonsingular class $E$ with 
a line passing through a point $x$ in the class and parallel to a direction of the class is a union of open segments.  
If the segment containing $x$ is finite or  
semifinite  then one of its adherent point belongs to a class $F$ of lower dimension 
and we can define $E$ relatively to $F$ via the notion of adjacency. However the segment may be infinite and thus 
may  intersect no
other equivalence class. So we consider the canonical subspaces defined below, and will use the fact that every line has an intersection 
with one of these.

Formally we define 
\begin{equation}
\label{eq:quadrants}
\begin{array}{l}
H_{i}=\{(x_{1}, \ldots, x_{n})\in \R^{n}\mid  x_{i}=0\} \quad i\in \{1, \ldots, n\}\\
\displaystyle Q_{I} = \bigcap_{i\in I} H_{i},   \quad 
\displaystyle Q'_I=(Q_I \setminus  \bigcup_{i\in \{1,\ldots, n\} \setminus I} H_{i} )
\text{ for all } \emptyset \subset I\subseteq \{1,\ldots, n\}.
\end{array}
\end{equation}
In particular  $Q_{\{1,  \ldots, n\}}=\{0\}$, $Q'_{\{1,  \ldots, n\}}=\emptyset$ and by convention $Q_{\emptyset}=\R^{n}$.
 The $Q'_{i}$'s are not vector subspaces
but with some abuse of language we call them  \emph{canonical subspaces} 
and write $\text{dim}(Q'_{I})$ to mean $\text{dim}(Q_{I})=n-|I|$. 
Moreover for every $I \ne \{1, \ldots, n\}$ the set $Q'_{I} $ is open in $Q_{I} $, that is 
\begin{equation}
\label{eq:Q-QprIme}
\forall x\in Q'_{I}\  \exists \epsilon\ \forall v\  (v\in  Q_{I} \wedge  |v| <\epsilon \rightarrow x+v\in Q'_{I})
\end{equation}
(indeed,  it suffices to assume $|v_{j}| < |x_{j}|$ for all $j\not\in I$).
Observe that point 2 of the theorem implies that for every $I$  the intersection $X\cap Q_{I}$ (resp. $X\cap Q'_{I}$) is \Ss-definable.
Furthermore the sets $Q_{I}'$ define a partition  of the space. 
We also have  a
 trivial but important property which is implicit in the proof of Lemma \ref{le:descente}.

\begin{rem}
\label{re:along-L-v}
If $x\in Q_{I}'$ and $v$ is a vector in the subspace $Q_{I}$ then for all points on 
$y\in L_{v}(x)$ we have $y\in Q_{J}'$ for some $J\supseteq I$.
\end{rem}

Using the canonical subspaces, the proof below can be seen as describing a trajectory starting from a point $x$ in a $\sim$-class $E$,
traveling along a stratum of $E$ until it reaches a class of lower dimension $F$ (by Corollary \ref{cor:open-ter})  or some canonical subspace.
In the first case it resumes the journey from the new class $F$ on. In the second case it is trapped in
the canonical subspace: it resumes the journey 
by choosing one direction of the subspace until it reaches a new $\sim$- class or a point belonging 
to a proper canonical subspace. Along the journey, either the dimension 
of the new class
or  the dimension of the canonical subspace decreases. The journey stops when the point reaches a $(X,Q_I)-$singular point,  
or the origin which is the least canonical subspace.

\begin{defi}
	\label{eq:principal-directions} 
	Given an \Ss-definable relation $X \subseteq \R^n$, a set of
	 \emph{principal directions} (for $X$) is any finite subset $V$ of vectors such that 
for all  nonsingular 
	points $x\in \R^{n}$ and all canonical spaces $Q'_{I}$,  there exists a subset $V'\subseteq V$ which generates 
	the space  $\Str(x)\cap Q_{I}$. 
	Observe that by Proposition \ref{pr:rational-basis} there is no loss of generality to assume that 
	$V \subseteq \Q^{n}$.

\end{defi} 

By Corollary \ref{cor:Ls-cones}  there exist finitely many distinct spaces $\Strem(x)$ when $x$ runs over $\R^{n}$, thus there exists a set $V$ of principal directions for $X$.

\medskip For every $I \subseteq \{1,\dots,n\}$ and every $\sim$-class $E$ we define
$
E^{(I)}=E \cap Q'_{I }
$.
Observe that $E=\bigcup_{I \subseteq \{1,\dots,n\} } E^{(I)}$ which is a disjoint union.
We know that $X$ is a union of finitely many $\sim$-classes, see Corollary \ref{cor:Ls-cones}. Thus in order to prove that $X$ is \Ss-definable it suffices to prove that all sets $E^{(I)}$ are \Ss-definable.
%s
Consider the (height) function $h$ which assigns to every set $E^{(I)}$ the pair of integers
\[
h(E^{(I)})=(\dim(\Str(E)\cap Q_I),\dim(Q'_I)).
\]
 Given two $\sim$-classes $E,F$ and $I,J \subseteq \{1,\dots,n\}$ we define the (partial) ordering 
$F^{(J)}<E^{(I)}$ if $h(F^{(J)})=(a',b')$ and  $h(E^{(I)})=(a,b)$ with either ($a'<a$ and $b' \leq b$) or ($a' \leq a$ and $b' < b$).

We prove by induction on $(a,b)$ that each $E^{(I)}$ is definable from smaller classes $F^{(J)}$, i.e., that $E^{(I)}$ is definable in the expansion of \Ss\ obtained by adding a predicate for each smaller class $F^{(J)}$.

Let $h(E^{(I)})=(a,b)$. 
If $a=0$ then the elements of $E^{(I)}$ have no nonzero $X-$stratum in $Q_I$, thus they are $(X,Q_I)-$singular,
see Definition \ref{de:H-singular}. 
By point 2, $X\cap Q_I$  is \Ss-definable thus it has 
finitely many such points and they are \Ss-definable. The same holds for $E^{(I)}$
which is a finite union of such points.

If $b=0$ then $Q_I=\{0\}$, thus $E^{(I)}$ is either empty or equal to the singleton $\{0\}$, and in both cases $E^{(I)}$ is \Ss-definable.

\medskip Now assume that $a,b>0$.  
The following  details a single  step of the journey explained above.

\begin{lem}\label{le:descente}
	Let  $I \subseteq \{1,\dots,n \}$ and $x \in Q'_I$. Then $x \in E^{(I)}$ if and only if  there
	exists a principal direction 
	$v\in V\cap \Str{(x)} \cap Q_I$,  
	 elements  $y,z\in \R^{n}$,   some $\sim$-class $F$ and some $J\supseteq I$
	such that the following holds:	
	\begin{enumerate}	
		\item  $x \in (y,z)$ and $y-x=\alpha v$ for some positive real $\alpha$
		\item $(y,z)$ does not intersect any class $G^{(K)}$ such that  $G^{(K)}< E^{(I)}$
		\item
		\begin{enumerate}
			\item either ($y \in F^{(J)}$ where 	$F^{(J)} < E^{(I)}$ and $F$ is $v-$adjacent to $E$)
		\item or ($z \in F^{(J)}$ where $F^{(J)} < E^{(I)}$ and $F$ is $(-v)-$adjacent to $E$).
		\end{enumerate} 
	\end{enumerate}		
\end{lem}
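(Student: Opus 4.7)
The plan is to prove the two implications of the equivalence separately. The forward direction is constructive: given $x \in E^{(I)}$, build the witnesses by choosing a principal direction and tracking the first exit from $E^{(I)}$ along the corresponding line. The backward direction combines the stratum property of $v$ at $x$, the adjacency condition at one endpoint, and the prohibition in (2) of strictly smaller classes on the interior of $(y,z)$ to force the $\sim$-class of $x$ to be $E$.

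For the forward direction, we are in the inductive case $a = \dim(\Str(E) \cap Q_I) > 0$, so by the definition of principal directions there exists a nonzero $v \in V \cap \Str(x) \cap Q_I$. Consider $L_v(x)$. Because $v \in Q_I$ is nonzero and $v_i = 0$ for $i \in I$, some coordinate $v_j$ with $j \notin I$ is nonzero, and then $L_v(x)$ meets $H_j$ at the finite nonzero parameter $-x_j/v_j$ (using $x_j \ne 0$ because $x \in Q'_I$). Hence $L_v(x)$ exits $Q'_I$ at a finite parameter in at least one of the two orientations. I would set $y$ (or $z$) to be the first exit from $E^{(I)}$ in the $+v$ (or $-v$) direction, and pick the opposite endpoint just past $x$ inside $E^{(I)}$, so that condition (2) holds automatically.

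Verifying $F^{(J)} < E^{(I)}$ at an exit splits into three subcases. (i) If only the $\sim$-class changes (we leave $E$ while staying in $Q'_I$), Lemma~\ref{le:adherence} gives $\Str(F) \subsetneq \Str(E)$ with $v$ as a witness, so $\Str(F) \cap Q_I \subsetneq \Str(E) \cap Q_I$ and $F^{(I)} < E^{(I)}$. (ii) If only the canonical subspace changes (we stay in $E$ but enter $Q'_K$ with $K \supsetneq I$), then $\dim(Q'_K) < b$ while $\dim(\Str(E) \cap Q_K) \leq a$ since $Q_K \subseteq Q_I$, yielding $E^{(K)} < E^{(I)}$. (iii) If both happen simultaneously, combining the two observations still gives $F^{(J)} < E^{(I)}$.

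For the backward direction, Lemma~\ref{lem:open-bis} applied to $v \in \Str(x) \cap Q_I$ provides an open subsegment of $L_v(x)$ through $x$ whose points all belong to a single $\sim$-class $E'$ (the class of $x$). It suffices to show $E' = E$. Assuming (3a) (the case (3b) being symmetric), the adjacency of $F$ to $E$ at $y$ gives a foothold of $E$ adjacent to $y$ on $L_v(x)$; tracing along $(y,z)$ back toward $x$, condition (2) rules out any class strictly smaller than $E^{(I)}$ in the open interior. A careful analysis, using Corollary~\ref{cor:open-ter} to describe the $\sim$-segments of $L_v(x) \cap E$ and of $L_v(x) \cap E'$ and Lemma~\ref{le:adherence} at their endpoints, forces $E' = E$, so $x \in E \cap Q'_I = E^{(I)}$. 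The main obstacle lies precisely here: because (2) only excludes classes strictly smaller than $E^{(I)}$ in the height ordering rather than all classes distinct from $E^{(I)}$, one must carefully combine the strict dimension inequalities obtained by traversing $L_v(x)$ with the stratum property of $v$ at $x$ to preclude the possibility that $x$ lies in a class incomparable to, or of the same height as, $E^{(I)}$.
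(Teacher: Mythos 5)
Your necessity (forward) half is essentially the paper's own argument and is sound: since $a=\dim(\Str(E)\cap Q_I)>0$ there is a nonzero principal direction $v\in V\cap\Str(x)\cap Q_I$; since $v_j\neq 0$ for some $j\notin I$ while $x_j\neq 0$ for all $j \notin I$, the line $L_v(x)$ leaves $Q'_I$, so the maximal open subsegment of $L_v(x)$ through $x$ contained in $E^{(I)}$ (open by Corollary \ref{cor:open-ter} and property (\ref{eq:Q-QprIme})) has a finite endpoint, and your subcases (i)--(iii) are exactly the height computations needed to see that this endpoint lies in a class $F^{(J)}<E^{(I)}$. You should also record, though it is immediate from the construction, that $F$ is adjacent to $E$ in the orientation pointing back towards $x$ (the open subsegment between the endpoint and $x$ lies in $E$), since this adjacency is part of condition (3).

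The sufficiency (backward) half, however, is not actually proved: you reduce it to ``a careful analysis \dots forces $E'=E$'' and then yourself flag that very step as the main obstacle, so the heart of this implication is missing. The idea you are missing is that one never needs to control the class of $x$ directly, nor to exclude classes of equal or incomparable height anywhere on $(y,z)$: it suffices to apply condition (2) to the first point where the segment leaves $E$. Concretely, assume (3a) and let $(y,t)$ be the maximal open subsegment of $(y,z)$ issued from $y$ and contained in $E$; it is nonempty because $F$ is adjacent to $E$. If $t$ lies strictly inside $(y,z)$, then $t\notin E$ (otherwise Lemma \ref{lem:open-bis} applied at $t$ would allow the segment to be extended, cf.\ Corollary \ref{cor:open-ter}), $t$ is adherent to $L_v(t)\cap E$, so by Lemma \ref{le:adherence} its class $G$ satisfies $\Str(G)\subseteq\Str(E)$ with $v\notin\Str(G)$, and by Remark \ref{re:along-L-v} it lies in some $Q'_K$ with $K\supseteq I$. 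Now the very same two-case height computation you carried out for necessity ($K=I$ gives $a'<a$, $b'=b$; $K\supsetneq I$ gives $a'\le a$, $b'<b$) yields $G^{(K)}<E^{(I)}$, contradicting (2). Hence no such $t$ exists, so $(y,z)\subseteq E$, and since $x\in(y,z)\cap Q'_I$ we get $x\in E^{(I)}$. In short, the missing step is closed by redeploying the argument you already wrote for the other direction, but as submitted your proof of the ``if'' direction is incomplete; there is no residual issue about incomparable classes, since (2) is only ever invoked at the first departure point from $E$, whose height provably drops.
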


\begin{proof}
We first prove that the conditions are sufficient. We assume w.l.o.g. that condition  3  holds for the case (a). 
By hypothesis $x \in Q'_I$ and $x \in (y,z)$, thus it suffices to prove that  $(y,z) \subseteq E^{(I)}$.  Set $z=y-\gamma v$ and consider the union $U$ of all open segments 
$(y,y-\beta v)$, $\beta>0$,  included in $E$. Observe that $U$ is nonempty since $y$ is $v$-adjacent to $E$. If $U$ contains $(y,z)$ we are done so we assume $U=(y,t)$ with $t=y-\gamma' v$ and $\gamma'<\gamma$ and we let $G^{(K)}$ be the class of $t$. 
We set $h(G^{(K)})=(a',b')$.
By Lemma  \ref{le:adherence} and Corollary  \ref{cor:open-ter}  we have 
$\Str(G)\subseteq \Str(E)\setminus \{-v\}$, and because of Remark \ref{re:along-L-v} it holds  $K\supseteq I$. 

If $K \supsetneq  I$ then $b'<b$, which leads to the inclusions
\[
\Str{(G)} \cap Q_{K} \subseteq \Str{(E)} \cap Q_{K}  \subsetneq  \Str{(E)} \cap Q_I.
\]	
This implies  $a'\leq a$ and thus  $G^{(K)}<E^{(I)}$.

If $I=K$ then $b=b'$ and we show that $a'<a$. 
The inclusion 
$\Str{(G)} \subseteq \Str{(E)}\setminus \{-v\}$ along with the fact that $v \in Q_I = Q_{K}$ implies
$\Str{(G)} \cap Q_{K}\subsetneq  \Str{(E)}  \cap Q_{K}$ 
which leads to $a'<a$.  This implies again $G^{(K)}<E^{(I)}$ contradicting point 2.

\medskip

Now we prove that the conditions are necessary. 
By  hypothesis we have $a\not=0$ thus $\Str{(E)}\cap Q_I \ne \{0\}$, and by Definition \ref{eq:principal-directions} the set $V$ contains a basis of $\Str{(E)}\cap Q_I$. We can choose $v \ne 0$ as any element of this basis.  

By Corollary   \ref{cor:open-ter} and Property (\ref{eq:Q-QprIme}) there exists an open  segment of the line $L_{v}(x)$ containing $x$ and only points in $E^{(I)}$. Consider the union $U$ of all such open segments.
The hypothesis $b>0$ implies $\dim(Q_I)\geq 1$ 
which means that  $L_v(x)$ intersects some hyperplane $H_j$ with $j \not\in I$. This implies that $L_v(x)$ is not a subset of $Q'_I$, and a fortiori not a subset of $E^{(I)}$, hence $U$ is not equal to $L_v(x)$. 
Assume without loss of generality that $U$ has an extremity of the form $x-\alpha v$ for some $\alpha >0$. We set
$y=x-\alpha v$, and  $z=x+\beta v$ where $\beta$ is any positive real such that $[x,x+\beta v) \subseteq E^{(I)}$.

We prove that $y,z$ satisfy the conditions of the lemma. Conditions $(1)$ and $(2)$ are easy consequences of the very definition of $y$ and $z$. We show that condition $(3a)$ holds. We set $y\in F^{(J)}$ and $h(F^{(J)})=(a',b')$. By Remark \ref{re:along-L-v} we have $I \subseteq J$ thus $b' \leq b$, and by Lemma \ref{le:adherence} we have 
$\Str(F) \subseteq \Str(E)$ thus $\Str(F) \cap Q_J = \Str(E) \cap Q_I$, hence $a'\leq a$.

If $a=a'$, i.e $ y\in E$, then $y\not\in Q_{I}$ by definition of $y$ and Property  (\ref{eq:Q-QprIme}). It follows that $I \subsetneq J$ i.e. $b'<b$, thus $h(F^{(J)})<h(E^{(I)})$.

If $b=b'$, i.e $I=J$, then by definition of $y$ and Corollary  \ref{cor:open-ter} we have $E \ne F$, and using again Lemma \ref{le:adherence} we obtain $\Str(F) \subseteq  \Str(E)\setminus \{v\}$, and this yields $v\in (\Str(E)\cap Q_{I})\setminus  (\Str(F)\cap Q_{J})$ i.e. $a'<a$, which shows that $h(F^{(J)})<h(E^{(I)})$.
\end{proof}

We can conclude the proof of Theorem \ref{th:CNS}. By our induction hypothesis, every set $F^{(J)}$ such that $F^{(J)}<E^{(I)}$ is \Ss-definable, thus it suffices to prove that $E^{(I)}$ is definable in $\langle  \R, +,<, 1,\+F\rangle$ where 
\[
\+F = \{F^{(J)}\mid F \text{ is } v \text{ adjacent to }  E \text{ for some } v\in V \text{, and } F^{(J)}<E^{(I)} \}.
\]
We use the characterization of $E^{(I)}$ given by Lemma \ref{le:descente} and build a formula which expresses the conditions of this lemma. 
Set  $Z=\{F^{(J)}\mid F^{(J)} < E^{(I)}\}$ 
and let $V'$ be the set of vectors $v\in V$ for which there exist some class $F$ and some subset $J$ such that $F^{(J)} < E^{(I)}$.
A defining formula for $E^{(I)}$ is 
\[
\chi(x):  x \in Q'_I \wedge \bigvee_{v\in V'} \chi_{v}(x)
\]
where $ \chi_{v}(x)$ is defined as follows. Denote by 
$A$ (resp. $B$)  the set of classes $F^{(J)}$ such that  $F^{(J)} < E^{(I)}$ and 
$F$ is $v-$adjacent to $E$ (resp $(-v)-$adjacent to $E$). Then 

\[
\begin{array}{ll}
\chi_{v}(x): & \exists y,z,\alpha,\beta \ (\alpha<0 \wedge \beta>0 \wedge  (y=x+\alpha v \wedge z=x+\beta v)  \\
&\wedge \displaystyle   \forall \gamma (\alpha<\gamma<\beta \rightarrow \bigwedge_{F^{(J)} \in Z} \neg F^{(J)}(x+\gamma v)) \wedge \\
&\displaystyle \wedge (\bigvee_{F^{(J)} \in A} F^{(J)}(y) \vee \bigvee_{F^{(J)} \in B} F^{(J)}(z)) \big).
\end{array}
\]
%

%%%%%%%%%%%%%%%%%%%%%%%%%%%%%%
\subsection{Decidability}
\label{ss:decidability}
%%%%%%%%%%%%%%%%%%%%%%%%%%%%%%

In this section we prove that it is decidable whether a \Ls-definable relation $X \subseteq \R^n$  is \Ss-definable. 
The main idea is to construct in $(\R,+,<,1,X)$ a sentence $\psi_n$ which expresses the conditions of Theorem \ref{th:CNS},  then use the \Ls-definability of $X$ to re-write $\psi_n$ as a \Ls-sentence, and conclude thanks to the decidability of \Ls. This is an adaptation of Muchnik's technique for Presburger Arithmetic \cite[Theorems 2 and 3]{Muchnik03}. 
In order to simplify the task of constructing $\psi_n$ we re-formulate Theorem \ref{th:CNS}. We extend the notion of
section by allowing to fix several components. A \textit{generalized section of $X$} is a relation of the form
\begin{equation}
\label{eq:s-a}
X_{s,a} =\{(x_{1}, \ldots, x_{n})\in X \mid  x_{s_{1}} =a_{s_{1}}. \ldots, x_{s_{r}} =a_{s_{r}}\}
\end{equation}
where $r>0$, $(s)_{1, \ldots, r}= 1\leq s_{1} < \cdots <s_{r}\leq n$ is an increasing sequence, and  $a(=a_{s_{1}}. \ldots, a_{s_{r}})$ is a $r-$tuple of reals.
When $r=0$ we define $X_{s,a} =X$ by convention. If all elements of $a$ are rationals then $X_{s,a}$ is called a \textit{rational generalized section of $X$}.

\begin{prop}\label{pr:CNS1}
	Let $n \geq 1$ and let 
	$X \subseteq \R^n$ be \Ls-definable. Then $X$ is \str{\R,+,<,1}-definable if and only 
	if every rational generalized section of $X$ has finitely many singular points.
\end{prop}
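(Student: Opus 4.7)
The plan is to prove both directions separately. The forward implication is a routine consequence of earlier results in the paper, while the backward implication reduces to Theorem \ref{th:CNS} via an induction on the ambient dimension $n$.

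For the forward direction, assume $X$ is \Ss-definable. For any rational generalized section $X_{s,a}$, the set of constraints $x_{s_1}=a_{s_1}\wedge\cdots\wedge x_{s_r}=a_{s_r}$ defines an \Ss-definable relation (each $a_{s_j}\in\Q$, cf.\ Corollary \ref{cor:basicSsFacts}), so $X_{s,a}$ is \Ss-definable as the intersection of two \Ss-definable relations. Proposition \ref{pr:finitely-many-singular-points} then ensures that $X_{s,a}$ has finitely many singular points.

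For the backward direction, I would induct on $n$. When $n=1$ the hypothesis for $r=0$ says that $X$ has finitely many singular points, rational sections of $X$ are singletons or empty (hence trivially \Ss-definable), and Theorem \ref{th:CNS} concludes. For $n>1$ I would verify conditions (1) and (2) of Theorem \ref{th:CNS} for $X$. Condition (1) is the $r=0$ instance of the hypothesis. For condition (2), consider a rational section $X_c^{(i)}$ and introduce the projected relation
\[
Y = \{(y_1,\dots,y_{n-1}) \in \R^{n-1} \mid (y_1,\dots,y_i,c,y_{i+1},\dots,y_{n-1}) \in X\},
\]
which is \Ls-definable and satisfies: $X_c^{(i)}$ is \Ss-definable if and only if $Y$ is. I then apply the induction hypothesis to $Y$, which requires showing that every rational generalized section $Y_{t,b}$ has finitely many singular points. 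Such a section corresponds to a rational generalized section $X_{s,a}$ of $X$ (obtained by adding $x_{i+1}=c$ to the constraints that define $Y_{t,b}$, with the appropriate index shift); by hypothesis $X_{s,a}$ has finitely many singular points, and we transfer this finiteness to $Y_{t,b}$ through the correspondence between singular points of $X_{s,a}\subseteq\R^n$ and those of $Y_{t,b}\subseteq\R^{n-1}$.

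The one technical point requiring care is precisely this correspondence of singular points, which I expect to be the main obstacle. A point of $\R^n$ that is not in the closure of $X_{s,a}$ has a neighborhood disjoint from $X_{s,a}$ and is therefore nonsingular, so all singular points of $X_{s,a}$ lie in the affine subspace defined by the fixed coordinates. At such a point, any candidate stratum must have zero components along the fixed coordinates (a nonzero such component would immediately move points of $X_{s,a}$ out of the affine subspace, and hence out of $X_{s,a}$, violating the stratum condition), and these strata correspond bijectively, via the projection that drops the fixed coordinates, with strata of $Y_{t,b}$ in $\R^{n-r}$. Finiteness of singular points therefore transfers from $X_{s,a}$ to $Y_{t,b}$, completing the inductive step.
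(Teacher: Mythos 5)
Your proof is correct, but it follows a genuinely different route from the paper's. You induct on the ambient dimension $n$: you verify the two conditions of Theorem \ref{th:CNS} for $X$ itself, and for condition (2) you replace a rational section $X^{(i)}_c$ by its projection $Y\subseteq \R^{n-1}$ obtained by dropping the frozen coordinate, which forces you to transfer finiteness of singular points between a generalized section of $X$ in $\R^n$ and the corresponding generalized section of $Y$ in $\R^{n-1}$. That transfer is exactly the right crux and your sketch of it is sound: singular points lie in the affine subspace determined by the frozen coordinates, strata at such points have zero components along the frozen directions, and the coordinate-dropping map is a bijection (indeed an isometry for $L_\infty$) on that subspace carrying strata to strata; your only slip is notational, since the relevant projection drops just the single coordinate frozen by the section, landing in $\R^{n-1}$ where $Y_{t,b}$ lives, not in $\R^{n-r}$ --- the argument is unaffected. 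The paper instead argues by decreasing induction on the number of frozen components, never leaving $\R^n$: it applies Theorem \ref{th:CNS} directly to each generalized section $X_{t,b}\subseteq\R^n$ (which is \Ls-definable), using the hypothesis for condition (1) and the induction for condition (2). This avoids any projection and hence any strata-transfer lemma, but it applies the theorem to sets contained in a proper affine subspace, where sectioning along an already-frozen coordinate degenerates (yielding $X_{t,b}$ itself or $\emptyset$), a case the paper passes over silently; your dimension-reduction route sidesteps that issue entirely, at the price of the correspondence argument you identified, so each approach trades one bookkeeping burden for another.
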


\begin{proof} If  $X$ is \Ss-definable so is every rational  restriction which therefore has finitely many 
	singular points by point 1 of Theorem \ref{th:CNS}.
	
	We show the opposite direction by decreasing induction of the number $r$ of frozen components  of the rational restriction.
	We use the fact that all rational restrictions are \Ls-definable. If $r=n-1$ the rational generalized section is an \Ls-definable 
	subset of $\R$ with finitely many singular points which implies that it consists of finitely many intervals with rational endpoints
	and we are done by Corollary \ref{cor:basicSsFacts}.
	
	Fix $r>1$ and assume that all rational restrictions  $X_{s,a}$ as in \ref{eq:s-a} with $r$ frozen components are \Ss-definable.
	Consider  a rational generalized section $X_{t,b}$  with $r-1$ frozen components, say 
	\[
	\begin{array}{l}
	(t)_{1, \ldots, r-1}= 1\leq t_{1} < \cdots <t_{r-1}\leq n\\
	b= (b_{t_{1}},  \ldots, b_{t_{r-1}})\quad b_{i}\in \Q, i=1, \ldots, r-1.
	\end{array}
	\]
	It has finitely many singular points by hypothesis. 
	A rational section of 
	$X_{t,b}$  is defined by some  increasing sequence $ (s)_{1, \ldots, r}= 1\leq s_{1} < \cdots <s_{r}\leq n$ and an $r$-tuple 
	$a= (a_{s_{1}},  \ldots, a_{s_{r}})$ of 
	rational numbers 
	such that for some $0\leq u\leq r-1$ we have
	\[
	\begin{array}{l}
	s_{k}=t_{k}, \  k<u,  \quad s_{k+2}=t_{k+1},\   u \leq k\\
	a_{k} =  b_{k} , \  k<u,   \quad a_{k+2} =  b_{k+1} ,\   u \leq k
	\end{array}
	\]
	But then all $X_{t,b}$ are \Ss-definable  by induction
	and so is $X_{s,a}$ by Theorem \ref{th:CNS}.    	
\end{proof}

So far we did not distinguish between formal symbols and their interpretations but here we must
do it if we want to avoid any confusion. 
In order to express that a \Ls-definable $n-$ary relation $X$ is actually \Ss-definable we proceed as follows. 
Let  $\{\+X_{n}(x_1,\dots,x_n) \ | \ n \geq 1\}$ be  a collection of relational symbols. 
We construct a  $\{+,<,1,\+X_{n}\}-$sentence $\psi_{n}(\+X_{n})$ such that 
$\psi_{n}(X_{n})$ holds if and only $X_{n}$ is \Ss-definable.

\begin{prop}\label{prop:autodef}
	Let $\{\+X_{n}(x_1,\dots,x_n) \ | \ n \geq 1\}$ denote a set of relational symbols. 
	For every $n \geq 1$ 
	there exists a $\{+,<,1,\+X_{n}\}-$sentence $\psi_n$   such that for every \mbox{$\{+,<,1,\+X_{n}\}-$}structure ${\mathcal{M}}=(\R,+,<,1,X_{n}) $, if $X_{n}$ is \Ls-definable then  we have $\mathcal{M} \models \psi_n$ if and only if  $X_{n}$ is \Ss-definable.
\end{prop}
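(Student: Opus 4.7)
The plan is to express, in the language $\{+,<,1,\+X_n\}$, the semantic condition of Proposition~\ref{pr:CNS1}, which (under the standing hypothesis that $X_n$ is \Ls-definable) characterizes \Ss-definability. Since rationals are not first-order definable in \Ss, the first move is to replace, in that characterization, the quantifier ``for every rational $a$'' by ``for every real $a$'', and the condition ``finitely many singular points'' by ``the set of singular points is bounded''. Both replacements are sound under the standing \Ls-definability hypothesis. If $X_n$ is \Ss-definable then every section $X_{s,a}$, for any real $a$, is definable in \Ss\ with parameters, so the argument of Proposition~\ref{pr:finitely-many-singular-points} applies verbatim and its singular points are finite in number, hence form a bounded set. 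Conversely, if every real section of $X_n$ has a bounded set of singular points, then so does every rational section $X_{s,a}$; but such a section is \Ls-definable, so by Proposition~\ref{pr:finitely-many-singular-points} its singular points are locally finite, whence the bounded condition forces finiteness, and Proposition~\ref{pr:CNS1} then yields the \Ss-definability of $X_n$.

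The construction of $\psi_n$ proceeds as follows. For each subset $s = \{s_{1} < \cdots < s_{r}\} \subseteq \{1,\ldots,n\}$ I would build a formula $\sing_s(x; a_{s_{1}}, \ldots, a_{s_{r}})$ of $\{+,<,1,\+X_n\}$ expressing that $x$ is a singular point of the section $X_{s,a}$. This is obtained from condition $(2)$ of Lemma~\ref{le:equivalence-to-singularity} by substituting every occurrence of ``$u \in X$'' by the defining formula of the section, namely $\+X_n(u) \wedge \bigwedge_{i=1}^{r} u_{s_{i}} = a_{s_{i}}$. The sentence $\psi_n$ is then the finite conjunction
\[
\psi_n \;:\; \bigwedge_{s \subseteq \{1,\ldots,n\}} \forall a_{s_{1}} \cdots \forall a_{s_{r}} \; \exists M \; \forall x \, \Bigl( \sing_s(x; a_{s_{1}}, \ldots, a_{s_{r}}) \;\rightarrow\; \bigwedge_{i=1}^{n} -M \leq x_i \leq M \Bigr).
\]
By the equivalence of the previous paragraph, for any $\{+,<,1,\+X_n\}$-structure $\mathcal{M}=(\R,+,<,1,X_n)$ with $X_n$ \Ls-definable, we have $\mathcal{M} \models \psi_n$ iff every real generalized section of $X_n$ has a bounded set of singular points, iff $X_n$ is \Ss-definable.

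The main obstacle I anticipate is that Lemma~\ref{le:equivalence-to-singularity} was stated only for \Ls-definable relations, whereas $X_{s,a}$ fails to be \Ls-definable when $a$ has irrational components. However the proof of that lemma used only the fact that the ambient relation admits a finite description by finitely many cones around each point (Corollary~\ref{cor:Ls-cones}); for the section $X_{s,a}$ these local neighborhoods are the intersections of the (finitely many) local neighborhoods of the \Ls-definable $X_n$ with the affine subspace $\{y : y_{s_{i}} = a_{s_{i}}\}$, which again yields finitely many polyhedral local shapes (whose defining inequalities now involve the parameters $a_{s_i}$). So the same proof adapts and the formula $\sing_s(x; a)$ correctly encodes singularity of the section, completing the construction.
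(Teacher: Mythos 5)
Your construction is sound and its overall shape matches the paper's: a formula for singularity of a section built from condition (2) of Lemma \ref{le:equivalence-to-singularity}, quantification over \emph{all} real parameter tuples to dodge the non-definability of $\Q$, and a finiteness/boundedness condition on the singular points, tied back to Proposition \ref{pr:CNS1}. Where you genuinely diverge is in how correctness at irrational parameters is handled. The paper never needs the singularity formula to mean anything when the frozen components are irrational: in one direction it only instantiates the parameters at rationals, and in the other it observes that the set of ``bad'' parameters is defined by a $\{+,<,1,\+X_n\}$-formula, hence is \Ss-definable under the hypothesis that $X_n$ is, and so by Corollary \ref{cor:basicSsFacts}(2) contains a rational witness if nonempty --- reducing everything to rational sections, where Lemma \ref{le:equivalence-to-singularity} applies as stated. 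You instead prove (in sketch) that the formula is faithful for \emph{every} real slice, by noting that the local cone structure of Corollary \ref{cor:Ls-cones} restricts to slices, and you also extend Proposition \ref{pr:finitely-many-singular-points} to sets \Ss-definable with real parameters (here ``applies verbatim'' is a slight overstatement: finiteness survives, via quantifier elimination with real constant terms, but the rationality of the singular points does not --- fortunately you only need finiteness). Both of these extensions are true and your justifications point at the right reasons, but they are the places where real work remains in your write-up, whereas the paper's rational-witness trick avoids any new geometric lemma; conversely, your route yields the slightly stronger fact that $\psi_n$ literally asserts boundedness of the singular set of every real section. Finally, your sentence drops the paper's uniform-separation clause and keeps only boundedness; this is harmless, since for the rational (hence \Ls-definable) sections that matter, boundedness already forces finiteness by the local-finiteness observation in the proof of Proposition \ref{pr:finitely-many-singular-points} together with Lemma \ref{le:restriction-to-bounded-domain}, exactly as you argue.
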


\begin{proof}  
		For each $I\subseteq \{1, \ldots, n\}$ we let $\R^{I}$ denote the Cartesian product of copies of $\R$ indexed by $I$
	and for all nonzero reals $r$ and all $x\in \R^{I}$ we set 
	$B_{I}(x,r)=\{y\in \R^{I} \mid |x-y| <r\}$.
	Using Lemma \ref{le:equivalence-to-singularity} we can construct
	the following {$\{+,<,1,\+X_{n}\}-$}formula which expresses the fact that a point $x+y$ where $x\in \R^{I}$ and 
	$y\in \R^{[n]\setminus I}$ is singular, when seen as a point of the generalized section of $X_n$ obtained by freezing to $y$ the 
	components of $[n]\setminus I$ (with some abuse of notation we write $x+y$ for $x \in \R^I$ and $y\in \R^{[n]\setminus I}$):
	\begin{equation}
	\label{eq:I-singular}
	\begin{array}{l}
	\sing_{n,I}(x,y, \+X_{n})\equiv  \forall r\in \R \exists s\in ]0,r[ \  \forall q\in \R^{I}  \  |q| < s \rightarrow \\
	\ \ \ \exists z\in \R^{I}     ( (z, z+q \in B_{I}(x,r)) \wedge (y+z\in \+X_{n} \leftrightarrow y+z+q \notin \+X_{n})).\\
	\end{array}
	\end{equation}
		\nl Now we construct a $\{+,<,1,\+X_n\}-$sentence $\psi_n$ 
	which expresses  the condition of Proposition \ref{pr:CNS1}.
	Some difficulty arises from the fact that we have to express that every \textit{rational} generalized section of $X$ has finitely many singular points, but the set $\Q$ is not \Ls-definable. In order to overcome this issue, we construct $\psi_n$ in such a way that it expresses that \textit{every} generalized section of $X$ has finitely many singular points.
		We define $\psi_n$ as 
		\[
		\psi_n \equiv \bigwedge_{I \subseteq [n]}  \forall y\in \R^{[n]\setminus I} \ \varphi_{n,I}(y)
		\]
	where 	
\begin{equation}
	\label{eq:finitesing}
	\begin{array}{lll}
	\varphi_{n,I}(y)& \equiv & \exists M>0 \ \forall x\in \R^{I}  \ (\sing_{n,I}(x,y, \+X_{n}) \rightarrow |x|<M) \\
	& &  \wedge \ \exists  m>0 \ \forall x,x' \in \R^{I} \\
	& & \quad (( x\ne x' \wedge \sing_{n,I}(x,y, \+X_{n}) \wedge \sing_{n,I}(x',y, \+X_{n})) \rightarrow |x-x'|>m).	 \\
	\end{array}
\end{equation}
The formula $\varphi_{n,I}(y)$ expresses that the generalized section of $X_n$ obtained by freezing to $y$ the 
components of $[n]\setminus I$ has finitely many singular points. 

	We prove first that if $X_{n}$ is \Ls-definable  and satisfies the formula $\psi_{n}$, then it is 
\str{\R,+,<,1}-definable. 
Consider a rational generalized section $X_{s,a}$ of $X$ with 
$(s)_{1, \ldots, r}= 1\leq s_{1} < \cdots <s_{r}\leq n$ and $a=(a_1,\dots,a_r)\in \Q^r$. 
Let $I=\{s_1,\dots,s_r\}$. The sentence $\psi_n$ holds, thus in particular the formula 
 $\varphi_{n,I}(y)$ holds when we assign the $r-$tuple $a$ to the $r$ components of $y$. It follows that $X_{s,a}$ has finitely many singular points, and the result follows from Proposition \ref{pr:CNS1}.
 
 Conversely assume that the \Ls-definable relation $X_{n}$ is  also  \Ss-definable. Then we show that the formula $\psi_n(X_n)$ holds.
Indeed, if this were not the case, then for some $I \subseteq [n]$ the predicate $\forall y\in \R^{[n]\setminus I} \ \varphi_{n,I}(y)$ would be false, i.e  $\varphi_{n,I}(y)$ would be false for some assignment of $y$. This implies that the formula  $\gamma_n(y) \equiv \neg \varphi_{n,I}(y)$ (in which the only free variables are the $(n-|I|)$ variables constituting $y$) defines a nonempty subset $Y$ of $\R^{n-|I|}$. Now $\gamma_n$ is a $\{+,<,1,\+X_n\}-$formula, and $X_n$ is \Ss-definable, thus $Y$ is also \Ss-definable.  By Corollary \ref{cor:basicSsFacts}(2), $Y$ contains a $(n-|I|)-$tuple $q$ of rational elements. Therefore the formula $ \neg \varphi_{n,I}(y)$ holds when we assign the value $q$ to $y$, and this implies that there exists a rational generalized section of $X$ which has infinitely many singular points, and by Proposition \ref{pr:CNS1} this leads to a contradiction.
\end{proof}

\begin{thm}\label{th:eff}
	For every $n \geq 1$ and every \Ls-definable relation $X \subseteq \R^n$, it is decidable whether $X$ is  \Ss-definable.
\end{thm}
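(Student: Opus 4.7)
The plan is to combine Proposition \ref{prop:autodef} with the known decidability of the theory of \Ls. The idea, following Muchnik's approach, is to express \Ss-definability of $X$ as a sentence in the richer language \str{\R,+,<,1,X}, then eliminate the predicate $X$ by substituting its defining formula, and finally appeal to the decision procedure for \Ls.

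More concretely, suppose we are given an \Ls-formula $\varphi(x_1,\dots,x_n)$ defining the relation $X \subseteq \R^n$. First I would invoke Proposition \ref{prop:autodef} to obtain the $\{+,<,1,\+X_n\}$-sentence $\psi_n$. By that proposition, since $X$ is \Ls-definable by hypothesis, $\psi_n$ is true in $(\R,+,<,1,X)$ if and only if $X$ is \Ss-definable. Next I would perform the substitution: replace every atomic occurrence $\+X_n(t_1,\dots,t_n)$ in $\psi_n$ by the formula $\varphi(t_1,\dots,t_n)$ (with appropriate renaming of bound variables to avoid capture). This yields a sentence $\widehat{\psi}_n$ in the signature $\{+,<,1,\Z\}$ of \Ls, and by construction $\widehat{\psi}_n$ holds in \Ls\ if and only if $\psi_n$ holds in $(\R,+,<,1,X)$, i.e., if and only if $X$ is \Ss-definable.

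Finally, since the theory of \Ls\ is decidable (as proved by Miller \cite{Mil01} and Weispfenning \cite{Weis99}), we can algorithmically decide whether $\widehat{\psi}_n$ holds, which answers the decision problem for $X$. Effectivity is clear at each step: $\psi_n$ is produced uniformly from $n$ by the explicit construction in the proof of Proposition \ref{prop:autodef} (the formulas $\sing_{n,I}$ and $\varphi_{n,I}$ are written out explicitly there), and the substitution of $\varphi$ for $\+X_n$ is a purely syntactic operation.

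I do not anticipate a real obstacle, since the heavy lifting has already been done: the combinatorial characterization of \Ss-definability in terms of singularity of rational generalized sections (Proposition \ref{pr:CNS1}), its internal expressibility in \str{\R,+,<,1,X} (Proposition \ref{prop:autodef}), and the decidability of \Ls. The only point requiring minor care is to make sure that the substitution of $\varphi$ into $\psi_n$ is syntactically clean (bound-variable renaming) and that we correctly view the resulting sentence as belonging to the language of \Ls\ so that the decision procedure applies.
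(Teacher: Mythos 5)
Your proposal is correct and follows the paper's proof essentially verbatim: substitute the \Ls-defining formula of $X$ for the predicate $\+X_n$ in the sentence $\psi_n$ of Proposition \ref{prop:autodef}, obtaining an \Ls-sentence whose truth is equivalent to \Ss-definability of $X$, and then invoke the decidability of \Ls. Nothing further is needed.
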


\begin{proof}
	Assume that $X$ is \Ls-definable by the formula $\phi(x)$. In Proposition \ref{prop:autodef}, if we substitute $\phi(x)$ for every occurrence
	of $x\in \+X_{n}$ in $\psi_{n}$, then we obtain a \Ls-sentence $\Gamma_n$ which holds in \Ls\ if and only if $X$ is \Ss-definable. The result follows from the decidability of \Ls\  \cite{Weis99}.	
\end{proof}

Let us give a fair estimate of the complexity of the decision problem of Theorem \ref{th:eff}. One can derive from  \cite[Section 5]{Weis99} that the known triply-exponential upper bound for the deterministic time complexity of deciding  Presburger Arithmetic sentences \cite{Oppen78} still holds for \Ls. We proved that given $n \geq 1$ and a relation  $X \subseteq \R^n$ which is \Ls-definable by $\phi$, the question of whether $X$ is \Ss-definable amounts to decide whether the sentence $\Gamma_n$ holds in \Ls.

 It is easy to check that the length of $\Gamma_n$ is of the order of $2^n$ times 
the length of $\phi$. Consequently, 
for fixed $n$, the length of $\Gamma_n$ is linear with respect to the one of $\phi$, thus we also get a  triply-exponential upper bound for the deterministic time complexity of our decision problem.

%%%%%%%%%%%%%%%%%%%%%%%%%%%%%%
%%%%%%%%%%%%%%%%%%%%%%%%%%%%%%
\section{Non-existence of an intermediate structure between \texorpdfstring{\Ss}{(R,+,<,1)}  and \texorpdfstring{\Ls}{(R,+,<,Z)}}
\label{sec:no-intermediate-structure}
%%%%%%%%%%%%%%%%%%%%%%%%%%%%%%
%%%%%%%%%%%%%%%%%%%%%%%%%%%%%%

Our aim is to prove the following result.

\begin{thm}\label{th:dim1}
	If $X \subseteq \R^n$ is \Ls-definable 
but not \Ss-definable then the set $\Z$ is definable in  \str{\R,+,<,1,X}.
\end{thm}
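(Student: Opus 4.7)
The plan is to extract from $X$ a \str{\R,+,<,1,X}-definable discrete subgroup $G$ of $(\R,+)$ of the form $c\Z$ for some $c\in\Q_{>0}$, and then to recover $\Z$ from $G$ via rational arithmetic. The argument proceeds in three steps: (i) produce an infinite definable discrete subset $S' \subseteq \Q$ of $\R$; (ii) define $G$ as the group of ``eventual periods'' of $S'$; (iii) recover $\Z$ from $G=c\Z$.

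For step (i), Proposition~\ref{pr:CNS1} provides a rational generalized section $X_{s,a}$ of $X$ with infinitely many singular points. The rational entries of $a$ are \Ss-definable, so $X_{s,a}$ is \str{\R,+,<,1,X}-definable; combined with the first-order characterization of singularity of Lemma~\ref{le:equivalence-to-singularity}, the set $S$ of $X_{s,a}$-singular points in $\R^m$ (with $m=n-|s|$) is itself \str{\R,+,<,1,X}-definable. By Proposition~\ref{pr:finitely-many-singular-points} together with Theorem~\ref{th:separation-integer-fractional}, $S$ is contained in finitely many translates of $\Z^m$ by rational vectors (the \Ss-definable fractional parts in the decomposition must be finite sets of rationals, since they contain only rational points); hence $S\subseteq \Q^m$ and every coordinate projection of $S$ lies in a finite union of translates of $\Z$ and is therefore discrete. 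Since $S$ is infinite, some projection $S'=\pi_i(S)$ is an infinite discrete subset of $\Q$, definable via an existential quantifier over the other coordinates.

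After possibly replacing $S'$ by $-S'$, assume $S'$ is unbounded above, and set
\[
G = \{d \in \R \mid \exists M\in\R\ \forall x\in\R\ (x\geq M \rightarrow (x\in S' \leftrightarrow x+d\in S'))\}.
\]
$G$ is \str{\R,+,<,1,X}-definable and is routinely checked to be a subgroup of $(\R,+)$. It is \emph{discrete}: $S'$, being a finite union of translates of Presburger subsets of $\Z$ by rationals, has a positive minimum gap $\delta$; if $G$ were dense, some $d\in G$ with $0<d<\delta$ together with any $x \in S' \cap [M,\infty)$ would yield two elements of $S'$ at distance $d<\delta$, a contradiction. It is \emph{nontrivial}: by Presburger's theorem each unbounded Presburger component of $S'$ is eventually periodic, and a common rational period of these components belongs to $G$. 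Therefore $G = c\Z$ for some $c \in \Q_{>0}$.

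Writing $c=p/q$ in lowest terms with $p,q\in\Z_{>0}$, the equivalences $y\in\Z \iff cy\in G \iff py\in qG$ translate into the first-order formula
\[
y\in\Z \ \iff \ \exists g\in G\ \big(\underbrace{y+\cdots+y}_{p\text{ copies}} = \underbrace{g+\cdots+g}_{q\text{ copies}}\big),
\]
which is expressible in \str{\R,+,<,1,X}. The values $p,q$ depend on $X$, but Theorem~\ref{th:dim1} only requires the existence of some defining formula. The main obstacle in this plan is establishing both the discreteness and the nontriviality of $G$; both rest on the structural description of \Ls-definable subsets of $\Q$ afforded by Theorem~\ref{th:separation-integer-fractional} together with Presburger's theorem applied to the integer parts.
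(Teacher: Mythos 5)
Your proof is correct, and it reaches the conclusion by a route that differs in its organization from the paper's, though the underlying engine is the same: extract from $X$ a one-dimensional \Ls-definable set that is not \Ss-definable and exploit its ultimate periodicity to pin down a rational period, hence $\Z$. The paper proceeds by induction on $n$: via Theorem \ref{th:CNS} either a rational section is not \Ss-definable (handled by the induction hypothesis) or $X$ has infinitely many singular points, in which case it projects the definable singular set to one coordinate and invokes a separately proved base case $n=1$ for arbitrary \Ls-definable, non-\Ss-definable subsets of $\R$ (Lemmas \ref{le:minimal-right-period}, \ref{le:minimal-period} and \ref{le:multiples-of-period}, i.e.\ existence and rationality of a minimal ultimate period). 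You avoid the induction entirely by using Proposition \ref{pr:CNS1} to get a single rational generalized section with infinitely many singular points, and you avoid the minimal-period machinery by exploiting the extra structure of the singular set: it is contained in finitely many rational translates of $\Z^m$ (your observation that an \Ss-definable set consisting only of rational points must be finite --- true, since an infinite \Ss-definable set contains an infinite convex cell, hence a nondegenerate segment, hence irrational points --- is worth one explicit line), so its projection $S'$ is discrete, and the group $G$ of eventual right periods of $S'$ is then discrete and nontrivial, replacing the paper's Lemma \ref{le:multiples-of-period}. Two small points deserve a sentence each: the rationality of the generator $c$ of $G$ is not automatic from $G=c\Z$ but follows because your nontriviality argument puts a positive \emph{integer} period $P$ into $G$, so $c=P/m$ for some positive integer $m$; and the reduction of each $T_j=(-q_j+S')\cap\Z$ to a Presburger subset of $\Z$ should be justified via Theorem \ref{th:separation-integer-fractional} (only the components with $0$ in the fractional part survive). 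With these minor additions your argument is complete; what it buys is a self-contained, induction-free proof specialized to discrete sets, at the cost of not establishing the more general one-dimensional periodicity statements that the paper proves and reuses.
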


In other words, for every  $X \subseteq \R^{n}$ which is \Ls-definable, then \str{\R,+,<,1, X} is inter-definable with either \Ls\  or \Ss.

%%%%%%%%%%%%%%%%%%%%%%%%%%%%%%
\subsection{Periodicity in $\R$}
%%%%%%%%%%%%%%%%%%%%%%%%%%%%%%

\begin{defi}
Consider  $X \subseteq \R$ and $p \in \R\setminus \{0\}$.

Then  $X$  is {\em periodic}
of \emph{period} $p$  (or {\em $p-$periodic})  if for every real $x$ we have
$x\in  X \leftrightarrow x+p\in X$. 

It is {\em ultimately right  $p-$periodic} if  there exists $m \in \R$ such that for every real $x$ with $x \geq m$, we have $x\in X \leftrightarrow x+p\in X$. We say  that $p$ is a \emph{right ultimate period}.

It is {\em ultimately left  $p-$periodic} if  there exists $m \in \R$ such that for every real $x$ with $x\leq m$, we have $x\in X \leftrightarrow x+p\in X$. We say  that $p$ is a \emph{left ultimate period}.
\end{defi}

   Observe that the empty set is $p-$periodic, ultimately right  $p-$periodic,
ultimately left  $p-$periodic for every $p\not=0$.
We apply these  notions and results concerning  \Ss- and \Ls-definable subsets of $\R$.

\begin{prop}
\label{pr:special-ultimately-periodic-sets}
\hfill
 \begin{enumerate}
 	\item  A \Ls-definable  set   $A\subseteq \R$ is
 	periodic if and only if it is of the form $p\Z+B$ where $p \in \Q$ and $B\subseteq [0,p)$ is a 
 	finite union of intervals with rational endpoints. 
 	\item For every  \Ls-definable  set $A\subseteq \R$ 
 	 there exist two periodic sets $A_1,A_2 \subseteq \R$ and two reals $m_1,m_2$ 
 	such that  
 	$A\cap [m_1,+ \infty) = A_1\cap [m_1,+ \infty)$ and   
 	$A \cap (- \infty, m_2] = A_2\cap (- \infty, m_2]$\footnote{Theorem 6.1 of \cite{Weis99}
 misses the case where the subset $A$ 
 has different right and
 left ultimate periods such as $-2 \N \cup 3 \N$.}.
 \end{enumerate}
\end{prop}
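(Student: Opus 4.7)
My plan is as follows. The ``if'' direction of part (1) is immediate, since any set of the form $p\Z + B$ is $p$-periodic. For the ``only if'' direction, I would dispatch the trivial cases $A = \emptyset$ (via $p=1, B=\emptyset$) and $A = \R$ (via $p=1, B=[0,1)$) directly. In the remaining case $A$ admits a boundary point $x_0 \in \partial A$, and by periodicity $x_0 + np \in \partial A$ for every $n \in \Z$.

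The key step is to show $p \in \Q$. I would pick any integer $N$ large enough that $x_0, x_0+p \in (-N, N)$. By Lemma \ref{le:restriction-to-bounded-domain}, the restriction $A \cap [-N,N]$ is \Ss-definable, hence by Corollary \ref{cor:basicSsFacts}(1) a finite union of intervals with rational endpoints. Since $x_0$ and $x_0 + p$ lie in the interior of $[-N,N]$, they remain boundary points of this restriction, hence are rational. Thus $p = (x_0 + p) - x_0 \in \Q$. Setting $B := A \cap [0, p)$, which is \Ss-definable by the same lemma (take the restriction to $[0,p]$ and subtract the rational point $p$), periodicity then yields $A = p\Z + B$.

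For part (2), I would invoke Theorem \ref{th:separation-integer-fractional} to write $A = \bigcup_{k=1}^K (X_k^{(I)} + X_k^{(F)})$ with each $X_k^{(I)} \subseteq \Z$ Presburger-definable and each $X_k^{(F)} \subseteq [0,1)$ \Ss-definable. Since Presburger-definable subsets of $\Z$ are ultimately periodic in both directions (a classical consequence of quantifier elimination in Presburger arithmetic), for each $k$ there exist an integer $n_k^+$ and a set $P_k^+ \subseteq \Z$ periodic of period $q_k^+ > 0$ with $X_k^{(I)} \cap [n_k^+, +\infty) = P_k^+ \cap [n_k^+, +\infty)$. I then define $A_1 = \bigcup_k (P_k^+ + X_k^{(F)})$; this is periodic of period $\mathrm{lcm}_k(q_k^+)$. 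For any integer $m_1 \geq \max_k n_k^+$, using the unique decomposition $x = \lfloor x \rfloor + \{x\}$ (and noting $\lfloor x \rfloor \geq m_1$ whenever $x \geq m_1$ since $m_1 \in \Z$), I obtain $A \cap [m_1, +\infty) = A_1 \cap [m_1, +\infty)$. The mirror construction produces $A_2$ and $m_2$ for the left tail, possibly with a different periodic pattern (as the footnote example $-2\N \cup 3\N$ illustrates).

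No deep obstacle arises: the heart of the argument is the rationality-of-period step in (1) and a careful application of Theorem \ref{th:separation-integer-fractional} in (2). Both rely on tools already established in the paper, together with the classical ultimate periodicity of Presburger-definable sets of integers, so only routine bookkeeping is needed to finish the proofs.
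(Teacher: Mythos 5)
Your proof is correct and follows essentially the same route as the paper: for part (1) you take $B=A\cap[0,p)$ and invoke Lemma \ref{le:restriction-to-bounded-domain} together with Corollary \ref{cor:basicSsFacts}(1), and for part (2) you use the decomposition of Theorem \ref{th:separation-integer-fractional} plus the classical ultimate periodicity of Presburger-definable subsets of $\Z$, exactly as the paper does. The one notable addition is your boundary-point argument showing the period $p$ must be rational (translation by $p$ preserves $\partial A$, whose points are rational by the restriction lemma), a point the paper's own proof of (1) leaves implicit, so this is a welcome refinement rather than a deviation.
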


\begin{proof} $(1)$ 
Assume that $A$ is periodic of period $p$. Let $B=A\cap [0,p)$. Then $A= p\Z+B$. Since $B$ is \Ss-definable it is a finite union of
intervals included in $[0,p)$ with rational endpoints. The converse is trivial.

$(2)$ Let $A\subseteq \R$ be \Ls-definable. We prove the existence of $A_1$ and $m_1$ (the proof for $A_2$ and $m_2$ is similar).
By Theorem \ref{th:separation-integer-fractional} we have 
\begin{equation}
\label{eq:integer-fractional-A-B-C}
A=\bigcup^{K}_{k=1} (B_{k} + C_{k} )
\end{equation}
where all $B_{k} \subseteq \Z$  are disjoint
$\langle \Z,+,<\rangle$-definable subsets and all  $C_{k}\subseteq [0,1)$  are distinct \Ss-definable subsets.

By \cite{Pre29} for every $k$ there exist two integers $n_k,p_k \geq 0$ such that
\begin{equation}
\label{eq:periodZ}
\forall x \geq n_k \  x \in B_k \leftrightarrow x+p_k \in B_k.
\end{equation}
Observe that if (\ref{eq:periodZ}) holds for $n_k,p_k$ then it still holds for the pair of integers $m_1,p \geq 0$ 
where  $p=\text{lcm}\{n_j \mid 1 \leq j \leq K\}$ and $m_1$ is a sufficiently large  a multiple of $p$. Therefore for every $k \in \{1,\dots,K\}$ we have 
\begin{equation}
\label{eq:periodZunif}
\forall x \geq m_1 \ x \in B_k \leftrightarrow x+p \in B_k.
\end{equation}
This implies that there exist $K$ disjoint sets  $S_{k}\subseteq \{0, \ldots, p-1\}$ such that for every $k \in \{1,\dots,K\}$ we have  

\[
B_{k}\cap [m_1, \infty)= (p\Z + S_{k}) \cap [m_1, \infty).
\]
The claim of the proposition follows by setting 
$A_1= p\Z + B $ where $ B=\displaystyle\bigcup^{K}_{k=1} S_{k} + C_{k}\subseteq [0,p)$.
\end{proof}

\begin{lem} 
	\label{le:transfer-period}
With the notations of Proposition  \ref{pr:special-ultimately-periodic-sets}$(2)$,
a real $q$ is a right (resp. left) ultimate period  of the \Ls-definable subset $A$ if and only if it is a period of  $A_1$ (resp. $A_2$).
\end{lem}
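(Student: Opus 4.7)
The plan is to prove the right-hand equivalence (the left one being symmetric). One direction is essentially immediate: if $q$ is a period of $A_1$, then for every $x\geq \max(m_1,m_1-q)$ both $x$ and $x+q$ lie in $[m_1,+\infty)$, so the equivalences $x\in A\leftrightarrow x\in A_1\leftrightarrow x+q\in A_1\leftrightarrow x+q\in A$ follow from the agreement $A\cap[m_1,+\infty)=A_1\cap[m_1,+\infty)$ together with $q$-periodicity of $A_1$. Hence $q$ is a right ultimate period of $A$ with this explicit threshold.

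The real content lies in the converse: from ultimate $q$-periodicity of $A$, deduce \emph{genuine} $q$-periodicity of $A_1$ on all of $\R$. First, one combines the two agreements: if $m$ witnesses that $q$ is a right ultimate period of $A$, then for every $x\geq \max(m,m_1,m_1-q)$ one has $x\in A_1\leftrightarrow x\in A\leftrightarrow x+q\in A\leftrightarrow x+q\in A_1$. So $q$ acts as a period of $A_1$ on a right half-line $[M,+\infty)$, for some explicit $M$.

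The main obstacle is then to upgrade this ``eventual'' periodicity of $A_1$ to a global one. This is where the rational period $p$ supplied by Proposition \ref{pr:special-ultimately-periodic-sets}(1) plays the key role: since $A_1=p\Z+B$ is truly $p$-periodic on all of $\R$, for any $x\in\R$ we may pick a positive integer $n$ large enough that $x+np\geq M$. Then $x+np\in A_1\leftrightarrow x+np+q\in A_1$ by what precedes, and applying $p$-periodicity on both sides yields $x\in A_1\leftrightarrow x+q\in A_1$. Since $x$ was arbitrary, $q$ is a period of $A_1$.

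Nothing in the argument uses rationality of $q$, only the fact that $p$ is a positive period of $A_1$ available everywhere, which is exactly what Proposition \ref{pr:special-ultimately-periodic-sets}(1) provides. The symmetric case (left ultimate period versus period of $A_2$) is handled by the same reasoning, choosing $n$ so that $x-np\leq M_2$ for the corresponding left threshold.
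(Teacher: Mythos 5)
Your proof is correct and follows essentially the same route as the paper: both directions are chains of equivalences combining the agreement $A\cap[m_1,+\infty)=A_1\cap[m_1,+\infty)$ with periodicity. You are in fact more careful than the paper's two-line proof, which only records that $q$ is an eventual period of $A_1$ on a right half-line and leaves implicit the upgrade to genuine periodicity of $A_1$; your explicit use of a global period $p$ of $A_1$ (translating an arbitrary $x$ by $np$ into the half-line), as well as your handling of the thresholds when $q<0$, supplies exactly the steps the paper glosses over.
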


\begin{proof}
We only give the proof for $A_1$ (the proof for $A_2$ can be handled 
similarly). It follows from the equivalences

\[
\begin{array}{ll}
\forall y\geq m_{1} &  y\in A \leftrightarrow  y\in A_1 \leftrightarrow y+q \in A_{1} \leftrightarrow y+q \in A\\
\forall y\geq m_{1}  &y\in A _{1}\leftrightarrow  y\in A \leftrightarrow y+q \in A \leftrightarrow y+q \in A_{1}.
\end{array}
\]
                \par \vspace{-1.1\baselineskip}\qedhere
\end{proof}

\begin{lem}
\label{le:minimal-period}
With the notations of Proposition  \ref{pr:special-ultimately-periodic-sets} $(2)$, if $A=p\Z+B$ with $\emptyset \subsetneq  B\subsetneq [0,p)$, then  
the set of periods of $A$ 
is a discrete cyclic subgroup of $\R$ whose elements are rational.
It is generated by its element of minimal positive absolute value.
\end{lem}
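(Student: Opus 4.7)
The plan is to exploit the standard dichotomy that every subgroup of $(\R,+)$ is either dense in $\R$ or cyclic (hence discrete). First I would check that the set $G$ of periods of $A$, together with $0$, forms a subgroup of $(\R,+)$: this is immediate from the definition of periodicity (closure under opposite and sum). So the task reduces to ruling out the dense case and then pinning down the generator.

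To rule out density, I would use the explicit form $A = p\Z + B$ from Proposition~\ref{pr:special-ultimately-periodic-sets}(1), where $B$ is a finite union of intervals in $[0,p)$ with $\emptyset \subsetneq B \subsetneq [0,p)$. These hypotheses ensure that the topological boundary $\partial A$ is (a) nonempty, because $B$ is a proper nonempty subset of $[0,p)$ made of finitely many intervals and so has at least one endpoint producing a boundary point of $A$, and (b) discrete, as it is $p$-periodic with only finitely many points per fundamental domain. Now translation by any $q \in G$ is a homeomorphism of $\R$ sending $A$ to $A$ and its complement to itself, hence preserving $\partial A$. Fixing any $x_0 \in \partial A$, the orbit $x_0 + G$ therefore lies inside $\partial A$. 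If $G \cup \{0\}$ were dense in $\R$, then $x_0 + G$ would be dense in $\R$, contradicting discreteness of $\partial A$.

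It follows that $G \cup \{0\}$ is a discrete subgroup of $\R$, and the classical classification gives $G \cup \{0\} = p_0 \Z$ for a unique $p_0 > 0$, which is exactly the element of smallest positive absolute value in $G$. To obtain rationality, I would invoke the hypothesis that the given period $p$ lies in $\Q$: since $p \in p_0 \Z$, there is a nonzero integer $k$ with $p = k p_0$, whence $p_0 = p/k \in \Q$. Every other period is then an integer multiple of $p_0$, hence rational as well.

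The only real delicacy is the boundary argument: one must use both inclusions $\emptyset \subsetneq B$ and $B \subsetneq [0,p)$ to guarantee $\partial A \neq \emptyset$, and one must use that $B$ is a finite union of intervals (granted by Proposition~\ref{pr:special-ultimately-periodic-sets}(1)) to guarantee discreteness; without these, the orbit contradiction would not go through. Everything else is routine group theory applied to the standard classification of subgroups of $(\R,+)$.
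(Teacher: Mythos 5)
Your proof is correct, but it takes a genuinely different route from the paper on both halves of the statement. For discreteness, the paper argues directly from the shape of $B$: it splits into three cases ($B$ a single point or $(0,p)$; $B$ and its complement each a single interval; $B$ or its complement containing at least two intervals) and in each case extracts an explicit positive lower bound on any period, which yields a minimal positive period and hence cyclicity. You instead observe that any period, acting by translation, is a homeomorphism preserving $A$ and its complement, hence preserving $\partial A$, and that $\partial A$ is nonempty and discrete (it is $p$-periodic with finitely many points per fundamental domain, since $B$ is a finite union of intervals and $\emptyset \subsetneq B \subsetneq [0,p)$); so the orbit of a boundary point rules out the dense alternative in the dense-or-cyclic dichotomy for subgroups of $(\R,+)$. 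This is more conceptual and avoids the case analysis, at the price of invoking the classification of subgroups of $\R$; the paper's computation is more hands-on and incidentally produces explicit bounds on the period, though these are not used later. (One small point: your phrase that some endpoint of $B$ ``produces a boundary point'' is slightly loose -- a given endpoint need not lie in $\partial A$ -- but nonemptiness of $\partial A$ is immediate anyway, since $A\cap[0,p)=B$ makes $A$ a nonempty proper subset of the connected space $\R$.) For rationality, the paper proceeds model-theoretically: the set of positive periods is \Ls-definable, hence so is its minimum $p_0$, and Corollary \ref{cor:basicSsFacts}(2) forces $p_0\in\Q$; you instead note that the rational $p$ supplied by Proposition \ref{pr:special-ultimately-periodic-sets}(1) is itself a period, hence an integer multiple of $p_0$, giving $p_0=p/k\in\Q$ directly. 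Your argument is more elementary; the paper's is in keeping with its general ``express it in the logic'' methodology. Both are valid.
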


\begin{proof}
The set $P$ of periods of $A$ is clearly a subgroup. 
Let us prove that $P$ is discrete, i.e. that there cannot be arbitrarily small periods. Indeed, 
set $A=p\Z + B$ where $p \geq 0$,  $B$ is a finite union of intervals in $[0,p)$
and $\emptyset \subsetneq  B\subsetneq [0,p)$. Let $q>0$ be a period of $A$. 
We consider three exclusive cases:

\begin{enumerate}

\item if  $B=\{0\}$ or $B=(0,p)$, then the condition $q=0+q\in A $
implies $q\geq p$.

\item if both $B$ and $[0,p) \setminus B$ consist of a unique interval, then we have 
either $B=[a,p)$ with $a>0$ or $B=[0,b)$ with $b<p$. In the first case we have $q\geq a$ and in the 
latter case $q\geq p-b$.

\item if $B$ or $[0,p) \setminus B$ consist of at least two intervals:  since  a set and its complement have the same periods, we can assume without loss of generality that $B$ contains 
two disjoint and consecutive intervals with respective extremities  $a_{1},b_{1} $ and $ a_{2},b_{2}$ (the proof for the other case is similar).
If $b_{1}= a_{2}$ then $(a_{1},b_{1}) $ is right open and  $ (a_{2},b_{2})$ is left open.
Then the fact that $q$ is a period and the equality $b_{1}= a_{2}\in A $  imply  $b_{1}-q, a_{2}+q\in A$ thus $q\geq \max\{b_{1}-a_{1}, b_{2}-a_{2}\}$.
Now if $b_{1}<a_{2}$ then for all $a_{1}<y<b_{1} $ we have $y+q\in A$ which implies $q\geq a_{2}-b_{1}$. 
\end{enumerate}

We proved that $P$ admits a minimal positive element, say $p_0$. The fact that $P$ is cyclic  and generated by $p_0$ is well-known. In order to prove that $P \subseteq \Q$, it suffices to prove that $p_0 \in \Q$. Now the \Ls-formula
\[
\phi(p)= p>0 \wedge \forall x\ (x\in A \leftrightarrow x+p\in A)
\]
defines the set $P^+$ of positive elements of $P$, thus $p_0$ is \Ls-definable as the minimal element of $P^+$, and the result follows from Corollary \ref{cor:basicSsFacts}(2). 
\end{proof}

As a consequence of Lemmas \ref{le:transfer-period} and \ref{le:minimal-period}, and Proposition \ref{pr:special-ultimately-periodic-sets} we obtain the following result.

\begin{lem}
\label{le:minimal-right-period}
If $A \subseteq \R$ is \Ls-definable but not \Ss-definable then $A$ has either  a minimal ultimate right or a minimal ultimate left period.
\end{lem}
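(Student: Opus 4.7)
The plan is to prove the contrapositive: assuming $A$ has neither a minimum positive right ultimate period nor a minimum positive left ultimate period, I will show that $A$ must be \Ss-definable.

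The starting point is Proposition \ref{pr:special-ultimately-periodic-sets}(2), which gives periodic \Ls-definable sets $A_1, A_2 \subseteq \R$ and reals $m_1, m_2$ with $A \cap [m_1, +\infty) = A_1 \cap [m_1, +\infty)$ and $A \cap (-\infty, m_2] = A_2 \cap (-\infty, m_2]$. A preliminary technical step is to observe that $m_1$ and $m_2$ may be taken to be rational (and with $m_1 > m_2$): since the agreement condition is preserved when we shrink the half-lines, we may freely replace $m_1$ by any rational $m_1' \geq m_1$ and $m_2$ by any rational $m_2' \leq m_2$ without affecting the periods of $A_1$ or $A_2$.

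By Lemma \ref{le:transfer-period}, the right (resp. left) ultimate periods of $A$ coincide with the periods of $A_1$ (resp. $A_2$). Therefore, under the contrapositive hypothesis, neither $A_1$ nor $A_2$ possesses a minimum positive period. Now Proposition \ref{pr:special-ultimately-periodic-sets}(1) lets us write each $A_i$ as $p_i \Z + B_i$ with $B_i \subseteq [0,p_i)$ a finite union of intervals with rational endpoints; and Lemma \ref{le:minimal-period} guarantees that as soon as $\emptyset \subsetneq B_i \subsetneq [0,p_i)$, the period group of $A_i$ is cyclic discrete, hence admits a minimum positive element. Consequently, the hypothesis forces $B_i \in \{\emptyset, [0,p_i)\}$, that is, $A_i \in \{\emptyset, \R\}$ for both $i=1,2$.

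Once both $A_1$ and $A_2$ are trivial in this sense, $A$ is assembled from three \Ss-definable pieces: the set $A \cap (-\infty, m_2]$ is either empty or equal to the half-line $(-\infty, m_2]$ with rational endpoint; symmetrically for $A \cap [m_1, +\infty)$; and the bounded restriction $A \cap [m_2, m_1]$ is \Ss-definable by Lemma \ref{le:restriction-to-bounded-domain} since $m_1, m_2 \in \Q$. Taking the union, $A$ is \Ss-definable, contradicting the hypothesis of the lemma and completing the proof.

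The only genuine subtlety is the rationalization of $m_1, m_2$; everything else is a direct bookkeeping step using the machinery already assembled in Proposition \ref{pr:special-ultimately-periodic-sets}, Lemma \ref{le:transfer-period}, Lemma \ref{le:minimal-period}, and Lemma \ref{le:restriction-to-bounded-domain}.
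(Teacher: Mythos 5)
Your proof is correct, and it uses exactly the same machinery as the paper -- Proposition \ref{pr:special-ultimately-periodic-sets}, Lemmas \ref{le:transfer-period}, \ref{le:minimal-period} and \ref{le:restriction-to-bounded-domain} -- but runs the argument in the contrapositive direction, which changes the intermediate steps. The paper argues directly: since $A$ is not \Ss-definable, one of $A\cap[0,\infty)$, $A\cap(-\infty,0]$ is not \Ss-definable, hence (by Lemma \ref{le:restriction-to-bounded-domain}) unbounded on that side, which forces the corresponding set $B$ in the representation $A_1=p\Z+B$ (and likewise for the complement) to satisfy $\emptyset\subsetneq B\subsetneq[0,p)$, so that Lemmas \ref{le:transfer-period} and \ref{le:minimal-period} hand over a minimal ultimate period. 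You instead show that if neither a minimal right nor a minimal left ultimate period exists, then Lemma \ref{le:minimal-period} forces $A_1,A_2\in\{\emptyset,\R\}$, and you reassemble $A$ from three \Ss-definable pieces: two rational half-lines (or empty sets) and a bounded restriction handled by Lemma \ref{le:restriction-to-bounded-domain}. Your rationalization of $m_1,m_2$ is indeed the one point that needs care in this direction (both for the half-lines and for the bounded piece), and your shrinking argument settles it; the paper's direct route avoids that step but needs the unboundedness observation instead. The only implicit assumption you share with the paper is that $A_1,A_2$ are taken in the form $p\Z+B$ produced by the proof of Proposition \ref{pr:special-ultimately-periodic-sets}, which is what ``with the notations of'' means there, so nothing is missing.
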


\begin{proof}
	Either $A \cap (-\infty,0]$ or $A \cap [0,\infty)$ is not \Ss-definable. Assume that the latter case holds, and let us prove that $A$ admits a minimal ultimate right period.  By Proposition \ref{pr:special-ultimately-periodic-sets} there exist a periodic subset $A_1$ and a real $m$ such that 
	$A\cap [m, \infty[ = A_1\cap [m, \infty[$. With the notations of Proposition  \ref{pr:special-ultimately-periodic-sets} we have $A_1=p\Z+B$  where $ B\subseteq [0,p)$. Now  the set $A \cap [0,\infty)$ is \Ls-definable but not \Ss-definable, thus by Lemma \ref{le:restriction-to-bounded-domain} it cannot have an upper bound, and the same holds for $A_1$. It follows that $B \ne \emptyset$.  Using a similar argument with the complement of $A$ leads to $B \ne [0,p)$.
	Thus by Lemmas  \ref{le:transfer-period} and \ref{le:minimal-period}, $A_1$ admits a minimal period which is also a minimal ultimate right period of $A$.
	
	Similarly if $A \cap (-\infty,0]$ is not \Ss-definable then $A$ admits a minimal ultimate left period.
\end{proof}

\begin{lem}
\label{le:multiples-of-period}
If $A\subseteq \R$ is ultimately right  periodic and admits a  minimal ultimate right period $p>0$, then
\begin{enumerate}
	\item For all reals $x,y$, if
	\begin{equation}
	\label{eq:defining-Z}
	\forall u \geq 0 \ (x+u\in A \leftrightarrow y+u \in A)
	\end{equation}
	then $p$ divides $x-y$.
	\item Conversely for every $z$ multiple of $p$ there exist $x,y$  satisfying (\ref{eq:defining-Z}) and
	$x-y=z$. 
\end{enumerate}
\end{lem}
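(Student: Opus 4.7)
My plan for part (1) is to interpret the hypothesis~(\ref{eq:defining-Z}) as saying that $q = x - y$ is an ultimate right period of $A$ with threshold $y$, and then to identify the set of ultimate right periods as exactly $p\Z$. Substituting $v = y + u$ rewrites~(\ref{eq:defining-Z}) as: for every $v \geq y$, $v \in A \leftrightarrow v + q \in A$, which is precisely the definition of $q$ being an ultimate right period (with threshold $y$). Once the set of ultimate right periods is shown to be $p\Z$, we get $q \in p\Z$ as desired.

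To establish that the set of ultimate right periods equals $p\Z$, I would first check that it forms a subgroup of $\R$: if $q_1, q_2$ are ultimate right periods with thresholds $m_1, m_2$, then $q_1 + q_2$ is one with threshold $\max(m_1 - q_2, m_2)$, and $-q_1$ is one with threshold $m_1 + |q_1|$. Then, using Proposition~\ref{pr:special-ultimately-periodic-sets}(2), I would fix a truly periodic $A_1$ and a real $m_1$ such that $A \cap [m_1, \infty) = A_1 \cap [m_1, \infty)$. By Lemma~\ref{le:transfer-period} the ultimate right periods of $A$ coincide with the periods of $A_1$. The existence of a \emph{minimal} positive ultimate right period $p$ forces $A_1$ to be a nontrivial periodic set, so writing $A_1 = p'\Z + B'$ with $\emptyset \subsetneq B' \subsetneq [0, p')$ and applying Lemma~\ref{le:minimal-period} shows that this group of periods is cyclic, generated by its minimal positive element, which must then be $p$. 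Hence the group is $p\Z$, concluding part~(1).

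For part~(2), given $z = k p$ with $k \in \Z$, I would pick $y$ large enough that both $y \geq m_1$ and $y + z \geq m_1$ (take $y \geq m_1 + |z|$, say) and set $x = y + z$. Then for every $u \geq 0$ the points $y + u$ and $x + u$ both lie in $[m_1, \infty)$, so
\[
x + u \in A \ \leftrightarrow\ x + u \in A_1 \ \leftrightarrow\ y + u \in A_1 \ \leftrightarrow\ y + u \in A,
\]
where the middle equivalence uses that $A_1$ is $p$-periodic and $z = kp$. The only mildly delicate point in the whole argument is verifying that the group of ultimate right periods is not strictly larger than $p\Z$; this is precisely what the transfer to the genuinely periodic companion $A_1$ and the application of Lemma~\ref{le:minimal-period} are designed to handle, so no new idea is needed.
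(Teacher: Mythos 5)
Your argument is correct and, in both parts, rests on the same two ideas as the paper's proof: for (1), read condition (\ref{eq:defining-Z}) as saying that $x-y$ is an ultimate right period of $A$ (the paper's proof is exactly the one-line change of variable you perform), and for (2), place $x$ and $y$ far enough to the right that only the periodic tail of $A$ matters (the paper takes $x=m$, $y=m+q$ directly, where $m$ is a threshold for the period $p$ of $A$, whereas you route this through the periodic companion $A_1$; your version treats positive, negative and zero multiples uniformly, a small gain). The genuine difference is how you justify that every ultimate right period is a multiple of $p$: you transfer to $A_1$ via Proposition \ref{pr:special-ultimately-periodic-sets}(2) and Lemmas \ref{le:transfer-period} and \ref{le:minimal-period}. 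Be aware that these results are stated for \Ls-definable $A$, while Lemma \ref{le:multiples-of-period} itself carries no definability hypothesis; read literally, your citations are therefore not licensed for an arbitrary ultimately right periodic $A$. This is harmless where the lemma is actually applied (in the proof of Theorem \ref{th:dim1} the set in question is \Ls-definable), and it is easily repaired in full generality: you already verify that the ultimate right periods together with $0$ form a subgroup of $(\R,+)$, and a subgroup of $\R$ that has a minimal positive element $p$ equals $p\Z$ by the usual division argument, so neither $A_1$ nor the structure theory of \Ls-definable subsets of $\R$ is needed for the divisibility step. With that substitution your proof coincides with the paper's, with the ``minimality implies divisibility'' step made explicit.
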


\begin{proof}
$(1)$ 
It suffices to prove that 
 $x-y=q$ is an ultimate right period. We have
\[
\begin{array}{ll}
\forall v \geq x   \quad  v=x+(v-x)\in A \leftrightarrow y + (v-x)\in A \leftrightarrow v+q\in A.\\
\end{array}
\]

$(2)$
Let $m$ be such that 
\[
\forall t\geq m\ (t\in A \leftrightarrow t+p\in A).
\]
This condition implies that for all positive multiples $q$ of $p$ we have
\[
\forall t \geq m\ (t\in A \leftrightarrow t+q\in A).
\]
Set $x=m, y=m+q$ and $u=t-m$. The above condition is equivalent to
\[
\forall u\geq 0\ (x+u\in A \leftrightarrow y+u\in A).\qedhere
\]
\end{proof}

\subsection{Proof of Theorem \ref{th:dim1}}

\medskip Now we proceed by  induction on the dimension $n$.
Assume $n=1$. If  $X$ is definable in \Ls\  but not in \Ss\ then by Lemma \ref{le:minimal-right-period}, the set $X$ has either a minimal ultimate left or a minimal ultimate right period. Assume w.l.o.g that the latter case holds, i.e that $X$ has an  ultimate right period $p>0$. By Lemma \ref{le:minimal-period}, $p$ is a rational
number, say $p=\frac{a}{b}$.

Applying Lemma \ref{le:multiples-of-period} the subset $\Z$ can be defined in $\langle  \R, +,<,1,X \rangle$ by the formula
\[
\phi_ {X}(x)= \exists y, z\ \forall u \geq 0 \ ((y+u\in X \leftrightarrow z+u\in X) \wedge ax=b(y-z)).
\]

 Now we pass to the general case $n \geq 2$.
Let  $X \subseteq \R^n$ be \Ls-definable but not \Ss-definable. By Theorem \ref{th:CNS}, either some rational section of $X$ is not \Ss-definable,  or  $X$ admits infinitely many singular points. In the first case, the result follows from the induction hypothesis and the fact that every rational section of  $X$ is definable in  \str{\R,+,<,1, X} thus also in \Ls.

 In the second case, by Proposition \ref{pr:finitely-many-singular-points} the set  
  $X$ contains  a countably infinite number of singular points.
The set $S$ of singular points is $\langle  \R, +,<,1,X \rangle$-definable by the formula 
(\ref{eq:I-singular}) thus it is 
also  \Ls-definable and the same holds for any projection of $S$ over a component. 
Therefore some of the $n$ projections over the $n$ components is a \Ls-definable  subset of $\R$
and contains a countably infinite number of singular points
thus is not \Ss-definable, and we may apply case $n=1$.
\qed

%%%%%%%%%%%%%%%%%%%%%%%%%%%%%%
%%%%%%%%%%%%%%%%%%%%%%%%%%%%%%
\section{Yet another characterization}
\label{sec:combinatorial-characterization}
%%%%%%%%%%%%%%%%%%%%%%%%%%%%%%
%%%%%%%%%%%%%%%%%%%%%%%%%%%%%%

We provide in this section an alternative characterization of  \Ss-definability for \Ls-definable relations.

A line in $\R^n$ is  {\em rational } if it is  
the intersection of hyperplanes defined by equations with rational coefficients. Every rational line is \Ss-definable. 

\begin{thm}
	\label{th:intersections-with-lines}
	A \Ls-definable relation  $X\subseteq \R^{n}$ is  \Ss-definable if and only if the intersection of $X$ with every rational line is \Ss-definable.
\end{thm}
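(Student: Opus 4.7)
The forward direction is immediate since rational lines are \Ss-definable and \Ss-definability is closed under intersection. For the backward direction, I would proceed by induction on $n$, with the trivial base case $n=1$ (the only line in $\R$ being $\R$ itself), and the inductive step relying on Theorem \ref{th:CNS}. Condition (2) of that theorem, that every rational section $X \cap \{x_i = c\}$ with $c \in \Q$ is \Ss-definable, follows immediately from the induction hypothesis applied inside the hyperplane $\{x_i=c\} \cong \R^{n-1}$, since every rational line of that hyperplane is also a rational line of $\R^n$, so the hypothesis transfers to the section.

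The crux is establishing condition (1), that $X$ has finitely many singular points. Arguing by contradiction, suppose the set $S$ of singular points is infinite. By Proposition \ref{pr:finitely-many-singular-points}, $S$ is a countable subset of $\Q^n$, and by Lemma \ref{le:equivalence-to-singularity} it is \Ls-definable. Applying Theorem \ref{th:separation-integer-fractional} to $S$ and using that an \Ss-definable subset of $[0,1)^n$ consisting only of rational points must be finite (every positive-dimensional rational polyhedron contains irrationals), one decomposes $S$ as a finite union of rational translates of Presburger-definable subsets of $\Z^n$. Semi-linearity of Presburger then yields an infinite arithmetic progression $\{a+\lambda b : \lambda \in \N\}$ with $a,b \in \Q^n$ and $b \ne 0$, contained in $S$ and hence lying on the rational line $L=\{a+\lambda b : \lambda\in\R\}$. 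By the finiteness of $\sim$-classes (Corollary \ref{cor:Ls-cones}) and pigeonhole, one may further assume that infinitely many of these singular points share a common $\sim$-class, so that the local cone structure of $X$ near each of them is identical up to translation along $b$.

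I would then split into two cases. If at one (hence at each) of these singular points $p$ the line $L$ is neither locally contained in $X$ nor locally disjoint from $X$, then each such $p$ is a boundary point of $X \cap L$ in $L$, contradicting the \Ss-definability of $X \cap L$ since an \Ss-definable subset of $\R$ has only finitely many boundary points. Otherwise, say $L$ is locally contained in $X$ near each such $p$ (the complementary case being symmetric). Singularity forbids $X$ from containing a full neighborhood of $p$, so by the rational local-cone description of Proposition \ref{pr:neighborhood-1} there is some open set of directions from $p$ avoiding $X$; density of $\Q^n$ then lets one pick a rational vector $u$ and a small rational $\epsilon>0$ for which the translated line $L+\epsilon u$ is not locally contained in $X$ near $p+\epsilon u$. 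Translation-invariance coming from the common $\sim$-class ensures that the same $u,\epsilon$ witness this simultaneously at every chosen singular point on $L$, so the rational line $L'=L+\epsilon u$ has infinitely many boundary points of $X\cap L'$, contradicting the hypothesis. The main obstacle is this last uniform rational perturbation step: it requires combining the semi-linearity of $S$ (to put the singular points on a rational line in the first place), the finiteness of $\sim$-classes (to make the local structure uniform along the line), and the rational cone description near singular points (to extract a rational exit direction that works at each of them at once).
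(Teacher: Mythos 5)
Your overall strategy is the same as the paper's (reduce to showing that infinitely many singular points force a rational line whose trace on $X$ is not \Ss-definable, using the decomposition theorem and semilinearity to place infinitely many singular points with identical local structure on a rational line), and your Case 1 is fine. But there is a genuine gap in Case 2, precisely at the step you yourself flag as the crux. The claim ``singularity forbids $X$ from containing a full neighborhood of $p$, so there is some open set of directions from $p$ avoiding $X$'' is false: the local complement of $X$ at a singular point can have empty interior. For instance, in $\R^2$ take $X=\R^2\setminus\{(x,y)\mid x>0,\ y\in\Z\}$; each point $(0,k)$ is singular, the vertical line $L=\{x=0\}$ is locally contained in $X$ at each of them, yet the complement near $(0,k)$ is a single open ray, so no open cone of directions avoids $X$ and your density-of-$\Q^n$ argument has nothing to latch onto. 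Moreover, even if you find a nearby parallel line meeting the complement, that is not enough: to get infinitely many boundary points of $X\cap L'$ you need $L'$ to meet \emph{both} $X$ and its complement near each translated point (a line locally inside the complement produces no boundary point and no contradiction, e.g.\ if $L'\cap X=\emptyset$). Finally, your pigeonhole on $\sim$-classes only yields translation-equivalence of neighborhoods with a radius depending on the pair of points, so the claim that one pair $(u,\epsilon)$ works ``simultaneously at every chosen singular point'' is not justified as written.

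The paper closes exactly these holes differently. It gets uniformity not from $\sim$-classes but from the tiling of Theorem \ref{th:separation-integer-fractional}: it extracts an arithmetic progression $A=u+\N v$ of singular points whose \emph{entire} unit-cube neighborhoods coincide after translation, so one fixed radius $r$ works for all of them. Then, instead of a density/open-cone argument, it considers the set $Y_r$ of offsets $t$ with $|t|<r$ such that $B(w',r)\cap L_v(w'+t)$ meets both $X$ and its complement; non-stratumhood of $v$ at the singular point makes $Y_r$ nonempty, and since $Y_r$ is \Ls-definable and bounded it is \Ss-definable (Lemma \ref{le:restriction-to-bounded-domain}), hence contains a \emph{rational} $t$ by Corollary \ref{cor:basicSsFacts}(2). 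This definability trick is what simultaneously delivers rationality of the witness line and the ``meets both sides'' property; your proof needs a replacement for it (or simply this argument) before Case 2 stands.
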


\begin{proof} The condition is necessary because  every rational line is \Ss-definable and the  set of \Ss-definable relations 
	is closed under intersection.  
	
	\bigskip Now we prove the converse.
By Proposition \ref{pr:CNS1} 	it suffices to prove that all rational restrictions have finitely many singular points.
Since all rational restrictions  of $X$ are \Ls-definable the result will follow from the next Lemma.
	
\begin{lem}
\label{le:}
Let $X\subseteq \R^{n}$ be  \Ls-definable relation  and assume the intersection of $X$ with every rational line is \Ss-definable. 
Then $X$ has finitely many singular points.
\end{lem}	

\begin{proof} 
	We prove that if $X$ has infinitely many singular points then there exists some rational line $l$ such that $l \cap X$ is not a finite union of segments, which contradicts the fact that $l \cap X$ is \Ss-definable. 
		
	Let us first prove that the set of singular points is unbounded. Assume for a contradiction that there exists a rational number $M>0$ such that $X \cap [0,M)^n$ contains all $X-$singular points. The set $X'=X \cap [0,M]^n$ is \Ss-definable by Proposition \ref{le:restriction-to-bounded-domain}, and moreover by definition every $X-$singular element of $[0,M)^n$ is also $X'-$singular, thus there exist infinitely many $X'-$singular points, which contradicts Theorem \ref{th:CNS}.
		
	An hypercube of the form $[a_{1}, a_{1}+1]\times \cdots \times [a_{n}, a_{n}+1]$
	with $a_{1},\ldots,  a_{n}\in \Z$ is called  \emph{elementary}. All $X$-singular points 
	which belong to the interior of some  elementary hypercube $H$ are $(X\cap H)$-singular 
	and the converse is true. Observe that a point may be $X$-singular without being $(X\cap H)$-singular
	{ if it belongs to the boundary of the hypercube}.
	In order to avoid this problem 
	we consider an elementary cube surrounded by its $3^{n}-1$ neighbours. Then a point in a elementary 
	cube is $X$-singular if and only if it is singular in this enlarged hypercube.
	
	Formally, we start by extending the notion of integer part to vectors in $\R^{n}$ by setting 
	for every point  $x\in \R^{n}$ 
	\[
	\lfloor x \rfloor= (\lfloor x_{1} \rfloor ,  \ldots, \lfloor x_{n} \rfloor ).
	\]
	Set $\mathcal{S}=  \{-1,0,1\}^{n}$. For all $\sigma \in \mathcal{S}$ and $a\in \Z^{n}$ define
	\[
	\begin{array}{l}
	D_{\sigma} = \overbrace{[0,1]\times \cdots \times [0,1]}^{n \text{ times}} + \sigma, \ 
	H_{\sigma}(a)= (a + D_{\sigma})\cap X, \  L_{\sigma}(a) 
	= - a +  H_{\sigma}(a)
	\end{array}
	\]
	and 
	\[
	H(a)= \bigcup_{\sigma\in \mathcal{S}}  H_{\sigma}(a), \
	L(a)= \bigcup_{\sigma\in \mathcal{S}}  L_{\sigma}(a).
	\]

	Because of the decomposition of Theorem \ref{th:separation-integer-fractional},
	the set $\mathcal{L}$ of distinct subsets $L(a)$,
	called \emph{elementary neighborhood}, 
	is finite when $a$ ranges over $\Z^{n}$ since each $-\sigma + L_{\sigma}(a)$
	is equal to some $X^{(F)}_{k}$. 
	Furthermore, say that  $H(a)$ \emph{contains a singular point}  $x$ if $x$ is 
	$X$-singular and $x\in H_{(0,\ldots, 0)}(a)$.
	Since the set of singular points is unbounded, there exist a fixed $L\in \mathcal{L}$ and infinitely many $a$'s such that 
	$a+ L_{(0,\ldots, 0)}(a)$ contains a singular point. Furthermore, in each 
	$L_{(0,\ldots, 0)}(a)$ there   exist  at most finitely many 
	$X$-singular points.
	Consequently, there exists a fixed element $z\in \Q^{n}\cap [0,1]\times \cdots \times [0,1]$ and a fixed 
	elementary neighborhood $L$ such that for infinitely many $a\in \Z^{n}$ the point $a+z$ is $X$-singular and  
	$L_{a} = L$. 
	Consider the integer $K$ as defined in Theorem \ref{th:separation-integer-fractional}, and the mapping  $\iota:\mathcal{S}\mapsto \{1, \ldots, K\}$
	such that 
	\[
	L=  \bigcup_{\sigma\in \mathcal{S}}  \sigma + X^{(F)}_{\iota(\sigma)}. 
	\]
	The set of elements $a$ satisfying  
	$L_{a} = L$ is the intersection 
	\[
	\displaystyle \bigcap_{\sigma\in \mathcal{S}}  -\sigma + X^{(I)}_{\iota(\sigma)} \subseteq \Z^{n}
	\]
	which is  infinite and semilinear.
	Now, all infinite semilinear subsets contain a subset of the form
	$A=u+\N v$ where $u,v\in \Z^{n}$ with $v\not=0$.  
	
 	For some sufficiently small positive real $r$, for all $w\in A$  the sets 
	$-w + (B(w,r) \cap X)$ are identical.  Consider any element $w' \in A$, and assume further that $r \in \Q$. Since $w'$ is singular, $B(w',r)\cap X$ is not a union of lines parallel
   	to any direction and in particular to the direction 
   	$v$. Thus the set $Y_r$ of points $t \in \R^n$ such that $|t|<r$ and $B(w',r)\cap L_v(w'+t)$ intersects both $X$ and its complement is not empty. Now both $w'$ and $r$ have rational components, and $X$ is \Ls-definable, thus $Y_r$ is also \Ls-definable, and since it is bounded it is also \Ss-definable by Lemma \ref{le:restriction-to-bounded-domain}. By Corollary \ref{cor:basicSsFacts}(2), $Y_r$ contains an element $t\in \Q^n$. Now for all $w\in A$  the sets
	$-w + (B(w,r) \cap X)$ are identical, and $|t|<r$, thus for every $w \in A$ the segment $B(w,r)\cap L_v(w+t)$ intersects both $X$ and its complement. It follows that the line $\ell=L_v(w'+t)$ (which coincides with all lines $L_v(w+t)$ for $w \in A$) is such that $l \cap X$ is not a finite union of segments, and thus by Corollary \ref{cor:basicSsFacts}(1) cannot be \Ss-definable. 	   	
\end{proof}
This concludes the proof of Theorem \ref{th:intersections-with-lines}.
\end{proof}

%%%%%%%%%%%%%%%%%%%%%%%%%%%%%%
%%%%%%%%%%%%%%%%%%%%%%%%%%%%%%
\section{Conclusion}
\label{sec:conclusion}
%%%%%%%%%%%%%%%%%%%%%%%%%%%%%%
%%%%%%%%%%%%%%%%%%%%%%%%%%%%%%

We discuss some extensions and open problems.

It is not difficult to check that the main arguments used to prove Theorems \ref{th:eff}, \ref{th:dim1} and \ref{th:intersections-with-lines}, still hold if one replaces $\R$ with $\Q$. Observe that by \cite{Mil01}, \str{\Q, +, <, \Z} and \Ls\ are elementary equivalent structures.

Is it possible to remove our assumption that $X$ is \Ls-definable in Theorem \ref{th:CNS}? We believe that the  answer is positive\footnote{{we proved this claim after the present paper was written (preprint available at \url{https://arxiv.org/abs/2102.06160}).}}.  
Note that even if one proves such a result, the question of providing an effective characterization is more complex. Indeed the sentence $\psi_n$ of Proposition \ref{prop:autodef} expresses a variant of the criterion of Theorem \ref{th:CNS}, and we use heavily the fact that we work within \Ls \, to ensure that this variant is actually equivalent to the criterion.

In particular the construction of $\psi_n$
relies on Lemma \ref{le:equivalence-to-singularity} to express that a point is $X-$singular. 
However if we consider, e.g., $X=\Q$ then every element $x$ of $X$ is singular while no element $x$ of $X$ satisfies the condition stated in Lemma \ref{le:equivalence-to-singularity}. %$\sing_{n}(x,X)$.

Another question is the following. In Presburger arithmetic it is decidable whether or not a formula is equivalent to a  formula in the structure without $<$, cf. \cite{ChoFri}. 
What about the case of the structure \Ls\ ?

%\bigskip

\section*{Acknowledgment}

We wish to thank the anonymous referee and Erich Gr\"adel for useful suggestions.

\bibliographystyle{alpha}

\bibliography{rz-lmcs}

\begin{thebibliography}{FQSW20}

\bibitem[BB09]{BB2009}
Bernard Boigelot and Julien Brusten.
\newblock A generalization of {C}obham's theorem to automata over real numbers.
\newblock {\em Theoretical Computer Science}, 410(18):1694--1703, 2009.

\bibitem[BBB10]{BBB2010}
Bernard Boigelot, Julien Brusten, and V{\'e}ronique Bruy{\`{e}}re.
\newblock On the sets of real numbers recognized by finite automata in multiple
  bases.
\newblock {\em LMCS}, 6(1):1--17, 2010.

\bibitem[BBD12]{BBD12}
Bernard Boigelot, Julien Brusten, and Jean{-}Fran{\c{c}}ois Degbomont.
\newblock Automata-based symbolic representations of polyhedra.
\newblock In Adrian{-}Horia Dediu and Carlos Mart{\'{\i}}n{-}Vide, editors,
  {\em Language and Automata Theory and Applications - 6th International
  Conference, {LATA} 2012, {A} Coru{\~{n}}a, Spain, March 5-9, 2012.
  Proceedings}, volume 7183 of {\em Lecture Notes in Computer Science}, pages
  3--20. Springer, 2012.

\bibitem[BBL09]{BBL09}
Bernard Boigelot, Julien Brusten, and J{\'e}r{\^o}me Leroux.
\newblock A generalization of {S}emenov's theorem to automata over real
  numbers.
\newblock In Renate~A. Schmidt, editor, {\em Automated Deduction -- CADE-22}.
  Springer Berlin Heidelberg, 2009.

\bibitem[BDEK07]{BDEK07}
Bernd Becker, Christian Dax, Jochen Eisinger, and Felix Klaedtke.
\newblock Lira: Handling constraints of linear arithmetics over the integers
  and the reals.
\newblock In Werner Damm and Holger Hermanns, editors, {\em Computer Aided
  Verification}. Springer Berlin Heidelberg, 2007.

\bibitem[B{\`{e}}s13]{Bes13}
Alexis B{\`{e}}s.
\newblock Expansions of {MSO} by cardinality relations.
\newblock {\em Logical Methods in Computer Science}, 9(4), 2013.

\bibitem[BFL08]{BFL08}
Florent Bouchy, Alain Finkel, and J{\'e}r{\^o}me Leroux.
\newblock Decomposition of decidable first-order logics over integers and
  reals.
\newblock In {\em 2008 15th International Symposium on Temporal Representation
  and Reasoning}, pages 147--155. IEEE, 2008.

\bibitem[BN88]{BN88}
Hanspeter Bieri and Walter Nef.
\newblock Elementary set operations with d-dimensional polyhedra.
\newblock In {\em Workshop on Computational Geometry}, pages 97--112. Springer,
  1988.

\bibitem[Boi]{LASH}
Bernard Boigelot.
\newblock The {L}iege automata-based symbolic handler ({LASH}).
\newblock available at
  \url{http://www.montefiore.ulg.ac.be/~boigelot/research/lash/}.

\bibitem[BRW98]{BRW1998}
Bernard Boigelot, St{\'e}phane Rassart, and Pierre Wolper.
\newblock On the expressiveness of real and integer arithmetic automata.
\newblock In Kim~G. Larsen, Sven Skyum, and Glynn Winskel, editors, {\em
  Automata, Languages and Programming}, pages 152--163. Springer Berlin
  Heidelberg, 1998.

\bibitem[B{\"u}c62]{Buc62}
J.~Richard B{\"u}chi.
\newblock On a decision method in the restricted second-order arithmetic.
\newblock In {\em Proc. {I}nt. {C}ongress {L}ogic, {M}ethodology and
  {P}hilosophy of {S}cience, {B}erkeley 1960}, pages 1--11. Stanford University
  Press, 1962.

\bibitem[CF10]{ChoFri}
Christian Choffrut and Achille Frigeri.
\newblock Deciding whether the ordering is necessary in a {P}resburger formula.
\newblock {\em DMTCS}, 12(1):20--38, 2010.

\bibitem[Cob69]{cobham}
Alan Cobham.
\newblock On the base-dependence of sets of numbers recognizable by finite
  automata.
\newblock {\em Mathematical systems theory}, 3(2):186--192, 1969.

\bibitem[Con18]{Con18}
Gabriel Conant.
\newblock There are no intermediate structures between the group of integers
  and presburger arithmetic.
\newblock {\em J. Symb. Log.}, 83(1):187--207, 2018.

\bibitem[FQSW20]{FQSW20}
Martin Fr{\"a}nzle, Karin Quaas, Mahsa Shirmohammadi, and James Worrell.
\newblock Effective definability of the reachability relation in timed
  automata.
\newblock {\em Inf. Proc. Letters}, 153:105871, 2020.

\bibitem[FR75]{FR75}
Jeanne Ferrante and Charles Rackoff.
\newblock A decision procedure for the first order theory of real addition with
  order.
\newblock {\em SIAM Journal on Computing}, 4(1):69--76, 1975.

\bibitem[GS66]{GS66}
Seymour Ginsburg and Edwin Spanier.
\newblock Semigroups, {P}resburger formulas, and languages.
\newblock {\em Pacific journal of Mathematics}, 16(2):285--296, 1966.

\bibitem[Hie19]{Hie19}
Philipp Hieronymi.
\newblock When is scalar multiplication decidable?
\newblock {\em Annals of Pure and Applied Logic}, 170(10):1162 -- 1175, 2019.

\bibitem[Ler05]{Ler05}
J{\'e}r{\^o}me Leroux.
\newblock A polynomial time {P}resburger criterion and synthesis for number
  decision diagrams.
\newblock In {\em Proc. LICS'05}, pages 147--156. IEEE, 2005.

\bibitem[Mil01]{Mil01}
Chris Miller.
\newblock Expansions of dense linear orders with the intermediate value
  property.
\newblock {\em The Journal of Symbolic Logic}, 66(4):1783--1790, 2001.

\bibitem[Mil17]{Milchior17}
Arthur Milchior.
\newblock B{\"u}chi automata recognizing sets of reals definable in first-order
  logic with addition and order.
\newblock In {\em International Conference on Theory and Applications of Models
  of Computation}, pages 440--454. Springer, 2017.

\bibitem[Muc03]{Muchnik03}
An.~A. Muchnik.
\newblock The definable criterion for definability in {P}resburger arithmetic
  and its applications.
\newblock {\em Theoretical Computer Science}, 290(3):1433--1444, 2003.

\bibitem[Opp78]{Oppen78}
Derek~C Oppen.
\newblock A $2^{2^{2^{pn}}}$ upper bound on the complexity of {P}resburger
  arithmetic.
\newblock {\em Journal of Computer and System Sciences}, 16(3):323--332, 1978.

\bibitem[Pre27]{Pre29}
Mojesz Presburger.
\newblock {\"U}ber die {V}ollstandigkeit eines gewissen {S}ystems der
  {A}rithmetic ganzer {Z}ahlen, in welchem die {A}ddition als einzige
  {O}peration hervortritt.
\newblock In {\em C.R. du Premier Congr{\`e}s des Math{\'e}maticiens des Pays
  Slaves}, pages 92--101, 395, Warsaw, 1927.

\bibitem[PW00]{PW00}
Fran{\c{c}}oise Point and Frank~O Wagner.
\newblock Essentially periodic ordered groups.
\newblock {\em Annals of Pure and Applied Logic}, 105(1-3):261--291, 2000.

\bibitem[Sem77]{semenov}
Alexei~L. Semenov.
\newblock Presburgerness of predicates regular in two number systems.
\newblock {\em Siberian Mathematical Journal}, 18(2):289--300, Mar 1977.

\bibitem[SSU14]{SSV14}
Alexei~L. Semenov, Sergey Soprunov, and Vladimir~A. Uspensky.
\newblock The lattice of definability. {O}rigins, recent developments, and
  further directions.
\newblock In {\em Computer Science - Theory and Applications - 9th
  International Computer Science Symposium in Russia, {CSR} 2014, Moscow,
  Russia, June 7-11, 2014. Proceedings}, pages 23--38, 2014.

\bibitem[Wei99]{Weis99}
Volker Weispfenning.
\newblock Mixed real-integer linear quantifier elimination.
\newblock In {\em Proceedings of the 1999 International Symposium on Symbolic
  and Algebraic Computation}, ISSAC '99, pages 129--136, New York, NY, USA,
  1999. ACM.

\end{thebibliography}

\end{document}